\title{The Network Propensity Score: \\ Spillovers, Homophily, and Selection into Treatment \thanks{\footnotesize I would like to thank Xu Cheng, Francis J. DiTraglia, Petra Todd, Jere Behrman, Frank Schorfheide, Francis Diebold, Wayne Gao, Karun Adusumilli, Ben Golub, Mathew Jackson, Hyungsik Roger Moon, Bryan Graham, Suyong Song, Andrin Pelican, Luis Candelaria, Aureo de Paula, and Juan Camilo Castillo, for helpful comments and suggestions, as well as seminar participants at the 2021 IAAE Webinar, 2021 IAAE Conference, the Young Economists Symposium 2020, the Econometric Society World Congress 2020, the Warwick 2019 Ph.D. Conference, Emory QTM, Universit\'{e} de Montr\'{e}al, Central European University, Lehigh University, University of Washington, Bates White, University of Exeter, Universidad de los Andes, CEMFI, Bates White, and the UPenn Empirical Micro and Econometrics Workshops. This project was supported by a Penn SAS Dissertation Completion Fellowship. Email: \href{mailto:alejandro.sanchez.becerra@emory.edu}{alejandro.sanchez.becerra@emory.edu}}}
\author[1]{
Alejandro S\'{a}nchez-Becerra, Emory QTM}
\date{ This version: \today \\ First Version: January, 2019 \vspace{-1.5em}} 
\begin{document}

\clearpage
\maketitle
\thispagestyle{empty}

\begin{abstract}
  \singlespacing

I establish primitive conditions for unconfoundedness in a coherent model that features heterogeneous treatment effects, spillovers, selection-on-observables, and network formation. I identify average partial effects under minimal exchangeability conditions. If social interactions are also anonymous, I derive a three-dimensional network propensity score, characterize its support conditions, relate it to recent work on network pseudo-metrics, and study extensions. I propose a two-step semiparametric estimator for a random coefficients model which is consistent and asymptotically normal as the number and size of the networks grows.  I apply my estimator to a political participation intervention Uganda and a microfinance application in India.

  	\bigskip
	\noindent\textbf{Keywords:} Networks, Selection into Treatment, Causal Inference.

	\medskip
\end{abstract}


\linespread{1.3}
\selectfont
\newpage
\section{Introduction}

A popular strategy to identify average treatment effects in quasi-experimental settings is to compare the outcomes of individuals with similar characteristics but different treatment status. For this strategy to be valid, the researcher needs to satisfy a high-level \textit{unconfoundedness} condition --also known as \textit{selection-on-observables}-- which lists a set of observable covariates that delimit comparison subgroups, and a \textit{support}  condition, that guarantees sufficient people to compare in each subgroup. Such conditions are trivially satisfied in experiments with known assignment probabilities, but require further justification in observational settings. For example, researchers can appeal to institutional features of the program assignment rules or prior knowledge of the participants' decision-making process. While this type of strategy has a long history, researchers have traditionally ignored cases with meaningful interference/spillovers, where individual's potential outcomes depend on the treatment status of others, in addition to their own.\footnote{These situations violate the Stable Unit Treatment Value Assumption (SUTVA).} A burgeoning literature has focused on extending these notions of unconfoundedness and support to situations with network spillovers \citep{forastiere2020identification, liu2019doubly,sofrygin2017semi}.\footnote{For example, job placement programs can displace non-participants from the labor market \citep{crepon2013labor}, cash transfers can affect informal insurance networks \citep{meghir2020migration}, and professional events can encourage the adoption of business practices \citep{fafchamps2018networks}.}. In spite of these advances, much less is known about the economic content or plausibility of these assumptions outside experimental settings.

The problem is that spillovers introduce a second layer of selection through the choice of social connections. Individuals may be likely to befriend others with similar willingness to participate in the program. This implies that strategies to identify spillovers by comparing the outcomes of social groups with high and low participation rates may be misleading. Cross-sectional differences could reflect sorting patterns into high- or low-take-up groups, rather than a product of social interactions. This type of phenomenon is called homophily. Resolving this problem by comparing the outcomes of two ``similar'' individuals with differential friend take-up rates is a step in the right direction. However, it is difficult to define which pairs of individuals are actually comparable in a network sense, at least without further assumptions. The problem becomes more acute in real-world social networks, where individuals have non-overlapping sets of friends. Defending institutional/decision-based rationales for unconfoundedness is hard without a global, internally consistent model of collective decision-making due to the many different ways in which the network composition could affect the selection process in either layer.

This paper's main contribution is to propose primitive assumptions for unconfoundedness and support conditions, which suffice to identify causal effects in the presence of spillovers. To do so I propose an internally consistent model with selection-on-observables, bilateral network formation, and potential outcomes with spillovers (via a random coefficients specification). To keep things tractable I focus on binary networks, observed by the researchers, that do not quantity relationship intensity. Within my framework individuals select into treatment and form connections based on a combination of observed characteristics and i.i.d shocks. My non-parametric identification approach is constructive and builds on the notion of a graphon -- a function that can be used to represent a large class of exchangeable network processes. Most importantly, depending on the primitives of the model, the data can exhibit no homophily, homophily with spillovers, or homophily without spillovers. I argue that elementary building blocks can lead to rich selection patterns and produce bias of na\"{i}ve procedures, even without modeling strategic considerations explicitly. 

The strength of my results depends on the generality with which the researcher decides to model spillovers. First, I focus on a class of exchangeable potential outcomes models. I allow for situations where the outcome can depend on the take-up of friends, the take-up of friends of friends, or weighted averages that depend on covariates of friends. This nests a heterogeneous version of the reduced-form linear-in-means \citep{bramoulle2009identification}, interference with rooted networks \citep{auerbach2021local}, and approximate network interference \cite{leung2022causal}. Second, I specialize my results to models that satisfy an anonymous interactions assumption. This condition --which has been extensively analyzed in a number of recent papers \citep{aronowsamii2017estimating,leung2019treatment,spilloversnoncompliance,forastiere2020identification,liu2019doubly,sussman2017elements,tchetgen2017auto}-- states that the potential outcomes spillovers only enter through direct friend connections, equally-weighted. The exchangeable and anonymous interaction models coincide when individuals are fully connected within disjoint clusters.\footnote{In that case the modeling approach is sometimes known as partial interference}. It is important to emphasize that both of these generalizations nests the Stable Unit Treatment Value (SUTVA) assumption.

I then use this framework to establish two key findings. First, the researcher can satisfy the unconfoundedness condition by choosing individual determinants of treatment take-up and relationship choices --but not friends' take-up decisions. This applies to a large class of exchangeable spillover models. Second, for the subset of models with anonymous interactions \citep{aronowsamii2017estimating} there exists a three-dimensional individual statistic --that I call the network propensity score (NPS)-- which can be used as a matching variable. The first component is the individual propensity score \citep{rosenbaum1983central}, the second is a measure of friend take-up rate, and the last is the number of friends. Crucially, the validity of the support condition is easy to verify and can be motivated from the patterns of association in the network. From a structural perspective, the NPS can be expressed as an integrand of the take-up process, friend preferences over traits, and the measure of traits in the population
 
I establish the relationship between the network propensity score and recently proposed network pseudo-metrics \citep{auerbach2022identification,zeleneev2020identification}, and illustrate weak identification issues that arise from applying those approaches to study spillovers. I propose alternative assumptions to deal with unobserved heterogeneity, that encompass other strategies recently used in the literature \citep{spilloversnoncompliance,johnsson2019estimation,imbens2009identification}. I propose a two-step semi-parametric estimator, that is based on inverse-weighting in a random coefficients specification \citep{graham2022semiparametrically,wooldridge1999distribution}. In the asymptotics, I allow for the possibility that a subset of the control variables are unobserved but can be consistently estimated in large networks. My approach is agnostic about intra-network dependence, and hence the rate of convergence of the estimator is going to depend on the total number of groups/networks.
 
I apply my methodology to two empirical examples. First, I consider an intervention designed to increase political participation in Uganda \citep{eubank2019viral,ferrali2020takes}. Citizens voluntarily participated in quarterly information sessions about ways to engage with local district officials. I find evidence of spillovers because individuals with a higher number of friends participating in the sessions were more likely to be politically active, after controlling for covariates. The estimates of the spillover effects under my approach are statistically significant and about twice the size of comparable ordinary least squares (OLS) regressions with additive covariates. The network propensity score matching methodology is better equipped to handle heterogeneous spillover effects that can be correlated with the endogeneous regressors.

In the second example, I analyze the effects of an intervention to increase microfinance adoption \citep{banerjee2013diffusion}. This example has been analyzed extensively by the econometrics literature \citep{candelaria2020semiparametric,chandrasekhar2014tractable} and has lead to many follow-up projects \citep{banerjee2017gossip,breza2019social,chandrasekhar2018social}. The microfinance organization used a non-random selection rule based on occupation of household members (shopkeepers, teachers), who received in-depth information about the loans offered by the company. In practice, households with higher wealth and privileged castes were both more likely to receive treatment themselves and to be friends with others that received treatment as well. My network propensity score matching approach estimates large treatment effects but limited local spillover effects. The results suggest that while network characteristics such as centrality can affect short-term speed of information diffusion \citep{banerjee2013diffusion,akbarpour2018just}, local neighbor interactions may play a smaller role on medium-term loan adoption after accounting for covariates.

Finally this paper considers applications of the network propensity score approach to stratified experiments. I analyze experiments that exogenously assign treatment probabilities across multiple networks \citep{duflo2003role,crepon2013labor,baird,vasquezbare}. I find that the network propensity score has a simple form in both cases under perfect compliance. I also consider settings with non-compliance and spillovers  \citep{spilloversnoncompliance,vasquezbare,imai2020causal}. I discuss the applicability of the network propensity score to identify average spillover effects under  non-compliance in sparse networks.

The paper is organized as follows. Section \ref{litreview} reviews prior literature. Section \ref{model} introduces the model. Section \ref{exchangeability} presents identification under exchangeability. Section \ref{networkpropensityscore} introduces the network propensity score. Section \ref{estimation} proposes feasible estimator and presents the asymptotic results. Section \ref{example} discusses the two empirical examples. Section \ref{conclusion} concludes.

\section{Related Literature}
\label{litreview}

There have been three recent approaches in the literature that extend propensity score methods for use with network data. The first approach uses relationship data and friend covariates to relax the selection on observables assumption. \cite{jackson2020adjusting} assume that program participation is the result of a strategic game with friends (spillovers in treatment), but assume that there are no spillovers on outcomes. The second approach assumes selection on observables (without spillovers) but focuses on pairwise outcomes. \cite{arpino2015implementing}, for example, compute the propensity score of adopting tariff agreements and use it to evaluate their effect on bilateral trade between countries. The third approach, which is closest to my own, incorporates spillovers by assuming \textit{anonymous interactions} \citep{manski2013identification}, which implies heterogeneous outcomes that depend on own treatment and the total number of treated friends. This approach is sometimes called \textit{multi-treatment} matching because it assumes that individuals with different numbers of treated friends experience  different intensities that satisfy unconfoundedness \citep{{forastiere2018estimating, liu2019doubly,sofrygin2017semi}}. In this case a form of generalized propensity score \citep{hirano2004propensity} is computed for each exposure level. Recent work in economics incorporates similar uncounfoundedness assumptions \citep{leung2019treatment,viviano2019policy,ananth2020optimal}. \cite{qu2021efficient} propose an efficient estimator under heterogeneous partial inference. In most of these papers, the network is typically treated as exogenous, and often times variation in the treatment is the main source of identification.

One of the key innovations is to prove unconfoundedness from a general micro-founded setting with exchangeability. \cite{manski2013identification} consider a slightly broader class of social interaction models, but restrict attention to experiments. \cite{qu2021efficient} propose a version of exchangeability and unconfoundedness based on partitions of the sample into disjoint groups of influence. I propose a more general version that accommodates higher-order connections \citep{auerbach2021local,leung2022causal,bramoulle2009identification} or covariate-weighting to characterize peers with more influence. Covariate-weighted versions can be computed with full network data or more cost-effective approaches using aggregate relational data \citep{breza2020using,alidaee2020recovering}. My results are also novel for the anonymous interactions. A wide literature, e.g. \cite{forastiere2018estimating}, compute a version of generalized propensity score for network data by predicting each possible category of the endogenous variable. The rationale for unconfoundedness is often based on implicit arguments for network homophily. However, I show that if network formation/selection arguments are being used to select the covariates believed to satisfy unconfoundedness, then the dimensionality of the generalized propensity scores can be theoretically be brought down to three (the network propensity score).

To my knowledge, this is the first paper to combine an internally consistent non-parametric network model with selection-on-observables to justify unconfoundedness. Previously, \cite{goldsmith2013social} studied a parametric network model and use it to identify causal effects. \cite{johnsson2019estimation} extended this idea to a class of network models that satisfy a monotonicity restriction similar to \cite{gao2019nonparametric}. Both papers restrict attention to cases with an exogenous treatment. However, identification of causal effects is possible for a broader class of network models. In two influential papers \cite{aldous1981representations} and \cite{hoover1979relations} showed that \textit{any} network process whose distribution is ex-ante independent of the ordering of agents can be represented as a dyadic network with independent covariates and independent shocks. Recent work has also shown how to micro-found the dyadic model from dynamic games \citep{mele2017structural}, and how to  test the empirical validity of dyadic models with additive fixed effects \citep{pelican2019testing}.

The key empirical challenge is whether the covariates of the Aldous-Hoover representation are actually observed or whether some of them may be latent. Imposing monotonicity as in \citep{graham2017econometric, gao2019nonparametric, johnsson2019estimation} is one way to recover latent heterogeneity.  Another recent literature \citep{auerbach2019identification,zeleneev2020identification} focuses on a network pseudo-metric to analyze more general forms of unobserved heterogeneity. \cite{auerbach2019identification}, however, argues that this form of heterogeneity cannot be separately identified from spillovers in dense networks with exogenous treatment. I show why this concern carries over the case with selection-on-observables, by establishing the relationship between the network propensity score and the network pseudo-metric. Another route to recover unobserved heterogeneity is to explore group patterns in treatment decisions. \cite{spilloversnoncompliance} recover control variables in experiments with spillovers and one-sided non-compliance. I discuss conditions that allow for this type of control variables. These conditions could also nest related matching approaches that exploit exponential family forms \citep{arkhangelsky2018role}.


\section{Model}
\label{model}

I assume that there are $g = \{1,\ldots,G\}$ disjoint groups that contain $i = \{1,\ldots,N_g\}$ individuals each. We can interpret $g$ as the identifier for a school, village or city. Treatment status is denoted by a binary variable $D_{ig}$ that equals one if individual $\{ig\}$ is treated and zero if she is not. Each individual has a vector of socio-economic covariates $C_{ig}$, which can be stacked in a matrix $C_g$. A social network is denoted by an $N_g \times N_g$ adjacency matrix $A_g$ with binary entries. Each entry $A_{ijg}$ equals one if individuals $\{ig\}$ and $\{jg\}$ are friends and zero otherwise, using the convention that $A_{iig} = 0$. I define two additional measures: the total number of $\{ig\}'s$ friends $L_{ig} \equiv \sum_{j=1}^{N_g} A_{ijg}$ and the total number of $\{ig\}'s$ \textit{treated} friends by $T_{ig} \equiv \sum_{j=1}^{N_g}A_{ijg}D_{jg}$. The variables $L_{ig}$ and $T_{ig}$ are meant to capture peer influence in $\{ig\}'s$ immediate friend circle.

I analyze a model where a scalar outcome $Y_{ig}$ is determined by\footnote{In Appendix \ref{secappendixidentification} I show how to extend non-parametric identification results to models of the form $Y_{ig}=m(X_{ig},\tau_{ig})$, where $m(\cdot)$ is an arbitrary function, possibly unknown.}
\begin{equation}
Y_{ig} = X_{ig}'\tau_{ig},
\label{eq:linearoutcome_matrix}
\end{equation}
where $\tau_{ig} \equiv (\alpha_{ig},\beta_{ig},\gamma_{ig}',\delta_{ig}')' \in \mathbb{R}^{2+2k}$ is a vector of real-valued random coefficients, and $X_{ig}$ is vector of endogenous regressors defined by
$$ X_{ig}' \equiv \begin{bmatrix} 1 & D_{ig} & \varphi(i,A_g,C_g,N_g) & D_{ig}\times \varphi(i,A_g,C_g,N_g) \end{bmatrix} $$
Here, $\varphi:\mathbb{Z}_{+}^2 \to \mathbb{R}^k$ is a \textit{known} function which reflects the researcher's beliefs about how the treatment of others affects unit $\{ig\}$.\footnote{The function $\varphi(\cdot)$ could accommodate non-linearities by having different basis functions.} In the most general form, it can depend on $\{ig\}$'s position within network $A_g$, a vector of treatment of indicators $D_g$ for other people in the group, and a matrix of covariates $C_g$. It is common to restrict attention to functions that depend the treatment status of immediate neighbors. For example, if $\varphi(i,A_g,C_g,N_g) = \sum_{j=1}^{N_g}A_{ijg}D_{jg}/\sum_{j=1}^{N_g} A_{ijg} = T_{ig}/L_{ig}$, then the outcome is only determined by own treatment and the proportion of treated neighbors, and the outcome equation simplifies to
\begin{equation}
Y_{ig} = \alpha_{ig} + D_{ig}\beta_{ig} + (T_{ig}/L_{ig})\gamma_{ig} +  D_{ig} \times (T_{ig}/L_{ig}) \delta_{ig}.
\label{eq:linearoutcome}
\end{equation}

 I am interested in identifying the average partial effects for a target population $\mathcal{F}$, defined as
\begin{equation}
\tau \equiv (\alpha,\beta,\gamma,\delta) = \mathbb{E}[\tau_{ig} \mid \mathcal{F}].
\label{eq:averagepartialeffect}
\end{equation}
The average partial effects vector $\tau$ integrates the coefficients in \eqref{eq:linearoutcome}. The conditioning $\mathcal{F}$ is important to emphasize that the average is computed for a specific subpopulation (men or women, old or young, etc.). When the conditioning set is empty, i.e $\mathcal{F} = \emptyset$, the average is computed for the entire population.

\begin{rem}[No spillovers]
The potential outcomes model \citep{fisher1960design,rubin1980randomization} that is routinely used in program evaluation is a special case of \eqref{eq:linearoutcome}. In that case we set $\gamma_{ig} = \delta_{ig} = 0$ and define individual-specific outcomes by treatment status as $Y_{ig}(0) \equiv \alpha_{ig}$ and $Y_{ig}(1) \equiv \alpha_{ig} + \beta_{ig}$. The average treatment effect is defined as $\beta = \mathbb{E}[\beta_{ig} \mid \mathcal{F}] = \mathbb{E}[Y_{ig}(1) - Y_{ig}(0) \mid \mathcal{F}]$. Heterogeneity of $\beta_{ig}$ is important to capture varying responses to treatment. Researchers are often interested in testing $\beta = 0$, the null hypothesis that the treatment has no effect on average. If $\beta > 0$ then the treatment has a positive effect on the population of interest, and a negative effect if $\beta < 0$.
\end{rem}

\begin{rem}[Interpreting direct and indirect effects]The more interesting case is when $\gamma_{ig}$ and $\delta_{ig}$ are not zero. For simplicity assume that $\varphi(t,l) = t/l$ and $l > 0$, which implies that the model in \eqref{eq:linearoutcome} is a linear function of own treatment, the fraction of treated friends, and an interaction. We can define the potential outcomes as $Y_{ig}(0,t,l) \equiv \alpha_{ig} + \gamma_{ig}\times(t/l)$ and $Y_{ig}(1,t,l) \equiv \alpha_{ig} +\beta_{ig} + (\gamma_{ig} +\delta_{ig})\times(t/l)$. The \textit{direct} average treatment effect is equal to $\mathbb{E}[Y_{ig}(1,t,l) - Y_{ig}(0,t,l) \mid \mathcal{F}] = \beta + \delta\times (t/l)$. In contrast to the Fisher-Rubin model, the magnitude of the treatment effect depends on how many friends are treated. For example, if $\delta > 0$ then having more treated friends widens the gap between the treated and control. In addition to the ATE we can compute the spillover effect for control individuals $\mathbb{E}[Y_{ig}(0,t,l) -Y_{ig}(0,0,l) \mid \mathcal{F}] = \gamma \times t/l$. If $\gamma > 0$ then control individuals have better outcomes when some of their friends are treated even if they are not participating in the treatment directly. Modeling heterogeneity of $(\gamma_{ig},\delta_{ig})$ is important to capture the fact that not everyone is equally susceptible to peer influence.
\end{rem}

\section{Identification under Exchangeability}
\label{exchangeability}

The main barrier to identifying the average partial effect is that $\tau_{ig}$ and $X_{ig}$ might be correlated. To address this problem, I propose a control variable $V_{ig}$ that captures the main determinants of treatment and network formation. I assume that $V_{ig}$ satisfies the unconfoundedness condition $\tau_{ig} \ \indep  \ X_{ig} \mid V_{ig}$ and that $\mathcal{F}$ is $V_{ig}-$ measurable. For example, $\mathcal{F}$ could include gender and $V_{ig}$ could include a finer set of variables such as gender, age and wealth. I establish primitive assumptions on the network and treatment processes that justify these conditions in the next section. Under unconfoundedness \citep{wooldridge2003further,graham2022semiparametrically,wooldridge1999distribution}, we can identify average partial effects, as follows
\begin{thm}[Average Partial Effects]
	Suppose that (i) $Y_{ig} = X_{ig}'\tau_{ig}$, (ii) $X_{ig} \ \indep \ \tau_{ig} \mid V_{ig}$, (iii) $\mathcal{F}$ is $V_{ig}-$measurable and $\mathbf{Q}_{xx}(v) = \mathbb{E}[X_{ig}X_{ig}' \mid V_{ig} = v]$ is invertible almost surely over the support of $V_{ig} \mid \mathcal{F}$ . Then $\tau$ defined in \eqref{eq:averagepartialeffect} is equal to $\mathbb{E}[\mathbf{Q}_{xx}(V_{ig})^{-1}X_{ig}Y_{ig} \mid \mathcal{F}]$.
	\label{lem:randomcoef_identification}
\end{thm}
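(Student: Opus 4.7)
The plan is to verify the identity by starting from the right-hand side, substituting the outcome equation, and then applying the tower property with the conditional independence assumption to collapse the weighted expression into $\mathbb{E}[\tau_{ig} \mid \mathcal{F}]$. Since the argument is a sequence of routine manipulations, the only thing to be careful about is keeping track of what is $V_{ig}$-measurable and justifying the factorization of conditional expectations when $X_{ig} X_{ig}'$ is a matrix and $\tau_{ig}$ is a vector.

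First I would substitute $Y_{ig} = X_{ig}' \tau_{ig}$ into the right-hand side to obtain
\begin{equation*}
\mathbb{E}\bigl[\mathbf{Q}_{xx}(V_{ig})^{-1} X_{ig} X_{ig}' \tau_{ig} \,\big|\, \mathcal{F}\bigr].
\end{equation*}
Next I would apply the tower property by conditioning on $V_{ig}$ (this is valid because $\mathcal{F}$ is $V_{ig}$-measurable, so $\sigma(\mathcal{F}) \subseteq \sigma(V_{ig})$). The factor $\mathbf{Q}_{xx}(V_{ig})^{-1}$ is $V_{ig}$-measurable, so it pulls out of the inner conditional expectation, leaving $\mathbf{Q}_{xx}(V_{ig})^{-1} \mathbb{E}[X_{ig} X_{ig}' \tau_{ig} \mid V_{ig}]$.

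The key step is then to invoke unconfoundedness $X_{ig} \indep \tau_{ig} \mid V_{ig}$ (assumption (ii)) to factor the inner expectation entrywise as
\begin{equation*}
\mathbb{E}[X_{ig} X_{ig}' \tau_{ig} \mid V_{ig}] = \mathbb{E}[X_{ig} X_{ig}' \mid V_{ig}] \, \mathbb{E}[\tau_{ig} \mid V_{ig}] = \mathbf{Q}_{xx}(V_{ig}) \, \mathbb{E}[\tau_{ig} \mid V_{ig}].
\end{equation*}
Here I would briefly note that the factorization works componentwise: each scalar entry of the matrix $X_{ig} X_{ig}'$ is a function of $X_{ig}$ alone, each entry of $\tau_{ig}$ is a function of $\tau_{ig}$ alone, and the usual conditional independence factorization applies to each scalar product, which then reassembles into the stated matrix-vector identity.

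Invertibility of $\mathbf{Q}_{xx}(V_{ig})$ almost surely over the support of $V_{ig} \mid \mathcal{F}$ (assumption (iii)) lets the product $\mathbf{Q}_{xx}(V_{ig})^{-1} \mathbf{Q}_{xx}(V_{ig})$ collapse to the identity $V_{ig}$-a.s., so the expression reduces to $\mathbb{E}[\mathbb{E}[\tau_{ig} \mid V_{ig}] \mid \mathcal{F}]$. A final application of the tower property, again using that $\mathcal{F}$ is $V_{ig}$-measurable, gives $\mathbb{E}[\tau_{ig} \mid \mathcal{F}] = \tau$, which completes the proof. The only obstacle worth flagging is really bookkeeping: making explicit that the invertibility only needs to hold on the relevant support and that the matrix-vector factorization is legitimate under the stated conditional independence; neither presents genuine difficulty.
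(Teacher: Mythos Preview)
Your proposal is correct and follows essentially the same approach as the paper's proof: both substitute the outcome equation, use the conditional independence $X_{ig} \indep \tau_{ig} \mid V_{ig}$ to factor $\mathbb{E}[X_{ig}X_{ig}'\tau_{ig} \mid V_{ig}]$ into $\mathbf{Q}_{xx}(V_{ig})\,\mathbb{E}[\tau_{ig} \mid V_{ig}]$, invoke invertibility to cancel, and then apply the tower property with $\mathcal{F} \subseteq \sigma(V_{ig})$. The paper introduces the intermediate notation $\mathbf{Q}_{xy}(v)$ and $\tau(v) \equiv \mathbb{E}[\tau_{ig} \mid V_{ig}=v]$ before integrating, whereas you work directly with the nested conditional expectations, but the logic is identical.
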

Intuitively, Theorem \ref{lem:randomcoef_identification} states that researchers can identify average partial effects by comparing the outcomes of individuals with similar values of $V_{ig}$ but different realizations of $X_{ig}$. As shown by \cite{graham2022semiparametrically}, this estimand is equivalent to computing an OLS coefficient for each subset $\{V_{ig} = v \}$ and averaging the results. 

 However, finding a $V_{ig}$ that meets these properties in the network setting is challenging, at least without further assumptions on the outcome, treatment, and network processes. I will propose a strategy to construct $V_{ig}$ that depends on weak notions of exchangeability. Broadly, exchangeability refers to the idea that the distribution of variables for a set of individuals is ex-ante identical, and that the distribution is invariant to relabeling of the observations in the network.

\subsection{Sufficient Conditions}

I assume that the outcome model is determined by spillovers that are exchangeable in the identities of the individuals. Formally, let $\Pi_{ij}$ be an $N_g \times N_g$ rotation matrix. For an $n \times 1$ vector $x$, the matrix is designed in such a way that $\Pi_{ij}x$ exchanges the order of the $i^{th}$ and $j^{th}$ rows. Under this definition, the matrix $\Pi_{ij}A_g\Pi_{ij}'$ is an adjacency matrix that exchanges the neighbor connections of $\{ig\}$ and $\{jg\}$. Similarly, $\Pi_{ij}C_g$ exchanges the order of covariates. I assume that social interactions are exchangeable if the following assumption holds.

\begin{assumptxt}[Exchangeable interactions]
	\label{assumptxt:exchangeableinteractions}For all $i,j \in \{1,\ldots,N_g\}$, $$\varphi(i,A_g,D_g,C_g,N_g) = \varphi(j, \underbrace{\ \Pi_{ij}A_g\Pi_{ij}'}_{\substack{\text{Reordered} \\\text{Network}}},\ \underbrace{\Pi_{ij}D_g}_{\substack{\text{Reordered} \\\text{Treatment}}}, \ \underbrace{\Pi_{ij}C_g}_{\substack{\text{Reordered} \\\text{Covariates}}}, \ N_g).$$	
\end{assumptxt}
\nameref{assumptxt:exchangeableinteractions} states that if $\{jg\}$ had all of the same connections as $\{ig\}$, then they would have the same value of the exposure function. Therefore, the only thing that matters is the network structure and $\{ig\}$'s relative position within the network: not any particular labeling of the dataset. This assumption is clearly satisfied if $\varphi(i,A_g,C_g,N_g)$ is a function of the total number of treated friends, giving everyone equal weight. It can also holds in more complex cases where some neighbors are be more influential than others.

\begin{rem}[Covariate-weighting]
	The exposure mapping could include differential weights for each neighbor depending on their covariates.  For instance, it allows for exposures of the form $\varphi(i,A_g,D_g,C_g,N_g) = \sum_{j \ne i} A_{jg}C_{jg}D_{jg}$, where some neighbors are given more weight. This type of sum still produces the same value even if we exchange the order of the $j$ indexes. \cite{qu2021efficient} consider a setting where $A_{ijg} = 1$ for all $i \ne j$ (partial interference) and impose a version of exchangeability of $D_g$ withing each cluster that they call ``conditional exchangeability''. \nameref{assumptxt:exchangeableinteractions} is more general than that because it allows for sparse networks with $A_{ijg} \in \{0,1\}$, and more general types of covariate-weighting.
\end{rem}	

\begin{rem}[Rooted Networks]
	\cite{auerbach2021local} propose a spillovers model where the outcome depends on the local network structure. A ``rooted'' network is the subnetwork that is generated by starting in node $\{ig\}$ and constructing first, second, and multi-order connections. In this way, spillovers depend on the local composition of treated individuals, up to permutations of the local network structure. Under suitable restrictions, this type of model is exchangeable.
\end{rem}

\begin{rem}[Social Interactions Models]
	In a typical social interactions model, $Y_{ig} = \gamma_0 + \gamma_1 \sum_{j \ne i}A_{jg} + \gamma_2D_{ig} + \gamma_2 \sum_{j \ne i}A_{jg}D_{jg} + \varepsilon_{ig}$, where $\gamma = (\gamma_0,\gamma_1,\gamma_2,\gamma_3)$ is a vector of parameters and $\varepsilon_{ig}$ is an error term. \cite{bramoulle2009identification} show $Y_{ig}$ can be written in reduced-form as a weighted average of $D_g$, where the weights depend on $\gamma$ and a function of the network connections. In vector form, this is $Y_g = \tfrac{\gamma_0}{1-\gamma_1}1_{N_g \times 1} + \gamma_2 D_g + (\gamma_1\gamma_2 + \gamma_3)\sum_{k=0}^{\infty} \gamma_1^kA_g^{k+1}D_g +\sum_{k=0}^{\infty} \gamma_1^kA_g^{k}\varepsilon_g$, where $1_{N_g \times 1}$ is a vector of ones of length $N_g$. By construction, this model is exchangeable. While the standard version assumes homogeneous coefficients, this could be extended to version where the reduced-form has heterogeneous coefficients and/or truncates the order of the sum. Relaxations under approximate network interference have also been considered by \cite{leung2022causal}.	
\end{rem}

I assume that the researcher has auxiliary covariates that explain $\{ig\}'s$ participation in the treatment and choice of friends. As discussed before, let  $C_{ig} \in \mathbb{R}^{d_c}$ be a vector of individual characteristics that are sampled at random from a super-population and define $\Psi_g^* \in \mathbb{R}^{d_\Psi}$ to be a vector of group characteristics. I next describe assumptions on the core structure that provide guidance on the choice of $V_{ig}$.
\begin{assumptxt}[Sampling Exchangeability]\text{ }
	\begin{enumerate}[(i)]
		\setlength\itemsep{-0.3em}
		\item $\quad$(Across Groups) \ $\{\tau_{ig},D_{ig},C_{ig}\}_{i=1}^{N_g}$, $\Psi_g^*$ are i.i.d. across groups.
		\item $\quad$(Within Groups) \ $\{\tau_{ig},D_{ig},C_{ig}\}$ are i.i.d. within group given $\Psi_g^*$.
	\end{enumerate}
	\label{assumptxt:randomsampling}
\end{assumptxt}
The first part of \nameref{assumptxt:randomsampling} --stating that groups are i.i.d-- is plausible when the groups are spatially, economically or socially separated. The second part states that the covariates within a group are conditionally independent within groups, which is a common assumption in the literature on network formation \citep{johnsson2019estimation,graham2017econometric,auerbach2019identification}.
\begin{assumptxt}[Selection on Observables]
	$\quad \tau_{ig} \ \indep \ D_{ig} \mid C_{ig},\Psi_g^*$.
	\label{assumptxt:confounders}
\end{assumptxt}
\bigskip

The \nameref{assumptxt:confounders} assumption states that the treatment status is independent of the treatment effects, after controlling for baseline characteristics. It puts the burden on researchers to identify relevant confounding variables (such as gender, income or age) that are motivated by either theory or practice. For example, the confounders can emerge from well-defined institutional rules that constrain the assignment of slots to treatment or the stratifying variables in experiments with perfect compliance. \nameref{assumptxt:confounders} is the same assumption discussed by \citet{rosenbaum1983central}, which justifies propensity score analysis.

\begin{assumptxt}[Dyadic Network]\text{ } Suppose that there exists an unobserved vector of pair-specific shocks $\{U_{ijg}\}_{i,j=1}^{N_g} \in \mathbb{R}^{N_g}$ for $g = 1,\ldots, G$ and an \textit{unknown} link function $\mathcal{L}:\mathbb{R}^{k_c}\times \mathbb{R}^{k_c}\times \mathbb{R}^{k_\Psi} \times \mathbb{R} \to \{0,1\}$ such that 
	\begin{enumerate}[(i)]
		\setlength\itemsep{-0.3em}
		\item \quad (Pairwise Links) $A_{ijg} = \mathcal{L}(C_{ig},C_{jg},\Psi_g^*,U_{ijg})$.
		\item \quad (Shocks) $U_{ijg}$ are i.i.d. and mutually independent of $\{\tau_{ig},D_{ig},C_{ig}\}_{i=1}^{N_g}$ given $\Psi_g^*$.
	\end{enumerate}
	\label{assumptxt:dyadicnetwork}
\end{assumptxt}

The \nameref{assumptxt:dyadicnetwork} assumption states that friendships between pairs of individuals $\{ig\}$ and $\{jg\}$ depend on their observed characteristics $(C_{ig},C_{jg})$, a group component $\Psi_g^*$ and a pair-specific shock $U_{ijg}$. For example, let $\Vert c - c^* \Vert$ be the Euclidean distance between two sets of covariates $(c,c^*)$. In a random geometric graph,  $\mathcal{L}(c,c,\Psi^*,u) = \mathbbm{1}\{ \Vert c - c^* \Vert \le u \}$, which implies that individuals are more likely to be friends if their characteristics are similar. In economics, dyadic networks have been used to analyze risk sharing agreements, political alliances and business partnerships \citep{graham2017econometric, fafchamps2007risk,attanasio2012risk,lai2000democracy, fafchamps2018networks}. The function $\mathcal{L}$ can be interpreted as a decision rule that encodes preferences over friends, as a random meeting process that brings two people together \citep{mele2017structural}, or a combination of both.

Dyadic networks can also be motivated as reduced form objects by appealing to exchangeability. In two influential papers, \cite{aldous1981representations} and \cite{hoover1979relations} showed that any network whose distribution is invariant to the ordering of the sample (exchangeability) can be represented as a dyadic network, where some of the components of $C_{ig}$ are possibly unobserved. From a practical point of view, the \nameref{assumptxt:dyadicnetwork} assumption states that the relevant determinants are indeed observed by the researcher. Therefore it can be interpreted as a network analog of the \nameref{assumptxt:confounders} assumption.

\subsection{Control Variable Results}

I show that $V_{ig} = (C_{ig},\Psi_g^*)$ satisfies the key unconfoundedness condition of Theorem \ref{lem:randomcoef_identification} and can be used as a matching variable to compute the average partial effect.

\begin{thm}[Direct Confounders]
	\label{thm:fullcontrolfunction}	
	Suppose that $Y_{ig}$ is generated by \eqref{eq:linearoutcome}. If \nameref{assumptxt:randomsampling}, \nameref{assumptxt:confounders} and \nameref{assumptxt:dyadicnetwork} hold,  then (i) $(X_{ig},L_{ig}) \ \indep \ \tau_{ig} \mid C_{ig},\Psi_g^*$, and (ii) $\mathbb{P}(X_{ig} \le x, L_{ig} \le \ell, Y_{ig} \le y \mid C_{ig},\Psi_g^*)$ does not depend on $\{ig\}$.
\end{thm}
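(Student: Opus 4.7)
The plan is to verify parts (i) and (ii) separately, in both cases pushing the heavy lifting onto the exchangeability assumptions. Throughout I will write $Z_k := (\tau_{kg}, D_{kg}, C_{kg})$ for the individual triple, $W_i := (\{Z_k\}_{k\ne i}, \{U_{k\ell g}\}_{k,\ell=1}^{N_g})$ for everything that is \emph{not} $Z_i$, and note at the outset that $A_g$ is a measurable function of $(\{C_{kg}\}, \Psi_g^*, \{U_{k\ell g}\})$ via $\mathcal{L}$, so $(X_{ig}, L_{ig})$ is in turn a measurable function of $(D_{ig}, C_{ig}, \Psi_g^*, W_i)$.

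For part (i), the strategy is a three-step independence chain. First, \nameref{assumptxt:randomsampling}(ii) makes $\{Z_k\}_{k=1}^{N_g}$ i.i.d.\ given $\Psi_g^*$, so $\{Z_k\}_{k\ne i} \indep Z_i \mid \Psi_g^*$; combining this with \nameref{assumptxt:dyadicnetwork}(ii), which makes $\{U_{k\ell g}\}$ independent of the full collection $\{Z_k\}$ given $\Psi_g^*$, yields $W_i \indep Z_i \mid \Psi_g^*$, i.e., $W_i \indep (\tau_{ig}, D_{ig}, C_{ig}) \mid \Psi_g^*$. A standard Bayes calculation then upgrades this to $W_i \indep (\tau_{ig}, D_{ig}) \mid C_{ig}, \Psi_g^*$. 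Second, \nameref{assumptxt:confounders} gives $\tau_{ig} \indep D_{ig} \mid C_{ig}, \Psi_g^*$. Combining these by factoring the conditional joint density of $(\tau_{ig}, D_{ig}, W_i)$ given $(C_{ig}, \Psi_g^*)$ delivers $\tau_{ig} \indep (D_{ig}, W_i) \mid C_{ig}, \Psi_g^*$. Since $(X_{ig}, L_{ig})$ is a measurable function of $(D_{ig}, C_{ig}, \Psi_g^*, W_i)$ and the conditioning fixes $(C_{ig}, \Psi_g^*)$, part (i) follows.

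For part (ii), the idea is to transport the joint distribution at index $j$ back to index $i$ via the transposition $\pi = (ij)$. Define the permuted primitives $\tilde Z_k := Z_{\pi(k)}$ and $\tilde U_{k\ell g} := U_{\pi(k)\pi(\ell) g}$. By \nameref{assumptxt:randomsampling}(ii) the $Z_k$ are exchangeable given $\Psi_g^*$, and by \nameref{assumptxt:dyadicnetwork}(ii) the $U_{k\ell g}$ are exchangeable in their indices given $\Psi_g^*$, so $(\{\tilde Z_k\}, \{\tilde U_{k\ell g}\}) \mid \Psi_g^*$ is distributed identically to $(\{Z_k\}, \{U_{k\ell g}\}) \mid \Psi_g^*$. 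Under the permutation the induced adjacency matrix satisfies $\tilde A_g = \Pi_{ij} A_g \Pi_{ij}'$ via $\mathcal{L}$, and similarly $\tilde D_g = \Pi_{ij} D_g$, $\tilde C_g = \Pi_{ij} C_g$. Applying \nameref{assumptxt:exchangeableinteractions} then shows $\tilde\varphi_{jg} = \varphi(i, A_g, D_g, C_g, N_g)$, and direct row-sum computations give $\tilde L_{jg} = L_{ig}$; hence $\tilde X_{jg} = X_{ig}$, $\tilde Y_{jg} = X_{ig}' \tau_{ig} = Y_{ig}$, and $\tilde C_{jg} = C_{ig}$. Writing the joint cdf at $j$ in terms of the permuted primitives and then replacing them with the originals by the distributional identity produces the desired invariance of $\mathbb{P}(X_{ig} \le x, L_{ig} \le \ell, Y_{ig} \le y \mid C_{ig}, \Psi_g^*)$ in $i$.

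The main obstacle I anticipate is bookkeeping rather than conceptual: in part (i) one must carefully combine the within-group i.i.d.\ structure, the independence of $U$ from the $Z$'s, and \nameref{assumptxt:confounders} to land on a \emph{joint} independence of $\tau_{ig}$ from $(D_{ig}, W_i)$ given $(C_{ig}, \Psi_g^*)$, since each assumption only delivers a marginal piece. In part (ii), the delicate point is turning a statement about unconditional distributional equality under permutation into one about conditional distributions given $C_{ig} = c$; for continuous $C$ this requires either a regular-conditional-probability argument or passing through joint cdfs as above, but once the invariance at the joint level is established the conditional statement follows from the exchangeability of the $C_{kg}$ themselves.
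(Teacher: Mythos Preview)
Your proposal is correct and follows essentially the same route as the paper's proof: for part (i) both arguments establish $W_i \indep Z_i \mid \Psi_g^*$ from \nameref{assumptxt:randomsampling} and \nameref{assumptxt:dyadicnetwork}, upgrade to conditioning on $(C_{ig},\Psi_g^*)$ via weak union, and then fold in \nameref{assumptxt:confounders} via contraction (the paper introduces an auxiliary representation $D_{ig}=\mathcal{H}(C_{ig},\Psi_g^*,\eta_{ig})$ and names the graphoid axioms explicitly, whereas you work directly with $D_{ig}$ and speak of ``factoring the density,'' but the logic is identical). For part (ii) both proofs apply the transposition $(ij)$ to the primitives, invoke the i.i.d.\ structure to get distributional invariance of $(\Pi_{ij}A_g\Pi_{ij}',\Pi_{ij}D_g,\Pi_{ij}C_g,\Pi_{ij}\tau_g)$ given $\Psi_g^*$, and use \nameref{assumptxt:exchangeableinteractions} to identify the permuted exposure with the original, so your argument matches the paper's.
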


Theorem \ref{thm:fullcontrolfunction} suggests that the researcher should include all of $\{ig\}$'s covariates that she considers relevant for treatment participation and network formation in $V_{ig}$. However, if the assumptions of \ref{thm:fullcontrolfunction} hold, then there is no need to control for the covariates of others. The variables $(C_{ig},\Psi_g^*)$ control for $\{ig\}'s$ friend preferences, and hence all the residual variation in $X_{ig}$ is exogenous. In practice, observing $(C_{ig},\Psi_g^*)$ may be a strong requirement and I propose some ways to relax the assumptions in Section \ref{weakcontrols}.

Intuitively, \nameref{assumptxt:randomsampling} and  \nameref{assumptxt:dyadicnetwork} imply that $(C_{ig},\Psi_g^*)$ controls for others' treatment whereas \nameref{assumptxt:confounders} ensures that it controls for own selection. That means that $\{D_{jg},C_{jg},N_g\}_{j \ne i},\{A_{jj'g}\}_{j \ne j'} \indep \tau_{ig} \mid C_{ig},\Psi_g^*$. Property (i) holds because $X_{ig}$ is a function of the left-hand side terms. Property (ii) by \nameref{assumptxt:randomsampling} and because \nameref{assumptxt:exchangeableinteractions} ensures that function $\varphi(\cdot)$ is exchangeable. Property (ii) is particularly important because it shows that if $(C_{ig},\Psi_g^*)$ is observed, then we can identify the conditional distribution of $(Y_{ig},X_{ig})$ by pooling different observations. If exchangeability did not hold, this probability would be $\{ig\}$--specific and the quantity $\mathbf{Q}_{xx}(v)$ in Theorem \ref{lem:randomcoef_identification} would not be well-defined.

\section{The Network Propensity Score}
\label{networkpropensityscore}

Most of the recent literature has focused on a particular type of exchangeable interactions, known as anonymous interference \citep{aronowsamii2017estimating,leung2019treatment,spilloversnoncompliance,forastiere2020identification,liu2019doubly,sussman2017elements,tchetgen2017auto}. This assumes that individual $\{ig\}$ is only affected by the total number (or proportion) of treated friends. In this case neighbors are given equal weight, regardless of their characteristics, and no-weight is put on higher connections. This type of assumption is reasonable in cases where the spillovers are local and there is no reason to believe that one particular friend has more influence than others.

With a slight abuse of notation, let $\varphi(t,\ell)$ be an exposure function that depends on the total number of treated friends $t$ and the total number of friends $\ell$,
\begin{assumptxt}[Anonymous interference]
	\label{assumptxt:anonymousinteractions}For all $i\in \{1,\ldots,N_g\}$, $$\varphi(i,A_g,D_g,C_g,N_g) = \varphi\Bigg(\underbrace{\sum_{j \ne i}A_{jg}D_{jg}}_{T_{ig}},\underbrace{\sum_{j \ne i}A_{jg}}_{L_{ig}}\Bigg).$$	
\end{assumptxt}
By design, \nameref{assumptxt:anonymousinteractions} is a special case of \nameref{assumptxt:exchangeableinteractions}. This added structure will lead to a lower-dimensional control variable. Define the propensity score and the friend propensity score, respectively as
\begin{align*}
p_{dig} &\equiv  \mathbb{P}(D_{ig} = 1 \mid C_{ig},\Psi_g^*),	\\
p_{fig} &\equiv  \mathbb{P}(D_{jg} = 1 \mid C_{ig},\Psi_g^*,A_{ijg} =1).
\end{align*}
The scalar $p_{dig}$ is the probability of treatment given individual characteristics, whereas $p_{fig}$ is the probability that a potential friend is treated. The \nameref{assumptxt:randomsampling} assumption ensures that every friend is ex-ante identical and hence the probability does not depend on the subscript $\{jg\}$. I call the three dimensional vector $P_{ig} = (p_{dig},p_{fig},L_{ig})$ the \textit{network propensity score}. Before presenting the general results I focus on a special case where $\tau$ has a closed form expression. The following result in Theorem \ref{thm:closedformtau} is a special case of Theorem \ref{lem:randomcoef_identification}, by setting $V_{ig} = (C_{ig},\Psi_g^*,L_{ig})$ and imposing a particular set of basis functions.

\begin{thm}[Closed form $\tau$]
	\quad  If $\varphi(t,l) = t/l$, $\mathcal{F} = \mathbbm{1}\{L_{ig} > 0\}$, $\mathbf{Q}_{xx}(V_{ig})$ is almost surely full rank and \nameref{assumptxt:randomsampling}, \nameref{assumptxt:confounders} and \nameref{assumptxt:dyadicnetwork} hold, then the average partial effects equal
	\begin{enumerate}[(i)]
		\item \quad $\alpha = \mathbb{E}\left[\left( 1 - \frac{T_{ig}-L_{ig}p_{fig}}{1-p_{fig}} \right)\left( \frac{(1-D_{ig})Y_{ig}}{1-p_{dig}}\right) \mid \mathcal{F} \right]$,
		\item \quad $\beta = \mathbb{E}\left[\left( 1 - \frac{T_{ig}-L_{ig}p_{fig}}{1-p_{fig}} \right)\left( \frac{D_{ig}Y_{ig}}{p_{dig}} - \frac{(1-D_{ig})Y_{ig}}{1-p_{dig}}\right) \mid \mathcal{F} \right]$,
		\item \quad $\gamma = \mathbb{E}\left[\left( \frac{T_{ig}-L_{ig}p_{fig}}{p_{fig}(1-p_{fig})} \right)\left(  \frac{(1-D_{ig})Y_{ig}}{1-p_{dig}}\right) \mid \mathcal{F} \right]$,
		\item \quad $\delta = \mathbb{E}\left[\left( \frac{T_{ig}-L_{ig}p_{fig}}{p_{fig}(1-p_{fig})} \right)\left( \frac{D_{ig}Y_{ig}}{p_{dig}} - \frac{(1-D_{ig})Y_{ig}}{1-p_{dig}}\right) \mid \mathcal{F} \right]$.		
	\end{enumerate}
	\label{thm:closedformtau}
\end{thm}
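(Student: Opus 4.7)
The plan is to reduce Theorem \ref{thm:closedformtau} to Theorem \ref{lem:randomcoef_identification} by setting $V_{ig} = (C_{ig},\Psi_g^*,L_{ig})$, and then to verify each of the four closed forms by a direct moment calculation instead of explicitly inverting $\mathbf{Q}_{xx}(V_{ig})$. First, since $\varphi(t,L)=t/L$ is anonymous and hence satisfies \nameref{assumptxt:exchangeableinteractions}, Theorem \ref{thm:fullcontrolfunction} delivers $(X_{ig},L_{ig}) \indep \tau_{ig}\mid (C_{ig},\Psi_g^*)$, so in particular $X_{ig}\indep \tau_{ig}\mid V_{ig}$. Because $\mathcal{F}=\{L_{ig}>0\}$ is $V_{ig}$-measurable and $\mathbf{Q}_{xx}(V_{ig})$ is a.s.\ full rank by hypothesis, Theorem \ref{lem:randomcoef_identification} yields $\tau=\mathbb{E}[\mathbf{Q}_{xx}(V_{ig})^{-1}X_{ig}Y_{ig}\mid \mathcal{F}]$.

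Second, I would pin down the conditional law of $X_{ig}$ given $V_{ig}$. Under \nameref{assumptxt:confounders}, $D_{ig}\mid V_{ig}\sim\mathrm{Bern}(p_{dig})$. Under \nameref{assumptxt:randomsampling} and \nameref{assumptxt:dyadicnetwork}, the pairs $(C_{jg},A_{ijg})_{j\ne i}$ are i.i.d.\ given $(C_{ig},\Psi_g^*)$, so conditioning further on $L_{ig}$ amounts to drawing $L_{ig}$ friends whose covariates are i.i.d.\ from the law of $C_{jg}\mid(C_{ig},\Psi_g^*,A_{ijg}=1)$. Their treatments $D_{jg}$ are then conditionally i.i.d.\ with marginal probability $p_{fig}$, yielding $T_{ig}\mid V_{ig}\sim\mathrm{Binomial}(L_{ig},p_{fig})$ and $D_{ig}\indep\bar T_{ig}\mid V_{ig}$, where $\bar T_{ig}\equiv T_{ig}/L_{ig}$. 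In particular, $\mathbb{E}[\bar T_{ig}\mid V_{ig}]=p_{fig}$ and $\mathrm{Var}(\bar T_{ig}\mid V_{ig})=p_{fig}(1-p_{fig})/L_{ig}$.

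Third, I would verify each closed form by exhibiting a scalar weight $w$ with $\mathbb{E}[w\,X_{ig}\mid V_{ig}]$ equal to the appropriate standard basis vector; multiplying $Y_{ig}=X_{ig}'\tau_{ig}$ by $w$ and using unconfoundedness then gives $\mathbb{E}[wY_{ig}\mid V_{ig}]=\mathbb{E}[\kappa_{ig}\mid V_{ig}]$ for the targeted coefficient $\kappa\in\{\alpha,\beta,\gamma,\delta\}$, and iterated expectations over $V_{ig}\mid\mathcal{F}$ deliver the result. For (i), the candidate $w_\alpha=\bigl(1-(T_{ig}-L_{ig}p_{fig})/(1-p_{fig})\bigr)(1-D_{ig})/(1-p_{dig})$ satisfies, by $D_{ig}\indep\bar T_{ig}\mid V_{ig}$ and the Binomial variance formula, $\mathbb{E}[w_\alpha\mid V_{ig}]=1$ while the cross moments against $D_{ig}$, $\bar T_{ig}$, and $D_{ig}\bar T_{ig}$ all vanish. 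The weights for (ii)--(iv) are obtained by replacing the factor $(1-D_{ig})/(1-p_{dig})$ with $D_{ig}/p_{dig}-(1-D_{ig})/(1-p_{dig})$ and/or the factor $1-(T_{ig}-L_{ig}p_{fig})/(1-p_{fig})$ with $(T_{ig}-L_{ig}p_{fig})/\bigl(p_{fig}(1-p_{fig})\bigr)$, and their defining moment identities follow from the same two ingredients.

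The only substantive step is the variance identity $\mathrm{Var}(\bar T_{ig}\mid V_{ig})=p_{fig}(1-p_{fig})/L_{ig}$, which is what forces the normalizer $p_{fig}(1-p_{fig})$ to appear in (iii)--(iv); it rests on the fact that, given $V_{ig}$, the $L_{ig}$ friends of $i$ are exchangeable draws from a common law, an observation that relies on \nameref{assumptxt:dyadicnetwork} together with the symmetry built into \nameref{assumptxt:randomsampling}. Everything else is algebra, and no explicit $4\times 4$ matrix inversion is ever required.
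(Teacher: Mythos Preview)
Your argument is correct, but it takes a different route from the paper's. The paper sets $V_{ig}=(C_{ig},\Psi_g^*,L_{ig})$ just as you do, but then it writes $\mathbf{Q}_{xx}(V_{ig})$ explicitly as a Kronecker product of two $2\times 2$ matrices---one in $p_{dig}$, one in $(p_{fig},L_{ig})$---using Lemma~\ref{lem:mixturepresentationQxx}, inverts each factor, and multiplies out $\mathbf{Q}_{xx}(V_{ig})^{-1}X_{ig}Y_{ig}$ to obtain the four weights directly. You instead take the weights as given in the theorem statement and verify, for each $k$, that the candidate scalar $w_k$ satisfies $\mathbb{E}[w_k X_{ig}\mid V_{ig}]=e_k$, which together with $X_{ig}\indep\tau_{ig}\mid V_{ig}$ immediately gives $\mathbb{E}[w_k Y_{ig}\mid V_{ig}]=\tau_k(V_{ig})$. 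Both approaches rest on exactly the same primitives---the Bernoulli/Binomial law of $(D_{ig},T_{ig})$ given $V_{ig}$ from Lemma~\ref{lem:conditionaldist} and the binomial variance identity---so neither is deeper than the other. The paper's approach is constructive and exposes the Kronecker structure (which also feeds into Lemma~\ref{lem:mixturepresentationQxx}); your verification approach is more elementary, avoids any matrix algebra, and in fact makes the appeal to Theorem~\ref{lem:randomcoef_identification} redundant, since the moment identities already deliver each coordinate of $\tau$ without ever naming $\mathbf{Q}_{xx}^{-1}$. One small correction: the statement $D_{ig}\mid V_{ig}\sim\mathrm{Bern}(p_{dig})$ does not follow from \nameref{assumptxt:confounders} but from $D_{ig}\indep L_{ig}\mid(C_{ig},\Psi_g^*)$, which is part of Lemma~\ref{lem:conditionaldist} and uses \nameref{assumptxt:randomsampling} and \nameref{assumptxt:dyadicnetwork}.
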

Theorem \ref{thm:closedformtau} shows that the average partial effects can be identified from $(p_{dig},p_{fig})$ and $(T_{ig},L_{ig},D_{ig},Y_{ig})$ for the subsample of individuals with at least one friend. The network propensity score is not observed directly but it can be identified from the data.

\begin{rem}[Ignoring spillovers in PSM]
The treatment effect $\beta$, in particular, looks very similar to its counterpart $\beta^{ATE}$ in the absence of spillovers. \cite{robins1994estimation} and many others have shown that

$$\beta^{ATE} = \mathbb{E}\left[ \frac{D_{ig}Y_{ig}}{p_{dig}} - \frac{(1-D_{ig})Y_{ig}}{1- p_{dig}} \mid \mathcal{F} \right].$$

By plugging in the outcome from \eqref{eq:linearoutcome}, and applying the law of iterated expectations, it is possible to show that $\beta^{ATE} = \beta + \mathbb{E}[p_{fig} \times \delta_{ig} \mid \mathcal{F} ]$. In the special case where the friend propensity score is independent of the spillover effect on the treated $(\delta_{ig})$, then this expression simplifies to $\beta + \mathbb{E}[D_{jg} \mid \mathcal{F}] \times \delta$. That means that the treatment effect that is recovered from traditional propensity score matching can be interpreted for the average effect when $\mathbb{E}[D_{jg} \mid \mathcal{F}]$ friends are treated. This quantity is not directly policy relevant because it does not reflect the average outcome when the program is implemented at a smaller or larger scale. In Appendix \ref{spuriouseffects}, I explore an example where OLS identifies spurious effects, even in the absence of spillovers.

\end{rem}

\begin{rem}[Rank conditions]
The example in Lemma \ref{lem:conditionaldist} also highlights some of the relevant rank conditions for identification that hold for more general settings. As in standard propensity score matching the overlap condition $0 < p_{dig} < 1$ needs to hold, otherwise the denominator is not well defined. There is a similar overlap condition for potential friends, where $0 < p_{fig} < 1$. This means that $\{ig\}'s$ friend cannot all be part of the treatment or control with probability approaching one. Otherwise, there is no residual variation to identify the spillover effects. Lastly, the distribution of $(T_{ig},L_{ig})$ needs to have thin tails (not too many friends), otherwise expectation may not be well defined. This suggests a potential weak identification problem in dense network limits where $L_{ig} \to \infty$. This is not a problem for networks with a bounded number of friends.
\end{rem}

\begin{lem}[Conditional Distribution]
	If \nameref{assumptxt:randomsampling} and \nameref{assumptxt:dyadicnetwork}, then (i)  $D_{ig} \mid T_{ig},L_{ig},C_{ig},\Psi_g^* \sim  Bernoulli(p_{dig})$, and (ii) $T_{ig} \mid L_{ig},C_{ig},\Psi_g^* \sim Binomial(p_{fig},L_{ig})$.
	\label{lem:conditionaldist}
\end{lem}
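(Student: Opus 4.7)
The plan is to exploit the i.i.d. structure across $j$ implied by combining \nameref{assumptxt:randomsampling} with \nameref{assumptxt:dyadicnetwork}, which together imply that the pairs $\{(A_{ijg}, D_{jg})\}_{j \ne i}$ are i.i.d. conditional on $(C_{ig}, \Psi_g^*)$. Both parts then follow from elementary properties of sums of i.i.d.\ Bernoulli variables.

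For part (i), I would first establish that $D_{ig}$ is conditionally independent of the collection $\{(C_{jg}, D_{jg}, U_{ijg})\}_{j \ne i}$ given $(C_{ig}, \Psi_g^*)$. The within-group clause of \nameref{assumptxt:randomsampling} makes $D_{ig}$ independent of $\{(C_{jg}, D_{jg})\}_{j \ne i}$ conditional on $\Psi_g^*$, and hence also conditional on $(C_{ig}, \Psi_g^*)$ by mutual independence across $i$. The shocks clause of \nameref{assumptxt:dyadicnetwork} then provides the independence of $\{U_{ijg}\}_{j \ne i}$ from $\{(D_{jg}, C_{jg})\}_{j = 1}^{N_g}$ and $D_{ig}$ given $\Psi_g^*$. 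Since $T_{ig}$ and $L_{ig}$ are measurable with respect to this collection together with $(C_{ig}, \Psi_g^*)$, we obtain $D_{ig} \indep (T_{ig}, L_{ig}) \mid (C_{ig}, \Psi_g^*)$, so the conditional distribution collapses to $D_{ig} \mid C_{ig}, \Psi_g^* \sim \text{Bernoulli}(p_{dig})$ by the definition of $p_{dig}$.

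For part (ii), let $W_j = (A_{ijg}, D_{jg})$ for $j \ne i$ and write $q_{ab} = \mathbb{P}(W_j = (a,b) \mid C_{ig}, \Psi_g^*)$. By the same independence arguments applied symmetrically across $j$, the $W_j$'s are i.i.d.\ given $(C_{ig}, \Psi_g^*)$, so the vector of counts $(N_{00}, N_{01}, N_{10}, N_{11})$ is Multinomial with $N_g - 1$ trials and cell probabilities $(q_{00}, q_{01}, q_{10}, q_{11})$. Since $T_{ig} = N_{11}$ and $L_{ig} = N_{10} + N_{11}$, the standard multinomial-to-binomial collapsing yields $T_{ig} \mid L_{ig} = \ell, C_{ig}, \Psi_g^* \sim \text{Binomial}\bigl(\ell, \, q_{11}/(q_{10} + q_{11})\bigr)$. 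A direct application of Bayes's rule then identifies $q_{11}/(q_{10} + q_{11}) = \mathbb{P}(D_{jg} = 1 \mid A_{ijg} = 1, C_{ig}, \Psi_g^*) = p_{fig}$.

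The main obstacle is notational rather than algebraic: one has to carefully string together the across-group and within-group halves of \nameref{assumptxt:randomsampling} with the conditional independence clause of \nameref{assumptxt:dyadicnetwork} to justify the claim that $\{(A_{ijg}, D_{jg})\}_{j \ne i}$ is conditionally i.i.d.\ given $(C_{ig}, \Psi_g^*)$. Once this is in hand, part (i) follows from measurability and part (ii) reduces to the standard multinomial calculation.
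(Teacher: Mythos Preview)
Your proposal is correct and follows essentially the same route as the paper: both establish, via \nameref{assumptxt:randomsampling} and \nameref{assumptxt:dyadicnetwork}, that $D_{ig}$ and the pairs $\{(A_{ijg},D_{jg})\}_{j\ne i}$ are mutually conditionally independent (and the latter i.i.d.) given $(C_{ig},\Psi_g^*)$, then read off the Bernoulli and Binomial laws from the resulting product likelihood. The only cosmetic difference is that the paper writes out the Bernoulli products explicitly (and tracks the non-friend count $M_{ig}$ as well), whereas you package the same calculation as a multinomial-to-binomial collapse.
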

Lemma \ref{lem:conditionaldist} shows that the distribution of $(D_{ig},T_{ig})$ given $(C_{ig},\Psi_g^*)$ can be parametrized in terms of $P_{ig}$. Part (i) is an extension of the canonical result of \cite{rosenbaum1983central}, whereas par (ii) is a new result. This factorization holds regardless of the primitive function $(\mathcal{L})$ and shock distribution of network formation. The proof builds on the insight that $T_{ig}$ is a sum of conditionally independent Bernoulli variables after conditioning on the key variables of network formation. Under model \eqref{eq:linearoutcome}, $X_{ig}$ is a deterministic function of $(D_{ig},T_{ig},L_{ig})$ which means that $P_{ig}$ also parametrizes the distribution of $X_{ig} \mid C_{ig},\Psi_g,L_{ig}$.

\begin{thm}[Balancing] If \nameref{assumptxt:randomsampling} and \nameref{assumptxt:dyadicnetwork} hold, then $P_{ig}$ is a balancing score, in the sense that $X_{ig} \ \indep \ (C_{ig},\Psi_g^*) \mid P_{ig}$. If \nameref{assumptxt:confounders} also holds, then $X_{ig} \ \indep \ \tau_{ig} \mid P_{ig}$.
	\label{thm:expostbalancing}
\end{thm}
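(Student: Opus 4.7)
The plan is to use Lemma \ref{lem:conditionaldist} to show that the joint distribution of the variables determining $X_{ig}$ (namely $(D_{ig},T_{ig},L_{ig})$) depends on $(C_{ig},\Psi_g^*,L_{ig})$ only through $P_{ig}$, and then to close the argument via the classical Rosenbaum--Rubin balancing trick. Under \nameref{assumptxt:anonymousinteractions}, $X_{ig}$ is a deterministic function of $(D_{ig},T_{ig},L_{ig})$, so characterizing this triple is enough.

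For the first claim, I would multiply the two conditionals provided by Lemma \ref{lem:conditionaldist} to obtain
$$\mathbb{P}(D_{ig}=d,T_{ig}=t \mid L_{ig},C_{ig},\Psi_g^*) = p_{dig}^{d}(1-p_{dig})^{1-d}\binom{L_{ig}}{t}p_{fig}^{t}(1-p_{fig})^{L_{ig}-t},$$
which depends on $(C_{ig},\Psi_g^*,L_{ig})$ only through $P_{ig}=(p_{dig},p_{fig},L_{ig})$. Hence the conditional law of $X_{ig}$ given $(C_{ig},\Psi_g^*,L_{ig})$ is a measurable function of $P_{ig}$. Since $P_{ig}$ is itself $\sigma(C_{ig},\Psi_g^*,L_{ig})$-measurable, the standard balancing-score argument yields $X_{ig} \indep (C_{ig},\Psi_g^*,L_{ig}) \mid P_{ig}$, and a fortiori $X_{ig} \indep (C_{ig},\Psi_g^*) \mid P_{ig}$.

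For the second claim, I would invoke Theorem \ref{thm:fullcontrolfunction}, which under the additional \nameref{assumptxt:confounders} assumption gives $(X_{ig},L_{ig}) \indep \tau_{ig} \mid (C_{ig},\Psi_g^*)$, and therefore $X_{ig} \indep \tau_{ig} \mid (C_{ig},\Psi_g^*,L_{ig})$. Using the tower property together with the fact that $P_{ig}$ is a function of $(C_{ig},\Psi_g^*,L_{ig})$,
$$\mathbb{P}(X_{ig}\in A \mid \tau_{ig},P_{ig}) = \mathbb{E}\bigl[\mathbb{P}(X_{ig}\in A \mid \tau_{ig},C_{ig},\Psi_g^*,L_{ig}) \,\big|\, \tau_{ig},P_{ig}\bigr].$$
The inner probability collapses to $\mathbb{P}(X_{ig}\in A \mid C_{ig},\Psi_g^*,L_{ig})$ by Theorem \ref{thm:fullcontrolfunction}, and this in turn is a function of $P_{ig}$ alone by the first part. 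The outer expectation therefore equals $\mathbb{P}(X_{ig}\in A \mid P_{ig})$, establishing $X_{ig} \indep \tau_{ig} \mid P_{ig}$.

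All of the substantive probabilistic content is already packed into Lemma \ref{lem:conditionaldist} and Theorem \ref{thm:fullcontrolfunction}, so the argument is mostly bookkeeping. The one subtlety to keep track of is that $L_{ig}$ must be carried through both parts: I need \emph{joint} independence of $(X_{ig},L_{ig})$ from $\tau_{ig}$ in Theorem \ref{thm:fullcontrolfunction} in order to discharge the tower step cleanly, because $L_{ig}$ enters the conditioning set through $P_{ig}$ and would otherwise re-introduce a path between $X_{ig}$ and $\tau_{ig}$.
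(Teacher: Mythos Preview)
Your proposal is correct and follows essentially the same route as the paper: both arguments use Lemma~\ref{lem:conditionaldist} to show that the conditional law of $(D_{ig},T_{ig},L_{ig})$ given $(C_{ig},\Psi_g^*)$ is parametrized by $P_{ig}$, and then combine this with Theorem~\ref{thm:fullcontrolfunction} to obtain the second claim. The only difference is presentational: the paper packages the conditional-independence manipulations using the graphoid axioms (redundancy, weak union, contraction, decomposition from Lemma~\ref{lem:axioms}), whereas you carry out the same logic with an explicit tower-property computation; your final remark about needing to carry $L_{ig}$ through both parts is exactly the step the paper handles via weak union on $(X_{ig},L_{ig})\indep\tau_{ig}\mid C_{ig},\Psi_g^*$.
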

Theorem \ref{thm:expostbalancing} shows that $P_{ig}$ is a suitable generalization of the propensity score to setting with spillovers and network formation by showing that inherits two key properties. First, it is a balancing score which means that two individuals with the same value of $P_{ig}$ are guaranteed to have the same distribution of covariates $(C_{ig},\Psi_g^*)$. This property is important for causal analyses because it ensures that any matching procedure based on $P_{ig}$ will compare similar individuals. Second, it shows that $P_{ig}$ satisfies the unconfoundedness property required to identify the average partial effect $\tau$ in Theorem \ref{lem:randomcoef_identification}. The selection on observables ties the observed characteristics $(C_{ig},\Psi_g^*)$ to the random coefficients and is therefore crucial to prove the final step.

\subsection{A Structural Interpretation}

From an economic point of view, the network propensity score can be interpreted as a function of agents' underlying preferences. To this end, it is convenient to represent $\{ig\}'s$ treatment indicator as $D_{ig} = \mathcal{H}(C_{ig},\Psi_g^*,\eta)$ where $\mathcal{H}$ is a measurable function and $\eta_{ig} \mid C_{ig},\Psi_g^* \sim F(\eta \mid c,\Psi^*) $ is an unobserved participation shock. Since we can always define the participation shock as $\eta = D_{ig} - \mathbb{P}(D_{ig} =1 \mid C_{ig} = c,\Psi_g^* = \Psi^*)$, this form does not entail any loss of generality. The function $\mathcal{H}$ can also take the form of a threshold utility model or an institutional assignment rule based on observables. The first component of the network propensity score is the propensity score conditional on $(C_{ig},\Psi_g^*)$, which is defined as
\begin{align}
\begin{split}
\mathbb{P}(D_{ig} = 1 \mid C_{ig} = c,\Psi_g^* = \Psi^*) =  \int \mathcal{H}(c,\Psi^*,\eta) \ dF(\eta \mid c,\Psi^*)
\end{split}
\label{eq:pscore}
\end{align}
The propensity score depends on the preference function $\mathcal{H}$ and the distribution of selection shocks. The integral averages out the individual heterogeneity $\eta$, holding the characteristics $(c,\Psi^*)$ fixed.

The friend propensity score can be written in a similar way. Let $F(\eta^*,c^*,u \mid \Psi^*)$ be the distribution of traits of a potential friend in each group $(\eta^*,c^*)$ and the friendship shock $(u)$ given $\Psi_g^*$. By Bayes' rule
\begin{align}
\begin{split}
&\mathbb{P}(D_{jg} = 1 \mid C_{ig} = c,\Psi_g^* = \Psi^*,A_{ijg} = 1) \\
&= \int \mathcal{L}(c,c^*,\Psi^*,u)\ \mathcal{H}(c^*,\Psi^*,\eta^*) \ \frac{ dF(\eta^*,c^*,u \mid \Psi^*)}{\int \mathcal{L}(c,c^*,\Psi^*,u) \ dF(c^*,u^* \mid \Psi^*)}.
\end{split}
\label{eq:explanationfscore}
\end{align}

The friend propensity combines $\{ig\}'s$ friendship preferences/meeting likelihood and $\{jg\}'s$ preferences for participation in the program. In the extreme case that $L = \mathbbm{1}\{c = c^*\}$, agents only befriend others with exactly the same characteristics and the friend propensity score is equal to the propensity score. At the other extreme, when $L = \mathbbm{1}\{u > 0\}$ the network is exogenous then \eqref{eq:explanationfscore} reduces to $\int \mathcal{H}(c^*,\Psi^*,\eta) dF(\eta^*,c^* \mid \Psi^*)$, which is a group-level constant. Conversely, when the treatment is exogenous, that is when $\mathcal{H}(c^*,\Psi^*,\eta^*) = \eta$ and $\eta$ is independent of the other characteristics, then the propensity score and the friend propensity score are constant. For intermediate cases the friend propensity score will not contain the same information as the propensity score.

In the microfinance example, homophily suggests people tend to associate with people in the same caste whereas selection implies that certain castes are more likely to participate in the information sessions. Consequently, the likelihood of having a treated friend depends on a household's caste. This is where homophily interacts with selection. Pairs of friends tend to have similar traits $(c,c^*)$ on average and consequently similar partipation probabilities, via the function $h$.

\subsection{Relationship to Network Pseudo-Metrics}

One potential problem with identification via network propensity score matching is that there may be some latent confounding variables. In some cases, researchers may be able to recover some unobservables from the network structure. In this section I explain the relationship between graphon notions of network distance and the network propensity score. 

\cite{auerbach2022identification} defines a pseudo-metric as follows\footnote{A similar metric was used by \cite{zeleneev2020identification} to identify network formation models with interactive fixed effects.}
\begin{equation} d_{\Psi_g^*} = \left[ \int (\mathcal{L}(C_{ig},C^*)-\mathcal{L}(C_{jg},C^*))^2 dF(C^* \mid \Psi_g^*)\right]^{1/2}.
\end{equation}
The pseudo-metric $d_{\Psi_g^*}$ measures measures the $L_2$ difference in link functions between someone with characteristics $C_{ig}$ and $C_{jg}$. Two individuals are close in this sense, if they have the same revealed preferences for neighbors, at least in reduced-form. In practice this can be computed by counting the relative number of friends in common between $\{ig\}$ and $\{jg\}$. \cite{auerbach2022identification} proposed matching estimators for a partially linear model with an exogenous treatment based on the pseudo-metric, but noted that it could not be used to study spillovers because of the failure of a particular rank condition. I show that this finding extends to a more general setting with arbitrary functional form and selection-on-observables by establishing its connection to the friend propensity score.

The difference in the friend propensity scores of two individuals is equal to
\begin{equation}
d_f =  \left\Vert \int \mathcal{H}(C^*,\Psi_g^*)\left[ \frac{\mathcal{L}(C_{ig},C^*,\Psi_g^*)}{p_\ell(C_{ig},\Psi_g^*)} - \frac{\mathcal{L}(C_{jg},C^*,\Psi_g^*)}{p_\ell(C_{jg},\Psi_g^*)}\right]dF(C^* \mid \Psi_g^*) \right\Vert.
\end{equation}
The following theorem establishes the relationship between the two metrics.
\begin{thm}[Bounds Pseudo-Metrics]
$d_{f}\le \frac{1}{p_{\ell}(C_{ig},\Psi_g^*)}d_{\Psi_g^*}$.
\label{thm:bounds_pseudometrics}
\end{thm}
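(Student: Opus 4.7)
The plan is to bound $d_f$ by writing the integrand as an $L^2$ inner product between a carefully centered version of $\mathcal{H}$ and the link-function difference $\mathcal{L}_i - \mathcal{L}_j$, then applying Cauchy--Schwarz. The main obstacle will be to choose the centering so that the resulting constant comes out to exactly $1/p_\ell(C_{ig},\Psi_g^*)$ rather than a multiple of it; the naive choice (centering $\mathcal{H}$ at $0$) yields a bound that is roughly twice too large after invoking the triangle inequality.

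Abbreviate $p_i = p_\ell(C_{ig},\Psi_g^*)$, $p_j = p_\ell(C_{jg},\Psi_g^*)$, $\mathcal{L}_i(C^*) = \mathcal{L}(C_{ig},C^*,\Psi_g^*)$, and $\mathcal{L}_j(C^*) = \mathcal{L}(C_{jg},C^*,\Psi_g^*)$, and let $D$ denote the quantity inside the norm in the definition of $d_f$. Because $\int \mathcal{L}_i/p_i\, dF = \int \mathcal{L}_j/p_j\, dF = 1$, for any constant $c$,
$$D = \int (\mathcal{H} - c)\left[\frac{\mathcal{L}_i}{p_i} - \frac{\mathcal{L}_j}{p_j}\right] dF(C^* \mid \Psi_g^*).$$
I would then use the algebraic identity
$$\frac{\mathcal{L}_i}{p_i} - \frac{\mathcal{L}_j}{p_j} = \frac{\mathcal{L}_i - \mathcal{L}_j}{p_i} + \frac{(p_j - p_i)\mathcal{L}_j}{p_i p_j}$$
and pick the centering $c = p_{fjg} \equiv \int \mathcal{H}\mathcal{L}_j\, dF / p_j$, so that $\int (\mathcal{H} - p_{fjg})\mathcal{L}_j\, dF = 0$ by construction and the second piece contributes nothing. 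The identity collapses to
$$D = \frac{1}{p_i}\int (\mathcal{H} - p_{fjg})(\mathcal{L}_i - \mathcal{L}_j)\, dF(C^* \mid \Psi_g^*).$$

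After this simplification the bound is immediate. Cauchy--Schwarz yields
$$|D| \le \frac{1}{p_i}\left[\int (\mathcal{H} - p_{fjg})^2\, dF\right]^{1/2} d_{\Psi_g^*},$$
and because both $\mathcal{H}(C^*,\Psi_g^*)$ (the propensity score, integrated over the individual shock $\eta$) and $p_{fjg}$ lie in $[0,1]$, the integrand in the first bracket is bounded pointwise by $1$, so the first factor is at most $1$. Combining the displays gives $d_f \le d_{\Psi_g^*}/p_\ell(C_{ig},\Psi_g^*)$, as claimed. The only nontrivial step is spotting that centering $\mathcal{H}$ at $p_{fjg}$ orthogonalizes against $\mathcal{L}_j$ and therefore eliminates the unwanted $(p_j-p_i)/p_ip_j$ term; everything else is mechanical.
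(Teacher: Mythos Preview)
Your argument is correct and in fact sharper than the paper's. The paper proceeds by bounding $\mathcal{H}$ by $1$ in absolute value, moving the norm inside the integral, using the same add-and-subtract identity you wrote down, and then applying the triangle inequality together with the $L^1$--$L^2$ bound $\int|\mathcal{L}_i-\mathcal{L}_j|\,dF \le d_{\Psi_g^*}$ and $|p_j-p_i|\le d_{\Psi_g^*}$. Because the paper never centers $\mathcal{H}$, the cross term $\tfrac{(p_j-p_i)\mathcal{L}_j}{p_ip_j}$ survives, and the two pieces each contribute $d_{\Psi_g^*}/p_i$; the paper's displayed chain therefore terminates at $\tfrac{2}{p_i}d_{\Psi_g^*}$ rather than the $\tfrac{1}{p_i}d_{\Psi_g^*}$ stated in the theorem. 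Your device of centering $\mathcal{H}$ at $p_{fjg}$ so that it is $L^2$-orthogonal to $\mathcal{L}_j$ kills that cross term exactly, and the subsequent Cauchy--Schwarz step delivers the constant $1/p_i$ without any slack. So the two proofs share the same decomposition of $\mathcal{L}_i/p_i-\mathcal{L}_j/p_j$, but your orthogonalization-plus-Cauchy--Schwarz route actually establishes the bound as stated, whereas the paper's triangle-inequality route loses a factor of two.
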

Theorem \ref{thm:bounds_pseudometrics} states that if two individuals are close in the pseudo-metric then they also have identical friend propensity scores. In principle, if $d_{\Psi_g^*}$ were known, then researchers could match on the pseudo-metric rather than on the friend propensity score. The converse does not necessarily hold. For example, suppose that $\mathcal{H}(C^*,\Psi_g^*) = 0.5$, then $d_f = 0$ if and only if $\int \mathcal{L}(C_{ig},C^*,\Psi_g^*) = \int \mathcal{L}(C_{jg},C^*, \Psi_g^*)$. This occurs when two individuals have the same proportion of friends. However, $d_{\Psi_g^*} > 0$ if the $\{ig\}$ and $C_{jg}$ have different preferences over specific friends.

However, the pseudo-metric is not typically known. The problem is that current methods to consistently estimate $d_{\Psi_g^*}$ require a dense network with $p_{\ell}(C_{ig},\Psi_g^*) \to \infty$ in the large-network limit. This means that the number of friends $L_{ig}$ grows proportionately with the sample size, which violates the rank conditions to identify even simple estimands as those in Theorem \ref{thm:closedformtau}. In essence, two individuals that are close in the pseudo=metric will have almost identical values of the regressors $X_{ig}$, and there is no residual variation to identify the average partial effects. This makes it difficult to use network structure to recover unobserved heterogeneity.

\subsection{Weaker Controls and Mixture Representation of $\mathbf{Q}_{xx}$}
\label{weakcontrols}

To compute the network propensity score, $(C_{ig},\Psi_g^*)$ needs to be fully observed or be consistently estimated. Unobserved heterogeneity can be addressed in a variety of ways. The researcher may still be able to identify average partial effects, even in settings with unobserved heterogeneity, under further restrictions.
\begin{lem}[Weaker Control]
	\label{lem:quasisaturation}
	If \nameref{assumptxt:randomsampling}, \nameref{assumptxt:confounders}, \nameref{assumptxt:dyadicnetwork} hold, and $\tau_{ig} \ \indep \ P_{ig} \mid V_{ig}$, where $P_{ig} = (p_{dig},p_{fig},L_{ig})$, then $X_{ig} \ \indep \ \tau_{ig} \mid V_{ig}$.	
\end{lem}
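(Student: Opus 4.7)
The plan is to combine a strengthened form of the Balancing Theorem (Theorem \ref{thm:expostbalancing}) with the contraction property of conditional independence. Theorem \ref{thm:expostbalancing} already gives $X_{ig} \indep \tau_{ig} \mid P_{ig}$; bridging to the conclusion requires an intermediate conditional independence involving both $P_{ig}$ and $V_{ig}$.

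First, I would sharpen the balancing result to $X_{ig} \indep (\tau_{ig}, C_{ig}, \Psi_g^*) \mid P_{ig}$. The Direct Confounders Theorem (Theorem \ref{thm:fullcontrolfunction}) gives $X_{ig} \indep \tau_{ig} \mid (C_{ig}, \Psi_g^*)$, and the Conditional Distribution Lemma (Lemma \ref{lem:conditionaldist}) shows that the law of $(D_{ig}, T_{ig}, L_{ig})$ given $(C_{ig}, \Psi_g^*)$ is pinned down by $P_{ig}$. Since $X_{ig}$ is a deterministic function of $(D_{ig}, T_{ig}, L_{ig})$ under \nameref{assumptxt:anonymousinteractions} and $P_{ig}$ is measurable with respect to $(C_{ig}, \Psi_g^*)$, these two facts together deliver the sharpened independence.

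Second, I would take $V_{ig}$ to be measurable with respect to $(C_{ig}, \Psi_g^*)$, the natural reading in this section since $V_{ig}$ is meant to be a (possibly coarser) control variable built from the same underlying observables and latents. Under this reading, decomposition applied to the strengthened balancing statement gives $X_{ig} \indep \tau_{ig} \mid (P_{ig}, V_{ig})$. Combining this with the hypothesis $\tau_{ig} \indep P_{ig} \mid V_{ig}$ via the contraction rule yields $\tau_{ig} \indep (X_{ig}, P_{ig}) \mid V_{ig}$, and one further decomposition delivers the conclusion $X_{ig} \indep \tau_{ig} \mid V_{ig}$.

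The main obstacle is making the role of $V_{ig}$ explicit: the argument requires $V_{ig}$ to carry no information about $X_{ig}$ beyond what $P_{ig}$ already encodes. Measurability of $V_{ig}$ with respect to $(C_{ig}, \Psi_g^*)$ is sufficient and is the natural reading here, but absent such a structural tie, contraction could fail because adding $V_{ig}$ to the conditioning set might reintroduce dependence between $X_{ig}$ and $\tau_{ig}$. I would therefore state $(C_{ig}, \Psi_g^*)$-measurability of $V_{ig}$ as the operative background condition, paralleling the role that $(C_{ig}, \Psi_g^*)$ plays as the full confounder set in Theorem \ref{thm:fullcontrolfunction}.
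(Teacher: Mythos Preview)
Your proposal is correct and follows essentially the same route as the paper: both arguments first strengthen the balancing result to $X_{ig}\indep(\tau_{ig},C_{ig},\Psi_g^*)\mid P_{ig}$ (by combining Theorem~\ref{thm:fullcontrolfunction} with Lemma~\ref{lem:conditionaldist}/Theorem~\ref{thm:expostbalancing}), then use measurability of $V_{ig}$ with respect to $(C_{ig},\Psi_g^*,L_{ig})$ to bring $V_{ig}$ into the conditioning set, and finally invoke the hypothesis $\tau_{ig}\indep P_{ig}\mid V_{ig}$ via contraction and decomposition. The only cosmetic slip is that the passage from $X_{ig}\indep(\tau_{ig},C_{ig},\Psi_g^*)\mid P_{ig}$ to $X_{ig}\indep\tau_{ig}\mid(P_{ig},V_{ig})$ is weak union followed by decomposition, not decomposition alone; the paper also states the $(C_{ig},\Psi_g^*,L_{ig})$-measurability of $V_{ig}$ explicitly in its proof, exactly as you anticipated.
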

Lemma \ref{lem:quasisaturation} provides a high-level condition stating that
any residual variation in the network propensity is exogenous after conditioning on $V_{ig}$. Since the network propensity score is itself a function of $(C_{ig},\Psi_g^*)$ this means that there are exogenous shifters in individual behavior $(C_{ig})$ or group contextual factors $(\Psi_g^*)$.

There are two examples in the literature that could satisfy this requirement.  \cite{johnsson2019estimation} propose a restriction on the class of network models that satisfy a particular monotonicity restriction, extending prior work by \cite{goldsmith2013social}. In this case, the total number of friends is a sufficient statistic for network unobservables. \cite{spilloversnoncompliance} find a single-dimensional control variable in experiments with spillovers and non-compliance, which is the share of individuals that accept treatment offers in each group. I fill in some of the details in Section \ref{experiments_illustration}. \cite{spilloversnoncompliance} find a sufficient statistic $V_{ig}$ exploiting the binomial form of their key endogenous variable. \cite{arkhangelsky2018role} show how to exploit this type of structure to identify direct effects in models with selection-on-observables and fixed effects. They rely on the idea of obtaining sufficient statistics of the unobserved heterogeneity. Similar strategies could be applied in the spillovers case by imposing particular functional forms on the network or selection processes.

Imposing the conditions of Lemma \ref{lem:quasisaturation} has implications for the structure of the matrix $\mathbf{Q}_{xx}(v)$ defined in Theorem \ref{thm:fullcontrolfunction}. Define the functions $\widetilde{\varphi}_1(p_f,l) = \mathbb{E}[\varphi(T_{ig},L_{ig}) \mid p_{fig} = p_f,L_{ig} = l]$ and $\widetilde{\varphi}_2(p_f,l) = \mathbb{E}[\varphi(T_{ig},L_{ig})\varphi(T_{ig},L_{ig})' \mid p_{fig} = p_f,L_{ig} = l]$ which are the conditional first and second moments given the friend propensity score and the total number of friends. Since Lemma \ref{lem:conditionaldist} shows that $(p_{fig},L_{ig})$ parametrizes the distribution of $(T_{ig},L_{ig})$ given $(C_{ig},\Psi_g^*)$, these are equivalent to conditioning on $(C_{ig},\Psi_g^*)$ directly by the decomposition axiom \citep{Constantinou2017}. Lemma \ref{lem:conditionaldist} also implies that $\widetilde{\varphi}_1$ and $\widetilde{\varphi}_2$ are \textit{known} functions that only change depending on the  basis $\varphi$. In our running example, where $\varphi(t,l) = t/l$ these function take a very simple form. In this case $\widetilde{\varphi}_1(p_f,l)$ and $\widetilde{\varphi}_2(p_f,l) = \frac{p_f(1-p_f)}{l} + p_f^2$.

Lemma \ref{lem:mixturepresentationQxx} shows that the matrix $\mathbf{Q}_{xx}$ can be expressed as a mixture of known functions of the network propensity score.

\begin{lem}[Mixture Representation]
	Suppose that \nameref{assumptxt:randomsampling} and \nameref{assumptxt:dyadicnetwork} hold, and that $V_{ig}$ is measurable with respect to $(C_{ig},\Psi_g^*,L_{ig})$, then
	\begin{align}
		\mathbf{Q}_{xx}(v) = \int \begin{pmatrix} 1 & \widetilde{\varphi}_1(p_{f},l)' \\ \widetilde{\varphi}_1(p_{f},l,p_{f},l) & \widetilde{\varphi}_2(p_{f},l) \end{pmatrix} \otimes \begin{pmatrix} 1 & p_d \\ p_d & p_d \end{pmatrix} dF(p_{d},p_{f},l \mid V_{ig} = v).		
	\end{align}
	\label{lem:mixturepresentationQxx}	
\end{lem}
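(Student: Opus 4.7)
The plan is to exploit the Kronecker structure of $X_{ig}$ under \nameref{assumptxt:anonymousinteractions} and to leverage the conditional distributions derived in Lemma \ref{lem:conditionaldist}. Under anonymous interference the exposure reduces to $\varphi(T_{ig},L_{ig})$, so I can write $X_{ig} = W_{ig} \otimes Z_{ig}$ where $W_{ig} \equiv (1, \varphi(T_{ig},L_{ig})')'$ and $Z_{ig} \equiv (1, D_{ig})'$; stacking the Kronecker product block-by-block reproduces $(1, D_{ig}, \varphi(T_{ig},L_{ig})', D_{ig}\varphi(T_{ig},L_{ig})')'$, which is the definition of $X_{ig}$. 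The mixed-product property then gives
\[ X_{ig}X_{ig}' = (W_{ig}W_{ig}') \otimes (Z_{ig}Z_{ig}'), \]
and since $D_{ig}$ is binary, $Z_{ig}Z_{ig}'$ is the $2\times 2$ matrix with entries $\{1, D_{ig}\}$ appearing in the pattern displayed in the lemma.

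Next I would condition on $(C_{ig}, \Psi_g^*, L_{ig})$. Lemma \ref{lem:conditionaldist}(i) says $D_{ig} \mid T_{ig},L_{ig},C_{ig},\Psi_g^* \sim \mathrm{Bernoulli}(p_{dig})$, a law which is free of $T_{ig}$; this delivers the conditional independence $D_{ig}\indep T_{ig}\mid (C_{ig},\Psi_g^*,L_{ig})$. Lemma \ref{lem:conditionaldist}(ii) gives $T_{ig}\mid L_{ig},C_{ig},\Psi_g^* \sim \mathrm{Binomial}(p_{fig},L_{ig})$, so the conditional moments of $W_{ig}W_{ig}'$ depend on $(C_{ig},\Psi_g^*,L_{ig})$ only through $(p_{fig},L_{ig})$ and are by definition the blocks $\widetilde\varphi_1$ and $\widetilde\varphi_2$. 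Because expectation distributes across a Kronecker product of conditionally independent factors, I obtain
\begin{equation*}
\mathbb{E}[X_{ig}X_{ig}' \mid C_{ig}, \Psi_g^*, L_{ig}] = \begin{pmatrix} 1 & \widetilde{\varphi}_1(p_{fig}, L_{ig})' \\ \widetilde{\varphi}_1(p_{fig}, L_{ig}) & \widetilde{\varphi}_2(p_{fig}, L_{ig}) \end{pmatrix} \otimes \begin{pmatrix} 1 & p_{dig} \\ p_{dig} & p_{dig} \end{pmatrix}.
\end{equation*}

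To conclude, since $V_{ig}$ is measurable with respect to $(C_{ig}, \Psi_g^*, L_{ig})$ and $(p_{dig}, p_{fig})$ are deterministic functions of $(C_{ig}, \Psi_g^*)$, the tower property gives $\mathbf{Q}_{xx}(v) = \mathbb{E}\bigl[\mathbb{E}[X_{ig}X_{ig}' \mid C_{ig},\Psi_g^*,L_{ig}] \bigm| V_{ig} = v\bigr]$, and rewriting the outer expectation as integration against the conditional law $dF(p_d,p_f,l\mid V_{ig}=v)$ gives the stated formula. The one non-mechanical step is the factorization of the conditional expectation of the Kronecker product into the Kronecker product of conditional expectations, which relies squarely on the $D_{ig}\indep T_{ig}\mid (C_{ig},\Psi_g^*,L_{ig})$ statement extracted from Lemma \ref{lem:conditionaldist}(i); everything else is bookkeeping in Kronecker notation and an application of iterated expectations.
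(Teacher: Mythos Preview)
Your proposal is correct and follows essentially the same route as the paper's proof: write $X_{ig}$ as a Kronecker product, use Lemma~\ref{lem:conditionaldist} to obtain the conditional independence $D_{ig}\indep T_{ig}\mid(C_{ig},\Psi_g^*,L_{ig})$ and the parametrization of the conditional laws by $(p_{dig},p_{fig},L_{ig})$, factor the conditional expectation accordingly, and then tower down to $V_{ig}$. The only cosmetic difference is that you spell out the mixed-product step and the independence extraction from Lemma~\ref{lem:conditionaldist}(i) a bit more explicitly than the paper does.
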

In the special case where $V_{ig} = (C_{ig},\Psi_g^*)$ the distribution $F$ is degenerate and we can drop the integral sign. Therefore, observing the key variables for selection and network formation imposes over-identifying restrictions on the weighting matrix. The integral is non-degenerate when some of these key variables are unobserved by the researcher. This assumption is testable by comparing the entries of $\mathbf{Q}_{xx}$. For example, in a parametric model $F$ can be modeled as a latent distribution that nests the degenerate case and $(p_{dig},p_{fig})$ as link function such as probit or logit. In the empirical example, I exploit this structure to produce a feasible parametric model.


\section{Estimation}
\label{estimation}

I outline a two-step procedure to estimate the causal effects for linear models as a sample analog of the estimand of $\tau$. In the first stage, I fit a parametric model for $\mathbf{Q}_{xx}$ using data from the endogenous regressors $X_{ig}$ and the control variable $V_{ig}$. In the second stage, I substitute the estimated weighting matrix $\mathbf{Q}_{xx}$ to compute $\tau$ by inverse weighting.

\textbf{Notation:} Let $Z_{ig}$ denote a vector of individual variables, where $Z_{ig} \equiv (X_{ig},Y_{ig},V_{ig})$ includes the endogenous regresors, the outcome and the observed control variables. I let $\sum_{ig} f(Z_{ig})$ be the sum $\sum_{g=1}^G\sum_{i=1}^{N_g} f(Z_{ig})$, where $f(\cdot)$ is an arbitrary function. I also let $\bar{n} = \frac{1}{G}\sum_{g=1}^G N_g$ denote the average group size. By construction, $\bar{n}G$ is equal to the total sample size. For convenience, let $vec(\cdot)$ denote the vectorize operator, which stacks the columns of a matrix into a single vector. I also use $\Vert x \Vert $ to denote the Euclidean norm of the vector $x$, defined as $\Vert x \Vert = \sqrt{\sum_{k=1}^{K} x_{k}^2}$.

In the first stage, I consider a parametric class of functions to model the weighting matrix, $\{\mathbf{Q}_{xx}(v,\boldsymbol{\theta}): \boldsymbol{\theta} \in \boldsymbol{\Theta} \subseteq \mathbb{R}^{d_{\theta}} \}$, that nest the true model. This means that there is a $\boldsymbol{\theta}_0 \in \boldsymbol{\Theta}$ such that $\mathbf{Q}_{xx}(v,\boldsymbol{\theta}_0) = \mathbb{E}[X_{ig}X_{ig}' \mid V_{ig} = v]$. The matrix $\mathbf{Q}_{xx}$ has to be symmetric and positive semi-definite. If \nameref{assumptxt:randomsampling}, \nameref{assumptxt:confounders} and \nameref{assumptxt:dyadicnetwork} hold, and $V_{ig} = (C_{ig},\Psi_g^*)$ the choice of parametric family can be disciplined by imposing over-identifying restrictions of the network formation model, so that $\mathbf{Q}_{xx}(v,\boldsymbol{\theta})$ can be expressed as a function of the network propensity score. Alternatively we can use the mixture model representation of Lemma \ref{lem:mixturepresentationQxx} to inform the choice of $\mathbf{Q}_{xx}$ for other choices of $V_{ig}$. The control variable $V_{ig}$ is valid as long as the conditions of Lemma \ref{lem:quasisaturation} hold.

I define the vectorized residuals,
\[
r(Z_{ig},\boldsymbol{\theta}) \equiv \text{vec}(X_{ig}X_{ig}' - \mathbf{Q}_{xx}(V_{ig},\boldsymbol{\theta})).
\]
The residuals capture how well the control variables fit $X_{ig}$. The sample criterion function computes the average of square residuals as
\begin{equation}
\label{eq:samplecriterion}
\widehat{\mathcal{R}}(\boldsymbol{\theta}) \equiv \ \frac{1}{\bar{n}G}\sum_{ig} \Vert r(Z_{ig},\boldsymbol{\theta})\Vert^2.
\end{equation}

The sample criterion $\widehat{\mathcal{R}}(\boldsymbol{\theta})$ is an  approximation to $\mathcal{R}(\boldsymbol{\theta}) = \mathbb{E}[\Vert r(Z_{ig},\boldsymbol{\theta})\Vert^2]$. The least squares criterion is appropriate for three reasons. First, the population criterion $\mathcal{R}(\boldsymbol{\theta})$ is minimized at $\boldsymbol{\theta}_0$ because the conditional mean of $X_{ig}X_{ig}'$ given $V_{ig}$ is the optimal prediction. This provides a rationale for minimizing $\widehat{\mathcal{R}}(\boldsymbol{\theta})$. Second, joint-likelihood approaches are either impractical or infeasible without strong assumptions, particularly with more complex exposure functions. Third, quasi-likelihood approaches, such as those in \cite{tchetgen2017auto} and \cite{sofrygin2017semi} are valid under certain assumptions, but are more sensitive to the specification of the model. My approach is more robust than quasi-likelihood methods because it targets the conditional mean directly, which is the main object required for identification.

We can construct a feasible estimator by minimizing the sample criterion,
\begin{equation}
\label{eq:estimator_networkpropensity}
\widehat{\boldsymbol{\theta}}  = \arg\min_{\boldsymbol{\theta} \in \boldsymbol{\Theta}} \  \widehat{\mathcal{R}}(\boldsymbol{\theta}).
\end{equation}
The estimated parameter $\widehat{\boldsymbol{\theta}}$ can be plugged-in to compute a feasible weighting matrix $\mathbf{Q}_{xx}(V_{ig},\widehat{\boldsymbol{\theta}})$. I propose the following sample analog of the inverse-weighting estimand of $\tau$. 

\[
\widehat{\boldsymbol{\tau}} \equiv \frac{1}{\bar{n}G} \sum_{ig} \mathbf{Q}_{xx}(V_{ig},\widehat{\boldsymbol{\theta}})^{-1}X_{ig}Y_{ig}
\]
The vector $\widehat{\boldsymbol{\tau}}$ is a feasible estimator of the average partial effects defined in \eqref{eq:linearoutcome}. The estimator is subject to two sources of uncertainty. First, the sample average is an approximation to $\mathbb{E}[\mathbf{Q}_{xx}(V_{ig})^{-1}X_{ig}Y_{ig}]$. Second, the inverse weighting method is subject to first-stage uncertainty in the  estimation of $\widehat{\boldsymbol{\theta}}$. Under standard regularity conditions that I list in the Appendix, $\widehat{\boldsymbol{\theta}}$ and $\widehat{\boldsymbol{\tau}}$ are consistent but the standard errors need to be adjusted. This is analogous to the first stage uncertainty in propensity score methods, that can be corrected analytically or by bootstrap procedures \citep{abadie2016matching}.

To adjust the standard errors it is useful to view the first and second stages as a single system of equations. As before, let $z \equiv (x,y,v)$. I write down the first-order conditions in terms of the jacobian of the square residuals $\psi_q(z,\boldsymbol{\theta}) = \tfrac{\partial}{\partial \boldsymbol{\theta}'}\Vert r(v,\boldsymbol{\theta})\Vert^2$ and the second stage influence function $\psi_{IW}(z,\boldsymbol{\theta}) = \mathbf{Q}_{xx}(v,\boldsymbol{\theta})$. I stack the first and second stage equations in a single influence function $\psi \equiv [\psi_q,\psi_{IW}']'$. The estimated parameters solve
\begin{equation}
\frac{1}{\bar{n}G}\sum_{ig} \psi(Z_{ig},\widehat{\boldsymbol{\tau}},\widehat{\boldsymbol{\theta}}) = \boldsymbol{0}
\label{eq:sampleinfluence}
\end{equation}
To this end, I define the within-group average $\overline{\psi}_g(\boldsymbol{Z}_{g},\boldsymbol{\theta}) \equiv \frac{1}{N_{gt}}\sum_{i=1}^{N_{g}} \psi(Z_{ig},\boldsymbol{\theta})$, where $\boldsymbol{Z}_{g} \equiv \{ Z_{ig}\}_{i=1}^{N_{gt}}$ is a matrix of individual covariates for each group. This allows me to decompose \eqref{eq:sampleinfluence} into group averages as $\frac{1}{\bar{n}G}\sum_{ig} \psi(Z_{ig},\widehat{\boldsymbol{\tau}},\widehat{\boldsymbol{\theta}})  = \frac{1}{G}\sum_{g=1}^G \left(\tfrac{N_{g}}{\bar{n}}\right) \overline{\psi}_g(\boldsymbol{Z}_{g},\boldsymbol{\theta})$. The fraction $(N_g/\bar{n})$ denotes the relative size of each group.

For inference, I compute heteroskedasticity-robust standard errors, clustered at the group level. Let $\widehat{\Omega}$ be an estimate of the second moments of the influence function \eqref{eq:sampleinfluence} and let  $\widehat{H}$ be a sample analog of the expected jacobian, defined as
\begin{align}
\widehat{H} &\equiv \frac{1}{G}\sum_{g=1}^G \left(\tfrac{N_{g}}{\bar{n}}\right) \frac{\partial}{\partial (\boldsymbol{\theta},\boldsymbol{\beta})'} \overline{\psi}_g(\boldsymbol{Z}_g,\widehat{\boldsymbol{\tau}},\widehat{\boldsymbol{\theta}}) \\
\widehat{\Omega} &\equiv \frac{1}{G}\sum_{g=1}^G \left(\tfrac{N_{g}}{\bar{n}}\right)^2 \overline{\psi}_g(\boldsymbol{Z}_{g},\widehat{\boldsymbol{\tau}},\widehat{\boldsymbol{\theta}}) \overline{\psi}_g(\boldsymbol{Z}_{g},\widehat{\boldsymbol{\tau}},\widehat{\boldsymbol{\theta}})'
\end{align}
Then the covariance of the estimators is computed by the sandwich form $\widehat{\Sigma} \equiv \tfrac{1}{G}\widehat{H}^{-1}\widehat{\Omega}\widehat{H'}^{-1}$ and the standard errors can be recovered from the square root of the diagonal of $\widehat{\Sigma}$. Since the estimator $\widehat{\tau} \in \mathbb{R}^{d_{\tau}}$ only enters the second stage linearly,
\[
\widehat{H} = \frac{1}{G}\sum_{g=1}^G \left(\tfrac{N_{g}}{\bar{n}}\right) \begin{pmatrix} \frac{\partial}{\partial \boldsymbol{\theta}'} \overline{\psi}_{q,g}(\boldsymbol{Z}_g,\widehat{\boldsymbol{\tau}},\widehat{\boldsymbol{\theta}}) & \boldsymbol{0}'_{d_{\tau}} \\
\frac{\partial}{\partial \boldsymbol{\theta}'} \overline{\psi}_{IW,g}(\boldsymbol{Z}_g,\widehat{\boldsymbol{\tau}},\widehat{\boldsymbol{\theta}}) & I_{d_{\tau}}
\end{pmatrix}
\]
Here, $\overline{\psi}_{q,g}$ and $\overline{\psi}_{IW,g}$ decompose the within-group average influence functions into the first and second stages, respectively. Both $\widehat{H}$ and its inverse are lower triangular, which means that the limiting covariance matrix of $\tau$ depends on the upper-left block of $\widehat{\Omega}$ (which captures the first-stage uncertainty).

\subsection{Large Sample Theory}

For the remainder of this section I propose inference procedures for a setting with many groups $G \to \infty$ and allow for the possibility that $N_g$ is either fixed or growing with $G$. This is intended to approximate the situation faced by empirical researchers who randomly collect data from distinct geographic units, with few individuals (classrooms) or many individuals (villages, cities), which matches the data that I use in the empirical example. Formally, I assume that there is a sequence of probability distributions that is indexed by $t$, with $G_t$ groups of unequal size $N_{gt}$, and let $N_t \equiv \mathbb{E}[N_{gt}]$ denote the expected group size. There is a triangular array of covariates for individual $\{ig\}$ for the point $t$ in the sequence, which I denote by $Z_{igt} = (X_{igt},Y_{igt},V_{igt})$. The variables $(L_{igt},T_{igt})$ are the number of treated friends and number of friends, respectively. Similarly, for each $t$, I compute estimators $(\widehat{\boldsymbol{\theta}}_t,\widehat{\boldsymbol{\tau}}_t)$. The estimator $\widehat{\boldsymbol{\tau}}_t$, in particular is compared to the population quantity $\boldsymbol{\tau}_{0t} = \mathbb{E}[\tau_{igt} \mid \mathcal{F}_t]$. Centering the estimator around the mean of the triangular array is important to derive the right rate of convergence. For simplicity, I define $\rho_{gt}$ as the relative group size. Let $0 < \underline{\rho} < \overline{\rho} < 1$ be an arbitrary constant that I use throughout the derivation.

\begin{assumptxt}[Bounded Group Ratios]
	\label{assumptxt:boundedrelative_n}
	\quad $\rho_{gt} \equiv (N_{gt}/N_t) \in [\underline{\rho}, \overline{\rho}] \subset (0,1)$ almost surely.
\end{assumptxt}
\nameref{assumptxt:boundedrelative_n} implies that all groups are approximately the same size, within a range. It implies that the ratio of the largest to the smallest group is bounded by $\overline{\rho}/\underline{\rho}$. This assumption is automatically satisfied when $N_{gt}$ is bounded. However, if $N_{t} \to \infty$ as $t \to \infty$, then the assumption implies that the smallest group size is growing, because $\inf_{g} N_{gt} \ge \underline{\rho}N_t \to \infty$ as $N_t \to \infty$. \cite{bester2016grouped} propose a weaker assumption for large unbalanced panels, where the bounds hold in the limit experiment rather than for each point along the sequence, which leads to qualitatively similar conclusions.

My asymptotic results allow for some or all of the regressors in $V_{igt}$ to be estimated. For example, \cite{johnsson2019estimation} show the estimator $L_{ig}/(N_g-1)$ converges uniformly to a measure of unobserved degree heterogeneity in dense networks, at rate $\sqrt{(\log N_t) / N_t}$ in sup-norm. In related work, \cite{spilloversnoncompliance} find that in randomized experiments with non-compliance, the key dimensions of heterogeneity in spillover models is unobserved but can be consistently estimated in large groups, with a $\sqrt{(\log G_t) /N_t}$ uniform rate of convergence. Finally, researchers may also want to estimate group-level averages of the covariates that are consistent in large groups. I define $V_{igt}^{0}$ as the true, but unobserved value of the regressors. My asymptotic results simply require that $\max_{g=1,\ldots,G_t}\max_{i=1,\ldots,N_{gt}} \Vert V_{igt} - V_{igt}^{0} \Vert = O_p(\lambda_t)$ and that $\sqrt{G_t}\lambda_t = o(1)$. In the two examples above, this means that the expected size of each group needs to be large relative to the number of groups. This is plausible in situations where data is collected on large villages or other geographical units. If the key confounders are observed without error then $V_{igt} = V_{igt}^0$. Otherwise the condition holds trivially and $N_t$ does not need to grow with $G$ at any particular rate.

I list additional \nameref{assumptxt:regularity} in the Appendix, where I impose conditions on the moments of $(X_{igt},Y_{igt})$ and smoothness conditions on the function $\mathbf{Q}_{xx}(\cdot,\boldsymbol{\theta})$. In particular, I provide conditions that ensure that the weighting matrix is almost surely invertible, by imposing a lower bound on the eigenvalues of the matrix. When $V_{igt} = (C_{igt},\Psi_{gt}^*)$ and \nameref{assumptxt:randomsampling}, \nameref{assumptxt:confounders} and \nameref{assumptxt:dyadicnetwork} hold, this is equivalent to saying that $L_{igt}$ is bounded, and that the remaining components of the network propensity score are bounded in a compact subset of the unit interval, i.e. $p_d(C_{igt},\Psi_{gt}^*),p_f(C_{igt},\Psi_{gt}^*) \in [\underline{\rho},\overline{\rho}] \subset (0,1)$. That avoids boundary cases, where there is not enough residual variation in the regressors after conditioning on the controls. Finally, I define two more objects, $H_{0t} \equiv \mathbb{E}[(N_{gt}/N_t)\overline{\psi}_g(\mathbf{Z}_{gt},\widehat{\boldsymbol{\tau}},\widehat{\boldsymbol{\theta}})]$ and $\Omega_{0t} \equiv \mathbb{E}[(N_{gt}/N_t)^2\overline{\psi}_g(\mathbf{Z}_{gt},\widehat{\boldsymbol{\tau}},\widehat{\boldsymbol{\theta}})\overline{\psi}_g(\mathbf{Z}_{gt},\widehat{\boldsymbol{\tau}},\widehat{\boldsymbol{\theta}})']$ that are used to compute the covariance matrix $\Sigma_t \equiv H_{0t}^{-1}\Omega_{0t}H_{0t}^{-1}$.

\begin{thm}[Limiting Distribution Estimators]
	\label{thm:normality_averageeffects}
	Suppose that $V_{igt}$ satisfied the conditions of Theorem \ref{lem:randomcoef_identification} and define the covariance matrix $\Sigma_t \equiv H_{0t}^{-1}\Omega_{0t}H_{0t}^{-1}$. If  \nameref{assumptxt:boundedrelative_n} and  \nameref{assumptxt:regularity} hold, then as $t \to \infty$,  (i) $\widehat{\boldsymbol{\theta}}_t \to^{p} \boldsymbol{\theta}_{0t}$, $\widehat{\boldsymbol{\tau}}_t \to^{p} \boldsymbol{\tau}_{0t}$ and (ii)
	\[
	\sqrt{G_t}\Sigma_t^{-1/2}\left( \begin{matrix} \widehat{\boldsymbol{\theta}}_t - \boldsymbol{\theta}_{0t} \\
	 \widehat{\boldsymbol{\tau}}_t - \boldsymbol{\tau}_{0t}
	 \end{matrix} \right) \to^d \mathcal{N}(\boldsymbol{0},I)
	\]
\end{thm}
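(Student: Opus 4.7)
The plan is to treat the first and second stages as a single stacked Z-estimator with influence function $\psi = [\psi_q, \psi_{IW}']'$ solving the sample moment condition in \eqref{eq:sampleinfluence}, and then apply the standard argmax-consistency plus Taylor-expansion recipe, adapted to a triangular array with clustered observations. The key simplification is that by \nameref{assumptxt:randomsampling}, the group-level influence $\xi_{gt} \equiv (N_{gt}/N_t)\overline{\psi}_g(\mathbf{Z}_{gt},\boldsymbol{\tau}_{0t},\boldsymbol{\theta}_{0t})$ is i.i.d.\ across $g$ for each $t$, so $G_t$ is the effective sample size governing both the law of large numbers and the central limit theorem.

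For consistency I first establish that $\widehat{\mathcal{R}}(\boldsymbol{\theta}) \to^p \mathcal{R}_t(\boldsymbol{\theta})$ uniformly on the compact parameter space $\boldsymbol{\Theta}$. This follows from a uniform LLN for a triangular array of i.i.d.\ groups using the moment bounds and smoothness of $\mathbf{Q}_{xx}(v,\boldsymbol{\theta})$ in \nameref{assumptxt:regularity}, combined with a perturbation argument that absorbs $\max_{ig}\lVert V_{igt} - V_{igt}^{0}\rVert = O_p(\lambda_t) = o_p(1)$ via the assumed Lipschitz continuity in $v$. Since $\mathbf{Q}_{xx}(v,\boldsymbol{\theta}_{0t})$ equals the conditional second moment by correct specification, $\boldsymbol{\theta}_{0t}$ uniquely minimizes $\mathcal{R}_t$, so the argmax theorem gives $\widehat{\boldsymbol{\theta}}_t \to^p \boldsymbol{\theta}_{0t}$. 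Consistency of $\widehat{\boldsymbol{\tau}}_t$ then follows from the continuous mapping theorem applied to the plug-in inverse-weighted sum, where the lower bound on the eigenvalues of $\mathbf{Q}_{xx}$ in \nameref{assumptxt:regularity} keeps the inverse uniformly bounded in a neighborhood of $\boldsymbol{\theta}_{0t}$, and Theorem~\ref{lem:randomcoef_identification} identifies the limit as $\boldsymbol{\tau}_{0t}$.

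For asymptotic normality I perform a mean-value expansion of the stacked first-order condition around $(\boldsymbol{\theta}_{0t},\boldsymbol{\tau}_{0t})$, which yields
\[
\sqrt{G_t}\begin{pmatrix}\widehat{\boldsymbol{\theta}}_t - \boldsymbol{\theta}_{0t}\\ \widehat{\boldsymbol{\tau}}_t - \boldsymbol{\tau}_{0t}\end{pmatrix} = -\,\widehat{H}^{-1}\cdot\frac{1}{\sqrt{G_t}}\sum_{g=1}^{G_t}\xi_{gt}\;+\;o_p(1).
\]
The Jacobian converges, $\widehat{H} \to^p H_{0t}$, by a uniform LLN over a shrinking neighborhood of the truth together with the lower triangular block structure noted in the text (which isolates the effect of first-stage uncertainty on the second-stage block). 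For the leading term, I invoke the Lindeberg--Feller CLT for independent triangular arrays applied to the mean-zero sequence $\{\xi_{gt}\}_{g=1}^{G_t}$: mean-zero holds at the truth because $\psi_q$ is the score of a correctly specified least-squares problem and $\psi_{IW}$ is centered by construction; the Lindeberg condition is verified using the uniform moment bounds on $(X_{igt},Y_{igt})$ in \nameref{assumptxt:regularity}, which together with \nameref{assumptxt:boundedrelative_n} bound $(N_{gt}/N_t)$ away from zero and infinity. Multiplying through by $\Sigma_t^{-1/2}$ delivers the stated joint standard normal limit.

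The main obstacle is controlling the plug-in error from estimated regressors $V_{igt}$, which enters both stages nonlinearly through $\mathbf{Q}_{xx}^{-1}$. A first-order Taylor expansion in $V_{igt}$ shows the induced perturbation in the scaled sample moment is of order $\sqrt{G_t}\cdot\lambda_t$, which is $o(1)$ by assumption, but making this rigorous requires that the maximum-over-$(i,g)$ error---not merely an average---is what enters, and hence leans on the smoothness/Lipschitz bounds in \nameref{assumptxt:regularity} applied uniformly. A secondary subtlety, most relevant when $N_t \to \infty$, is that $\overline{\psi}_g$ is a within-group sum whose variance depends on intra-cluster dependence of $\psi(Z_{igt},\cdot)$; \nameref{assumptxt:boundedrelative_n} together with the moment bounds ensures $\mathrm{Var}(\xi_{gt})$ is bounded and bounded away from zero, so that $\Omega_{0t}$ and its inverse exist, $\widehat{\Omega}\to^p \Omega_{0t}$, and the sandwich estimator $\widehat{\Sigma} = \widehat{H}^{-1}\widehat{\Omega}\widehat{H}'^{-1}/G_t$ remains consistent for $\Sigma_t/G_t$.
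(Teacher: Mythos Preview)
Your proposal is correct and follows essentially the same route as the paper: establish uniform convergence of the criterion to obtain $\widehat{\boldsymbol{\theta}}_t \to^p \boldsymbol{\theta}_{0t}$ via the argmax argument, then Taylor-expand the stacked first-order condition in both $(\boldsymbol{\theta},\boldsymbol{\tau})$ and the generated regressor $V_{2igt}$, show the generated-regressor term is $O_p(\sqrt{G_t}\lambda_t)=o_p(1)$ using the uniform Sobolev-norm bounds, verify the Jacobian converges to $H_{0t}$ by a uniform LLN over a shrinking ball, and apply the Lindeberg--Feller CLT to the i.i.d.\ group-level summands $\rho_{gt}\overline{\psi}_g$. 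The paper carries out exactly this program, with the supporting Lemmas~\ref{lem:uniformboundsderivatives}, \ref{lem:Op_bounds_averages}, and \ref{lem:uniformconvergence_sampleaverages} supplying the integrable envelopes and the uniform-LLN-with-generated-regressors machinery you invoke informally.
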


Theorem \ref{thm:normality_averageeffects} shows that the estimators are consistent and converge to a normal distribution. The estimator is centered around the value of $(\boldsymbol{\theta}_{0t},\boldsymbol{\beta}_{0t})$ that solves the population criterion, at each point of the sequence. This allows for estimators that are consistent, even if the networks itself does not converge to any particular structure. Theorem \ref{thm:normality_averageeffects} can be viewed as an approximation to the finite sample behavior. Researchers can construct test statistics by substituting $\Sigma_t$ with a sample analog $\widehat{\Sigma}_t$ to confidence intervals.

My results are agnostic about the dependence structure across groups, but it may be possible to improve the $\sqrt{G_t}$ to $\sqrt{G_tN_t}$ under stronger conditions. For example, \cite{kojevnikov2020limit} develop a central limit theorem for network dependence and provide specific regularity conditions for a single  \nameref{assumptxt:dyadicnetwork}. This requires the network to be sparse $L_{igt}$ small relative to $N_{gt}$ so that individuals far apart in the network are approximately independent. In practice, this does not change the estimation procedure but rather the way in which we construct confidence intervals. \cite{kojevnikov2020limit} propose a Network-HAC estimator and \cite{kojevnikov2019bootstrap} proposes a bootstrap procedure. \cite{leung2019weak} proposes similar limiting theory for spillover effects when the treatment is exogenously assigned, and \cite{chandrasekhar2014tractable} propose alternative limit theorems under network dependence.

\subsection{Covariate Balancing (``Placebo'') Test}

The balancing property in Theorem \ref{thm:expostbalancing} is testable. Parametric propensity score analyses typically conduct so-called covariate balancing tests. I propose an analogous ``placebo'' test, where the pretreatment covariates serve as an outcome variable. Let $\widetilde{V}_{ig}\in \mathbb{R}$ be a variable in the covariate set $V_{ig} = (C_{ig},\Psi_g)$. My test relies on the simple idea that $\widetilde{V}_{ig}$ can be decomposed as
\[
\widetilde{V}_{ig} = \underbrace{\widetilde{V}_{ig}}_{\widetilde{\alpha}_{ig}} + 0 \times D_{ig} + 0 \times \varphi(i,A_g,D_g,C_g) +  0 \times D_{ig} \times \varphi(i,A_g,D_g,C_g)
\]
Let $\widetilde{\tau}_{ig} = (\widetilde{V}_{ig},0,0,0)$ is the vector of coefficients of the placebo outcome. It is easy to verify that $X_{ig} \ \indep \ \widetilde{\tau}_{ig} \mid V_{ig}$ since $\widetilde{\tau}_{ig}$ is a measuarable function of $V_{ig}$. Therefore by  Theorem \ref{lem:randomcoef_identification},   $\mathbb{E}[\mathbf{Q}_{xx}(V_{ig})^{-1}X_{ig}\widetilde{V}_{ig}] = (\mathbb{E}[\widetilde{V}_{ig}],0,0,0)$. Therefore, when $\mathbf{Q}_{xx}(v)$ is properly specified the researcher can test the null hypothesis that the slope coefficients are zero. This test only uses information about the treatment, the network and the covariates but not the outcome. In practice the test could be rejected in a parametric settings if the functional form is not flexible enough. However, it could also be rejected because a violation of the over-identifying restrictions imposed by the \nameref{assumptxt:randomsampling} and \nameref{assumptxt:dyadicnetwork} assumptions. The researcher may want to check whether there are omitted variables that might influence network formation or treatment.


\section{Empirical Examples}
\label{example}

\subsection{Political Participation in Uganda}

I evaluate the role of an intervention on political participation in Uganda  \citep{eubank2019viral,ferrali2020takes}. U-Bridge is a novel political communications technology that allows citizens to contact district officials via text-messages. In a pilot program, individuals in 16 villages were invited to participate in quarterly meetings, at a central location, where they received information about national service delivery standards and ways
to communicate with local officials. The Governance, Accountability, Participation, and Performance (GAPP) program collected survey data on 82\% of adults in the 16 villages as well as social network data. \cite{ferrali2020takes} evaluated the adoption patterns of U-Bridge a couple years later. \cite{eubank2019viral} study the role of social network structure on voting patterns. For my analysis, I evaluate the impact of attendance to UBridge meetings on political participation using the network propensity score matching methodology. Spillovers are likely to occur in this context because non-participants can receive information about ways to engage in politics from their friends, which can increase their own political activity. 
 
The data collected by the researchers contains four types of social networks: Family ties, friendships, lenders and problem solvers. In my analysis, $\{ig\}$ is an identifier for an adult in the pilot villages. The indicator $A_{ijg}$ equals one if $\{ig\}$ and $\{jg\}$ have a connection along any of the four dimensions and zero otherwise. Under this definition, individuals have 10 connections on average. The indicator $D_{ig}$ equals one if $\{ig\}$ attended the Ubridge meetings, which is around 8.6\% of the sample. The outcome is a continuous variable $Y_{ig}$ that denotes a political participation index constructed by \cite{ferrali2020takes}. Table \ref{table:descriptivesuganda} presents summary statistics comparing the treatment and control group. The average adult in the sample is around 40 years old. Men are more likely to attend the session than women. Individuals that a leader position and/or completed their secondary education are more likely to attend as well.

I estimate the following linear model with random coefficients.
\begin{equation}
Y_{ig} = \alpha_{ig} + \beta_{ig}D_{ig} + \gamma_{ig}\left( \frac{T_{ig}}{L_{ig}} \right) + \delta_{ig}\left( \frac{T_{ig}}{L_{ig}} \right)
\label{eq:equationuganda}
\end{equation}
Heterogeneity of $\beta_{ig}$ means that agents engage in varying levels of political activity after attending the meeting. In this case, we expect $\beta_{ig}$ to be close to zero because individuals that are already politically engaged are the ones opting to go to the meetings. Conversely, $\gamma_{ig}$ is the effect of peers on non-participant adults. If $\gamma_{ig} > 0$, then individuals with a larger fraction of treated friends are more politically active. The coefficient $\gamma_{ig} + \delta_{ig}$ captures the spillovers for participants. In this case we expect $\delta_{ig} <0 $ because the marginal effect of attending friends is lower because they are already receiving the information first hand.

There is a potential identification in this example because individuals select connections with similar preferences.  We expect $(\gamma_{ig},\delta_{ig})$ to be correlated with $(T_{ig}/L_{ig})$. To address this problem I leverage additional covariates collected by the researcher to tease out the causal effects. The network propensity score matching methodology is the appropriate tool to identify the average partial effect $\tau$ because it allows to incorporate additional covariates while allowing for heterogeneous causal effects $\tau_{ig} = (\alpha_{ig},\beta_{ig},\gamma_{ig},\delta_{ig})$.

\subsection{Feasible Network Propensity Score and Causal Effects}

The propensity score in this case describes the probability of attending an Ubridge meeting given covariates $C_{ig} = (C_{ig1},\ldots,C_{igK})$. These include an indicator for holding a leadership position in the village, gender, an indicator for secondary education, a self-reported relative income measure, distance to the meeting place, number of friends and age. \cite{ferrali2020takes} also incorporated a public goods question where participants were asked to donate part of their remuneration to the village that were match researchers. The donation amount is meant to capture pro-sociability attitudes.

I assume that the group-level variation $\Psi_g^*$ has an observed and an unobserved component. For the observed component, I include a vector of group-level averages of the key variables in $C_{ig}$, which I denote by $\Psi_g$. I assume that $\Psi_g^*$ has a bivariate structure with mean $(\Psi_g'\boldsymbol{\theta}_{d\Psi},\Psi_g'\boldsymbol{\theta}_{d\Psi})'$, where $(\boldsymbol{\theta}_{d\Psi},\boldsymbol{\theta}_{f\Psi})$ is a vector of parameters to be estimated. The error term of $\Psi_g^*$ follows a normally distributed random-effects structure with covariance matrix $\Sigma \equiv (\sigma_{11}^2,\sigma_{12},\sigma_{12},\sigma_{22}^2)$, that is assumed to be independent of the observed covariates and the random coefficients $\tau_{ig}$. The coefficient $\sigma_{12}$ captures the correlation between the two unobserved components of $\Psi_{g}^*$. Formally,
\[
\Psi_g^* = \left( \begin{matrix} \Psi_{gd}^* \\  \Psi_{gf}^* \end{matrix} \right) \sim \mathcal{N}(\mu_g,\Sigma), \qquad \mu_g = \begin{pmatrix}  \Psi_{g}'\theta_{d\Psi} \\  \Psi_{g}'\theta_{f\Psi} \\  \end{pmatrix}, \qquad \Sigma = \left( \begin{matrix} \sigma_1^2 & \sigma_{12} \\ \sigma_{12} & \sigma_{2}^2 \end{matrix}\right).
\]
I assume that the own propensity score takes the form of a logit function with an associated vector of parameters $\boldsymbol{\theta}_{d} = (\theta_{d0},\theta_{d1},\ldots,\theta_{dK})$ as follows
\[
p_d(C_{ig},\Psi_g^*;\boldsymbol{\theta}_d) = \frac{\exp(\theta_{d0} + \sum_{k=1}^{K} C_{igk}\theta_{dk} + \Psi_{gd}^* ) }{1+\exp(\theta_{d0} + \sum_{k=1}^{K} C_{igk}\theta_{dk} + \Psi_{gd}^* )}
\]
I similarly construct the friend propensity score using a logit link function. I use the same observables variables as the friend friend propensity score with different coefficients $\boldsymbol{\theta}_{f} = (\theta_{f0},\theta_{f1},\ldots,\theta_{df})$ as follows 
\[
p_f(C_{ig},\Psi_g^*; \boldsymbol{\theta}_f) = \frac{\exp(\theta_{f0} + \sum_{k=1}^{K} C_{igk}\theta_{fk}+ \Psi_{gf}^*   )}{1+\exp(\theta_{f0} + \sum_{k=1}^{K} C_{igk}\theta_{fk} + \Psi_{gf}^* )}
\]

The full vector of parameters to be estimated is
\[
\boldsymbol{\theta} \equiv (\boldsymbol{\theta}_{d\Psi},\boldsymbol{\theta}_{f\Psi},\sigma_{1}^2,\sigma_{12},\sigma_2^2,\boldsymbol{\theta}_{d},\boldsymbol{\theta}_{f}).
\]
Let $F(\Psi_g^* ; \boldsymbol{\theta})$ is the distribution of unobserved heterogeneity, which corresponds to that of a normal distribution with parameters $(\mu_g,\Sigma)$. I construct a weighting matrix that satisfies the mixture model representation of Lemma \ref{lem:mixturepresentationQxx}, where $V_{ig} = (C_{ig},\Psi_g,L_{ig})$. To simplify notation I define the auxiliary matrix
\[
\Lambda(C_{ig},\Psi_g^*,L_{ig};\boldsymbol{\theta}) = \begin{pmatrix} 1 & p_f(C_{ig},\Psi_g^*;\boldsymbol{\theta}_f)\\ p_f(C_{ig},\Psi_g^*;\boldsymbol{\theta}) & \frac{p_f(C_{ig},\Psi_g^*;\boldsymbol{\theta}_f)(1-p_f(C_{ig},\Psi_g^*;\boldsymbol{\theta}_f))}{L_{ig}} + p_f(C_{ig},\Psi_g^*;\boldsymbol{\theta}_f)^2  \end{pmatrix}
\]
The feasible weighting matrix is equal to
\begin{align}
	\mathbf{Q}_{xx}(V_{ig};\boldsymbol{\theta}) = \int \Lambda(C_{ig},\Psi_g^*,L_{ig};\boldsymbol{\theta}) \otimes \begin{pmatrix} 1 & p_d(C_{ig},\Psi_g^*;\boldsymbol{\theta}) \\ p_d(C_{ig},\Psi_g^*;\boldsymbol{\theta}) & p_d(C_{ig},\Psi_g^*;\boldsymbol{\theta}) \end{pmatrix} dF(\Psi_g^* ; \boldsymbol{\theta}).		
\end{align}
where I evaluate the integral numerically via quadrature methods and estimate the parameter $\boldsymbol{\theta}$ by minimizing the sample criterion function in \eqref{eq:samplecriterion}.

\begin{table}[t]
	\centering
	\begin{tabular}{lcccc}
		\hline
		& \multicolumn{2}{c}{\textit{Network Propensity}} & \multicolumn{2}{c}{\textit{OLS with   covariates}}          \\
		& \textit{Coefficients}                & \textit{Std. Error}              & \textit{Coefficients}                     & \textit{Std. Error} \\
		\hline
		Direct Effect $(\beta)$  & 0.270            & (0.165)           & \multicolumn{1}{c}{-0.010} & (0.060)                  \\
		Spillover Effect $(\gamma)$ &       0.348***          & (0.116)             & \multicolumn{1}{c}{0.156**} & (0.068)   \\
		Interaction $(\delta)$ & -0.199               & (0.862)             & \multicolumn{1}{c}{0.563***} & (0.165)  \\
		\hline
		N & 2831 & & 2831 & \\
		Villages & 16 & & 16 \\
		\hline
	\end{tabular}
	\caption[Average partial effects political participation in Uganda]{\footnotesize \textbf{(Average Partial Effects Political Participation in Uganda)} * Significant at 10\%. ** Significant at 5\%. *** Significant at 1\%. The second and third columns show the coefficients and standard errors of the inverse-weighted estimator, respectively. The fourth and fifth columns are the coefficients of am additive ordinary least squares (OLS) regression that regresses $Y_{ig}$ on a constant, $D_{ig}$, $(T_{ig}/L_{ig}), D_{ig} \times (T_{ig}/L_{ig})$ and the observed controls used in the inverse-weighting procedure.}
	\label{table:causaleffectsunweighteduganda}
\end{table}

Table \ref{table:ugandanetworkpropensityscore} reports the estimated parameters. Column (2) shows the coefficients of the propensity score. None of the variables in $C_{ig}$ appears to be statistically significant. Column (3) reports the coefficients of the friend propensity score, which are far more interesting. The evidence suggests that individuals that hold a leadership position and have completed a higher education or more likely to have a treated friend. Similarly individuals in villages where individuals perceive themselves as wealthier are more likely to see engagement with the U-Bridge sessions. Figure \ref{fig:estimatednetpscoreuganda} plots the propensity score and friend propensity score, integrating out the heterogeneity $\Psi_g^*$. Each score contains complementary information about the selection patterns. Finally to test the fit of the model I run a covariate test / placebo test by replacing the outcome variable in \eqref{eq:equationuganda} with each of the controls used in the analysis. None of the placebo coefficients are statistically significant for 16 out of the 20 variables. There are slight imbalances on one the relative income indicators, the distance to meeting and the average sociability.

\begin{figure}[t]
	\centering
	\begin{subfigure}{.45\textwidth}
		\centering
		\includegraphics[scale=0.09]{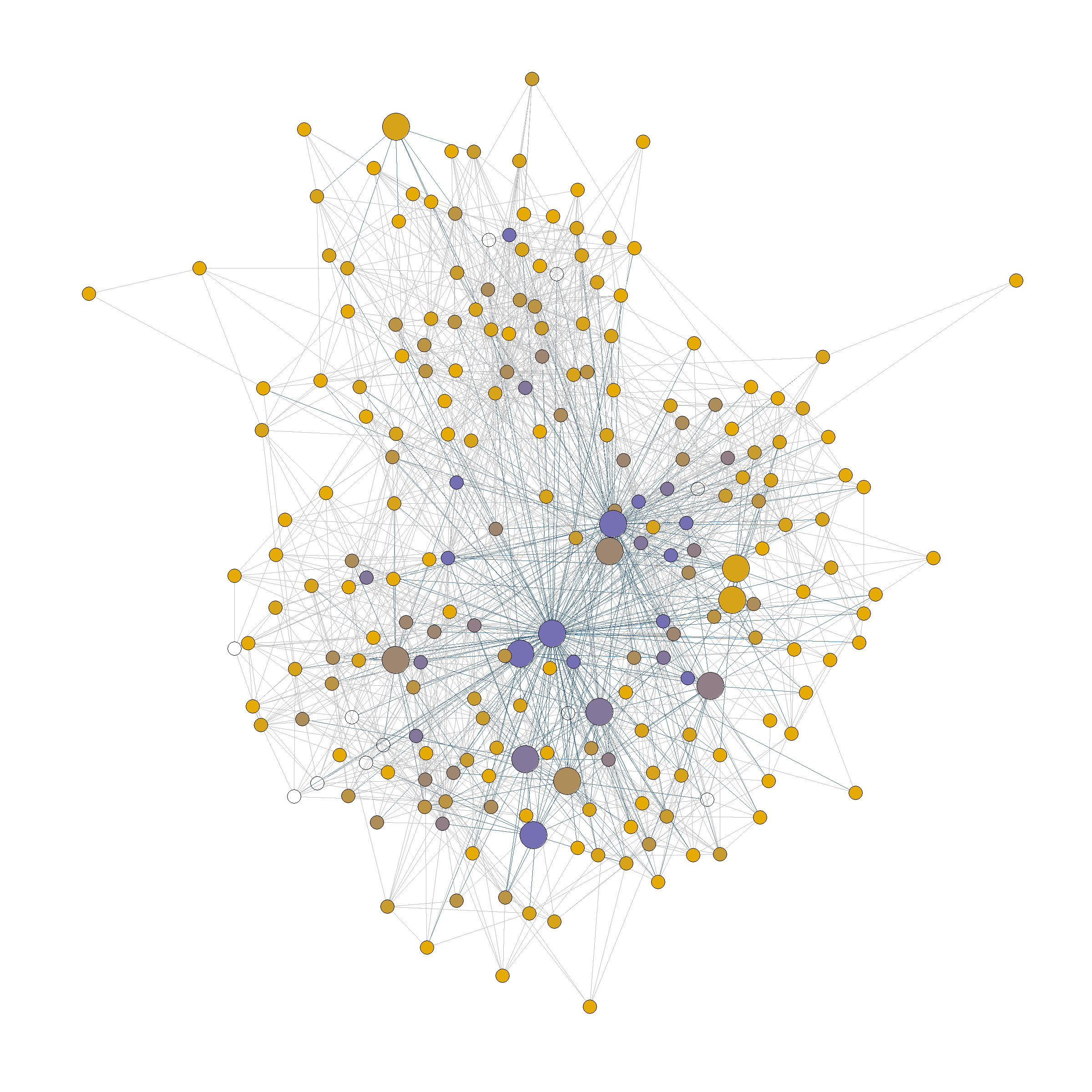}
		\caption{Histogram of the number of friends}
		\label{fig:histogramnumfriends1}
	\end{subfigure}%
	\begin{subfigure}{.45\textwidth}
		\centering
		\includegraphics[scale=0.09]{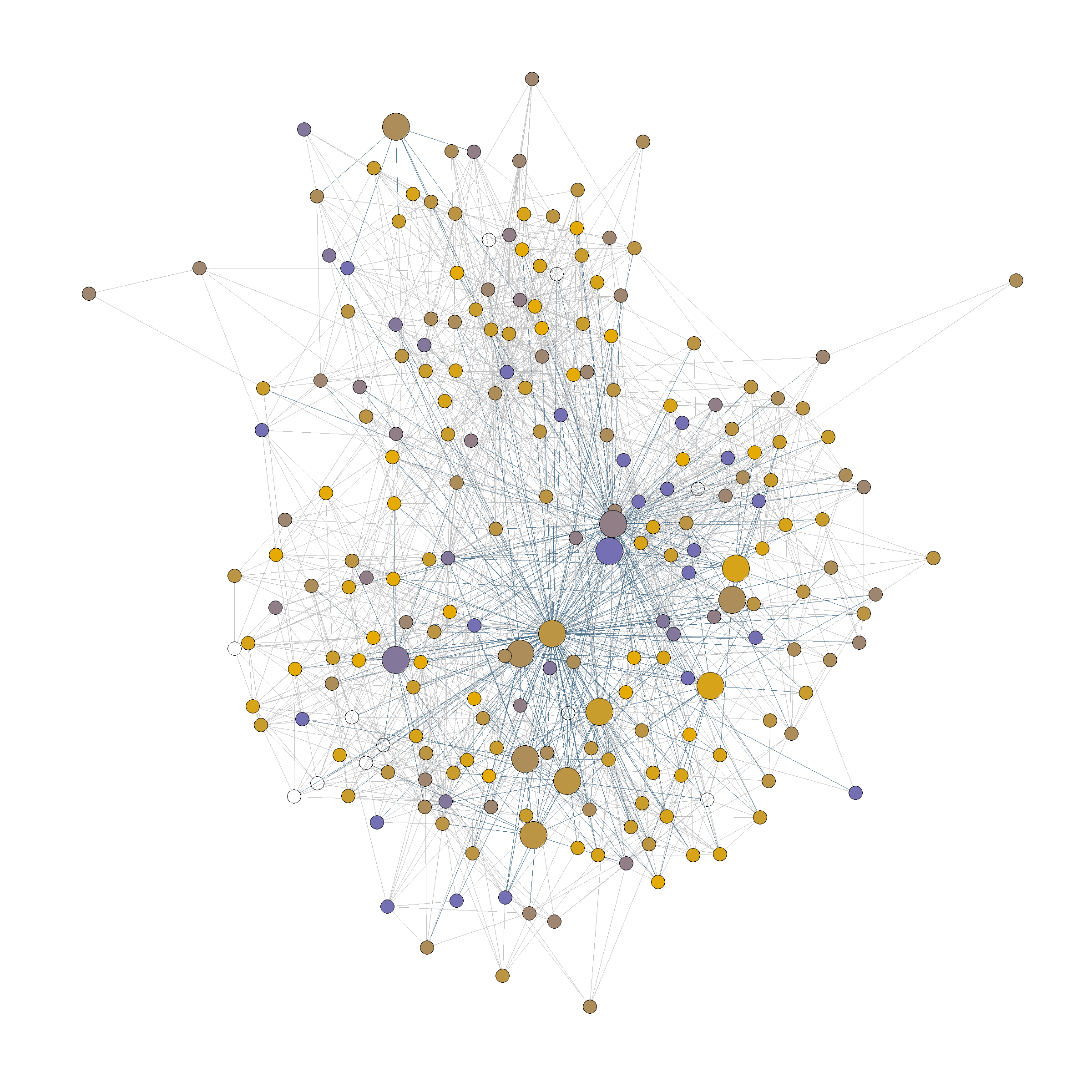}	
		\caption{Histogram of same-caste friends}
		\label{fig:histogramcastefriends2}
	\end{subfigure}
	\caption[Example estimated network propensity score]{The figure shows the estimated $p_{dig}$ and $p_{fig}$ for the network graph of one Ugandan village. Individuals are represented as nodes, and the links between them represent the relationships reported in the baseline survey. Treated individuals are represented with larger nodes. In figure (a) a darker shade of blue indicates a higher estimated probability of treatment, whereas a darker shade of yellow indicates a low probability. Analogously, in figure (b) a darker shade of blue indicates a larger probability of friend treatment.}
	\label{fig:estimatednetpscoreuganda}
\end{figure}

Table \ref{table:causaleffectsunweighteduganda} reports estimates of the average partial effects. Column (2) shows the coefficients under the network propensity approach. The direct effect $\beta$ is positive but not statistically significant at the 10\% level. The spillover effect $\delta$ increases the participation index by 0.348 points, which is significant at the 1\% level. This effect is quantitatively large relative to the standard deviation of the political participation index, which is around 0.567 points. This finding appears to suggest that the intervention had a large spillovers on non-participants, who increased their political activity. The interaction coefficient $\delta$ is negative but not statistically significant at the 1\% level. The results are consistent with the idea that the intervention had limited effects direct treatment effects, but promoted spillover effects on participants' social connections. Column (3) shows benchmark coefficients from an OLS regression with additive covariates. On one hand, the OLS coefficient of $\beta$ is also not statistically significant at the 10\% level. On the other hand, the OLS coefficient of $\gamma$ is statistically significant but roughly half the size of the network propensity estimate. Finally, the coefficient of $\delta$ is positive and statistically significant. The discrepancies in the results for $\gamma$ and $\delta$ can be explained by interactive spillover effects $\gamma_{ig}$ and $\delta_{ig}$ that are not captured by the additive OLS model.

\subsection{Microfinance Adoption in India}

In this section I re-evaluate a program that encouraged the adoption of microfinance in rural areas of Southern India, by inviting select households to participate in an information about the program \citep{banerjee2013diffusion}. Participant households were more likely to take out a loan. Spillovers are likely to occur in this context due to information transmission between participants and non-participants, and peer pressure to adopt.

The outcome is a binary variable $Y_{ig}$ that is equal to one if household $\{ig\}$ took out a loan when researcher followed-up a few months later. I estimate the following linear probability model with random coefficients.
\begin{equation}
Y_{ig} = \alpha_{ig} + \beta_{ig}D_{ig} + \gamma_{ig}\left( \frac{T_{ig}}{L_{ig}} \right) + \delta_{ig}\left( \frac{T_{ig}}{L_{ig}} \right)
\label{eq:equationbanerjee}
\end{equation}
Heterogeneity of $\beta_{ig}$ in the microfinance example means that some households are more likely to take-out a loan after the information session than others. Conversely, heterogeneity of $\gamma_{ig}$ and $\delta_{ig}$ means that not every household is equally likely to get in debt after receiving information from their friends. The coefficient $\delta_{ig}$ is the difference in spillovers effects between participant and non-participant households.

Identification of the average partial effect $\tau \equiv (\alpha,\beta,\gamma,\delta)$ is particularly challenging in this setting, however, because the treatment was not randomly assigned.  The microfinance organization followed a fixed targeting strategy in each village, that selected shopkeepers, teachers and related occupations. However, Table \ref{table:table1summary} shows that treated households were wealthier; they were more likely to have stone or concrete houses as opposed to tile or thatch, have private electricity, more bedrooms, and own a latrine. For instance, the treated were 13.45\% more likely to have access to some form of sanitation, with either a private or public latrine. These differences are statistically significant at the 5\% level, using clustered standard errors by village. There were also significant differences by caste, a hereditary social category that still defines many social boundaries, with household of so-called ``general caste'' more likely to be treated as opposed to minorities.

To measure social network links, \cite{banerjee2013diffusion} collected  twelve different definitions of the network at baseline, including favor exchange, commensality and community activities. I choose a conservative definition of the network, such that $A_{ijg}$ is equal to one if respondents reported a link along any of the dimensions. Figure \ref{fig:histogramnumfriends} plots the resulting degree distribution, which shows that the treated had a higher number of friends. Households have around ten friends on average, which is around 5\% of the average village size. Figure \ref{fig:histogramcastefriends} shows that households reported that most of their friends were in the same broad caste category. A significant portion of the households reported that \textit{all} of their friends were in the same category. The histogram shows that the treated had more diversified friendships, in the sense that they had fewer friends of the same caste.

\begin{figure}[t]
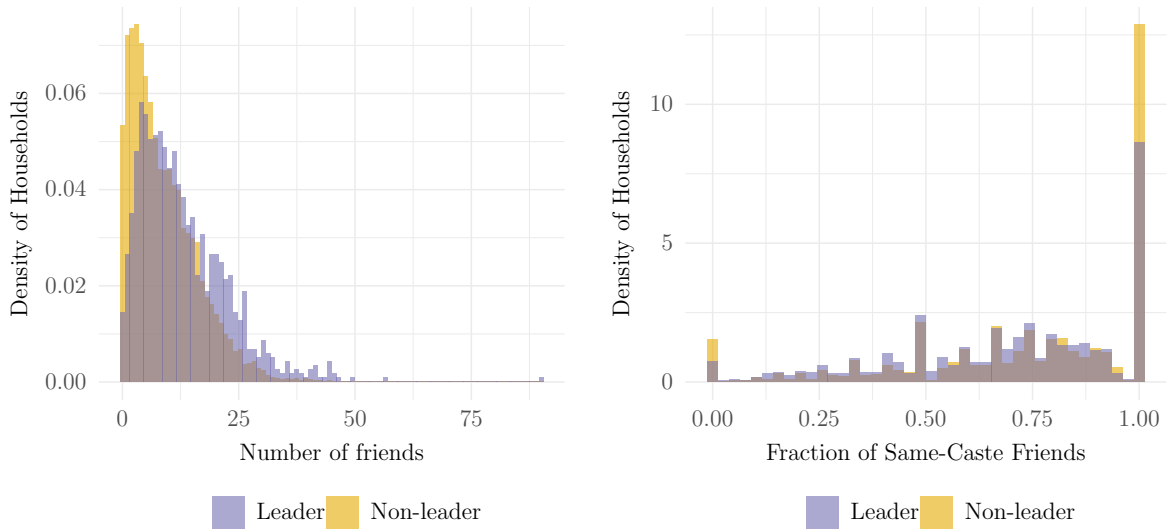

	\centering
	\begin{subfigure}{.5\textwidth}
		\centering
		\resizebox{\textwidth}{!}{
			\input{tables/outputfigures/histrogramfriends.tex}			
		}
		\caption{Histogram of the number of friends}
		\label{fig:histogramnumfriends}
	\end{subfigure}%
	\begin{subfigure}{.5\textwidth}
		\centering
		\resizebox{\textwidth}{!}{
		\input{tables/outputfigures/histogramfriendscaste.tex}
		}		
		\caption{Histogram of same-caste friends}
		\label{fig:histogramcastefriends}
	\end{subfigure}
	\caption[Histograms of overall friends and same-caste friends in India]{Figure (a) shows a histogram with the number of friends of each households, broken down by the treated and control households. Leaders tend to have a higher number of friends. Figure (b) shows a histogram with the fraction of same caste-friends. The general survey which contains information on five broad categories ``General'', ``Minority'', ``OBC'', ``Scheduled Caste'' and ``Scheduled Tribe''. I computed the fraction of treated friends for each household in the same caste category.}
	\label{fig:histogramnetworkmeasures}
\end{figure}

To estimate the network propensity score I use the same specification as in the example for Uganda. The second and third columns of Table \ref{table:tablenetworkpropensityunweighted} show the coefficients of the own propensity score and the corresponding standard errors. The structural parameters confirm the descriptive evidence. The number of rooms in the house, as well as the access to sanitation and electricity are statistically significant at the 5\% level. Individuals of general caste and more connections, are more likely to be part of the program, even after accounting for asset measures. The observed group covariates are not statistically significant at the 10\% level. Conversely, the fourth and fifth columns show estimated coefficients of the friend propensity score and their standard errors. Only the sociability index and the general caste indicator are statistically significant. This suggests that caste plays a crucial role on the interplay between homophily and selection. Treated individuals of general caste are more likely to befriend other treated individuals in their same caste category. The results also show that the unobserved heterogeneity parameters are not statistically significant at the 10\% level.

\begin{table}[t]
	\centering
	\begin{tabular}{lcccc}
		\hline
		& \multicolumn{2}{c}{\textit{Network Propensity}} & \multicolumn{2}{c}{\textit{OLS with   covariates}}          \\
		& \textit{Coefficients}                & \textit{Std. Error}              & \textit{Coefficients}                     & \textit{Std. Error} \\
		\hline
		Direct Effect $(\beta)$  & 0.096**            & (0.046)           & \multicolumn{1}{c}{0.077***} & (0.029)                  \\
		Spillover Effect $(\gamma)$ &       0.092          & (0.091)             & \multicolumn{1}{c}{0.026} & (0.036)   \\
		Interaction $(\delta)$ & -0.102               & (0.292)             & \multicolumn{1}{c}{0.000} & (0.121)  \\
		\hline
		Village Controls & Yes & & Yes \\
		N & 7480 & & 7480 & \\
		Villages & 43 & & 43 \\
		\hline
	\end{tabular}
	\caption[Average partial effects microfiance in India]{\footnotesize \textbf{(Average Partial Effects Microfinance in India)} * Significant at 10\%. ** Significant at 5\%. *** Significant at 1\%. The table shows the coefficients of the causal effects. The second and third columns show the coefficients and standard errors of the inverse-weighted estimator, respectively. The fourth and fifth columns are the coefficients of a ordinary least squares (OLS) regression that regresses $Y_{ig}$ on a constant, $D_{ig}$, $(T_{ig}/L_{ig}), D_{ig} \times (T_{ig}/L_{ig})$ and the observed controls used in the inverse-weighting procedure. This sample merges the census-level data with a detailed survey for a random subsample of households, to fill in missing caste data. The sample excludes households without friends, households with more than 30 friends, and those that have missing caste or electricity data, which is 0.77\% of the overall sample. The standard errors are clustered at the village level.}
	\label{table:causaleffectsunweighted}
\end{table}

Table \ref{table:causaleffectsunweighted} computes the treatment effects using my proposed inverse-weighting (IW) procedure and an ordinary least squares (OLS) regression that includes the covariates as additive controls. The IW results show that participants in the information session (leaders) are 8.5\% more likely to take-out a microfinance loan after controlling baseline characteristics, and is significant at the 1\% level. The value of the direct effect is 1\% higher than the effect estimated by OLS. The OLS regression only controls for additive heterogeneity, but it does not account for the possibility of heterogeneous slopes/treatment effects. The spillover effect is not significant in either case. That means that local variation in treated friends does not affect the outcome, on average.



\section{Conclusion}
\label{conclusion}

Many programs offered by governments and non-profit organizations are not randomly assigned. Individuals typically select into treatment based on a set of eligibility criteria. Furthermore, social networks typically exhibit homophily: individuals tend to befriend others with similar characteristics. The interaction between selection and friendship \textit{homophily} is not well understood, and hence the strategies to identify spillovers from observed social networks in this context are underdeveloped. This is particularly important for policy evaluators because estimating spillovers is crucial for cost-benefit calculations and understanding potential side-effects on non-participants.

This paper proposes a novel strategy for identifying average treatment effects and average spillover effects in settings with endogenous network formation and selection on observables. In particular I show that controlling for the key determinants of friendship decisions in a dyadic network model can account for possible confounders in the estimation of spillovers. I introduce a lower dimensional statistic, the network propensity score, which summarizes the key confounders and illustrates the crucial interplay between homophily and selection. I propose a two-step semiparametric estimator of the average effects in a class of random coefficient models, which is consistent as the number and size of the network grows.   I apply my estimator to an intervention to encourage political participation in Uganda where I find evidence of spillovers on non-participants, and a microfinance application in India, where I document large direct effects but no meaningful local spillovers.

\newpage

\appendix
\numberwithin{equation}{section}
\numberwithin{table}{section}
\numberwithin{figure}{section}

\newpage 
\onehalfspacing
\bibliographystyle{elsarticle-harv}
\bibliography{bibliographynetworks}
\newpage

\newpage

\begin{table}[ht]
	\small
	\centering
	\begin{tabular}{lcclc}
		\hline 
		& Treated & Control & Difference & Std. Error \\ 
		\hline
		Political Participation  & 0.370 & -0.035 & 0.406*** & (0.043) \\
		Leader & 0.288 & 0.132 & 0.156*** & (0.027) \\
		Prosociabililty Index & 0.198 & 0.196 & 0.001 & (0.011) \\
		Female & 0.275 & 0.606 & -0.331*** & (0.044)  \\
		Secondary Education & 0.458 & 0.207 & 0.251*** & (0.046) \\
		Relative income: Low & 0.296 & 0.278 & 0.018 & (0.043)\\
		Relative income: Avg & 0.108 & 0.103 & 0.005 & (0.015) \\
		Relative income: High & 0.375 & 0.324 & 0.051 & (0.034) \\
		Relative income: Very High & 0.025 & 0.022 & 0.002 & (0.010) \\
		Distance to meeting & 1.702 & 1.788 & - 0.086 & (0.163) \\ 
		Number of Friends & 11.775 & 9.413 & 2.362*** & (0.326) \\
		Age & 40.504  & 37.090 & 3.415*** & (0.101) \\
		\hline
		N & 250 & 2591 \\
		Villages & 16 & 16 \\
		\hline
	\end{tabular}	
	\caption[Summary statistics Uganda]{\footnotesize \textbf{(Summary statistics political participation in Uganda)} Differences between leader households selected by the microfinance organization and non-leader households. All the variables are measured at baseline. This sample merges the census-level data with a detailed survey for a random subsample of households, to fill in missing caste data. The sample excludes households without friends, households with more than 30 friends, and those that have missing caste or electricity data, which is 0.77\% of the overall sample. The standard errors are clustered by village.}
	\label{table:descriptivesuganda}
\end{table}

\begin{table}[ht]
	\small
	\centering
	\begin{tabular}{lcccc}
		\hline
		& \multicolumn{2}{c}{\textit{Own Propensity Score}} & \multicolumn{2}{c}{\textit{Friend Propensity Score}} \\
		\textbf{}                & \textit{Coefficient}                             & \textit{Std. Error.}                   & \textit{Coefficient}                                  & \textit{Std. Error}          \\
		\hline
Leader                              & 0.662  & (1.154)  & 0.116*** & (0.037) \\
Sociability Index                   & -1.122 & (1.428)  & -0.157   & (0.128) \\
Female                              & -1.383 & (1.712)  & -0.185   & (0.055) \\
Has secondary education             & 0.876  & (1.503)  & 0.179*** & (0.050) \\
Relative income: Somewhat worse         & 0.411  & (0.431)  & 0.013    & (0.047) \\
Relative income: About the same     & 0.15   & (0.367)  & 0.079*   & (0.046) \\
Relative income: Somewhat better    & 0.296  & (0.325)  & 0.022    & (0.057) \\
Relative income: Much better        & 0.137  & (0.758)  & -0.111   & (0.125) \\
Distance to meeting                 & -0.191 & (0.430)  & -0.084   & (0.042) \\
Number of friends                   & 0.187  & (0.263)  & -0.03    & (0.010) \\
Age                                 & 0.17   & (0.210)  & 0.027    & (0.020) \\
Share of leaders in village         & 0.167  & (10.215) & -2.406   & (2.433) \\
Average sociability index           & -8.028 & (11.269) & -8.034   & (3.334) \\
Share of women in village           & -2.332 & (12.777) & -9.321   & (4.281) \\
Share of high-school educated       & 1.428  & (4.434)  & 0.322    & (1.748) \\
Share reporting "Somewhat   worse"  & 0.913  & (15.164) & -2.484   & (4.092) \\
Share reporting "About the same"  & 16.559 & (17.025) & 9.536*** & (2.345) \\
Share reporting "Somewhat   better" & 4.024  & (17.623) & 1.501    & (4.555) \\
Average distance to meeting       & 0.233  & (0.701)  & -0.07    & (0.122) \\
Average age         & -1.192 & (2.598)  & -1.167   & (0.544) \\
$\log(\sigma_1)$                   & 1.697  & (3.071)  &          &         \\
$\sigma_{12}$                       &        &          & 0.178    & (0.190) \\
$\log(\sigma_2)$                    &        &          & -3.324   & (0.323) \\
Constant                            & -2.168 & (16.170) & 9.934*** & (3.328) \\
		\hline  
		Number of Observations & 2,831 &  & 2,831 \\
		Number of Villages & 16 &  & 16 \\
		\hline
	\end{tabular}
	\caption[Network propensity score parameters Uganda]{\footnotesize \textbf{(Network Propensity Score Parameters Uganda)} * Significant at 10\%. ** Significant at 5\%. *** Significant at 1\%. Columns (2) and (4) show the estimated coefficients for propensity score and friend propensity scores, respectively. Columns (3) and (5) show the corresponding standard errors, that are clustered by village. The relative income asks how an individual's perceives her household income relative the typical household. The baseline category is "Much worse than the typical household". I dropped the ``Share reporting: Much Better" variable because there was very little variation (only 2\% of the sample marked this category). The bottom half of the table reports village-level averages and shares of the key variables. I omit the share for the "Much better" category because there are two few individuals. The bottom rows displays the parameters of the covariance matrix of the unobserved heterogeneity parameters. The sample for the table excludes households without friends and missing data on distance to meeting, gender, age and income.}
	\label{table:ugandanetworkpropensityscore}	
\end{table}

\newpage
\begin{table}[ht]
	\small
	\centering
	\begin{tabular}{lcccccccc}
		\hline
		& \multicolumn{2}{c}{$\widetilde{\beta}$} & \multicolumn{2}{c}{$\widetilde{\gamma}$} & \multicolumn{2}{c}{$\widetilde{\delta}$} \\
		\textbf{}                & \textit{Coeff.}                             & \textit{Std.}                   & \textit{Coeff.}                                  & \textit{Std.}  & \textit{Coeff.}                                  & \textit{Std.}       \\
		& & \textit{Error} & & \textit{Error} & & \textit{Error} \\
		\hline
Leader                              & 0.039  & (0.217)  & 0.097   & (0.081) & 0.097   & (3.970)  \\
Pro-Sociability Index               & 0.627  & (1.961)  & -0.091  & (0.107) & -1.637  & (0.359)  \\
Female                              & 0.064  & (0.111)  & 0.083   & (0.150) & -0.242  & (2.802)  \\
Has secondary education             & 0.399  & (1.454)  & 0.079   & (0.102) & -0.946  & (3.161)  \\
Income: Somewhat worse     & 0.456  & (1.463)  & 0.123   & (0.097) & -1.323  & (14.137) \\
Income: About the same     & 1.509  & (6.494)  & 0.571   & (0.638) & -3.92   & (58.105) \\
Income: Somewhat better    & 7.395  & (26.897) & 4.249** & (1.917) & -20.639 & (18.984) \\
Income: Much better        & 2.877  & (9.529)  & 0.796   & (0.647) & -7.198  & (0.957)  \\
Distance to meeting                 & 0.129  & (0.462)  & 0.048** & (0.023) & -0.327  & (3.375)  \\
Number of friends                   & 0.453  & (1.687)  & 0.081   & (0.068) & -1.108  & (1.486)  \\
Age                                 & 0.235  & (0.749)  & 0.036   & (0.027) & -0.567  & (3.031)  \\
Share of leaders in village         & 0.401  & (1.507)  & 0.087   & (0.074) & -1.013  & (2.164)  \\
Average sociability index           & Index  & (1.080)  & 0.067** & (0.033) & -0.736  & (12.502) \\
Share of women in village           & 1.504  & (5.791)  & 0.562   & (0.745) & -3.377  & (22.216) \\
Share of high-school educated       & 3.032  & (11.044) & 0.597   & (0.422) & -7.452  & (53.987) \\
Share of "Somewhat   worse"  & 7.243  & (26.817) & 1.966*  & (1.157) & -18.381 & (0.343)  \\
Share of "About the   same"  & 0.007  & (0.080)  & 0.021   & (0.027) & -0.012  & (0.977)  \\
Share of "Somewhat   better" & 0.031  & (0.245)  & 0.063   & (0.072) & -0.08   & (7.409)  \\
Average distance to meeting         & -0.098 & (1.251)  & 0.708   & (0.794) & 0.776   & (11.434) \\
Average age                         & 0.23   & (2.839)  & 0.668   & (0.907) & -0.566  & 0.000   \\
		\hline
	\end{tabular}
	\caption[Covariate balancing participation Uganda]{\footnotesize \textbf{(Covariate Balancing Participation Uganda)} * Significant at 10\%. ** Significant at 5\%. *** Significant at 1\%. This table shows the coefficients of inverse-weighted estimators, where each of the baseline characteristics is treated as a (placebo) outcome variable. If the weighting matrix is correctly specified $\widetilde{\beta} =\widetilde{\gamma}= \widetilde{\delta} = 0$. The relative income asks how an individual's perceives her household income relative the typical household. The baseline category is "Much worse than the typical household". I dropped the ``Share reporting: Much Better" variable because there was very little variation (only 2\% of the sample marked this category). The bottom half of the table reports village-level averages and shares of the key variables. I omit the share for the "Much better" category because there are two few individuals.  The sample for the table excludes households without friends and missing data on distance to meeting, gender, age and income.}
	\label{table:ugandabalancing}	
\end{table}

\newpage
\begin{table}[t]
	\centering
	\small
	 \begin{tabular}{lcccc}  \hline & Non Leaders & Leaders & Difference & Std. Error \\ \  &  (N = 6,551) &  (N = 929) &  (N = 7,480) &  (N = 7,480)\\ 
 	\hline
 	\ \bf{Roof Type} & ~ & ~ & ~ & ~\\ \ ~~ Thatch & 2 \% & 1 \% & -1.12 \% & 0.43 \%\\ \ ~~ Tile & 38 \% & 31 \% & -6.32 \% & 2.42 \%\\ \ ~~ Stone & 26 \% & 30 \% & 4.2 \% & 2.19 \%\\ \ ~~ Sheet & 21 \% & 20 \% & -0.6 \% & 1.52 \%\\ \ ~~ RCC (Reinforced Concrete) & 10 \% & 15 \% & 4.69 \% & 1.2 \%\\ \ ~~ Other & 4 \% & 3 \% & -0.85 \% & 0.78 \%\\ \ \bf{No. Rooms} & ~ & ~ & ~ & ~\\ \ ~~ Mean & 0.77 & 1.06 & 0.29 & 0.06\\ \ ~~ Sd & 1.1 & 1.39 &  & \\ \ \bf{Electricity} & ~ & ~ & ~ & ~\\ \ ~~ Yes, Private & 61 \% & 72 \% & 10.94 \% & 1.98 \%\\ \ ~~ Yes, Government & 32 \% & 24 \% & -8.19 \% & 1.9 \%\\ \ ~~ No & 7 \% & 4 \% & -2.75 \% & 0.68 \%\\ \ \bf{Latrine} & ~ & ~ & ~ & ~\\ \ ~~ Owned & 25 \% & 39 \% & 13.5 \% & 1.7 \%\\ \ ~~ Common & 1 \% & 1 \% & -0.06 \% & 0.25 \%\\ \ ~~ None & 74 \% & 61 \% & -13.45 \% & 1.78 \%\\ \ \bf{Residence} & ~ & ~ & ~ & ~\\ \ ~~ Owned & 90 \% & 93 \% & 2.66 \% & 1.05 \%\\ \ ~~ Owned but shared & 1 \% & 1 \% & 0.34 \% & 0.35 \%\\ \ ~~ Rented & 6 \% & 3 \% & -2.65 \% & 0.76 \%\\ \ ~~ Leased & 0 \% & 0 \% & 0.08 \% & 0.16 \%\\ \ ~~ Government & 4 \% & 3 \% & -0.42 \% & 0.65 \%\\ \ \bf{Caste} & ~ & ~ & ~ & ~\\ \ ~~ General & 11 \% & 20 \% & 8.31 \% & 1.64 \%\\ \ ~~ Minority & 3 \% & 3 \% & -0.68 \% & 0.69 \%\\ \ ~~ OBC & 51 \% & 51 \% & 0.21 \% & 1.65 \%\\ \ ~~ Scheduled Caste & 29 \% & 22 \% & -6.69 \% & 1.57 \%\\ \ ~~ Scheduled Tribe & 5 \% & 4 \% & -1.14 \% & 0.79 \%\\ \ \bf{Religion} & ~ & ~ & ~ & ~\\ \ ~~ Hinduism & 95 \% & 95 \% & 0.09 \% & 0.87 \%\\ \ ~~ Islam & 5 \% & 5 \% & -0.1 \% & 0.91 \%\\ \ ~~ Christianity & 0.09 \% & 0.11 \% & 0.02 \% & 0.12 \%\\ \ \bf{Number of Connections} & ~ & ~ & ~ & ~\\ \ ~~ Mean & 9.91 & 12.5 & 2.59 & 0.25\\ \ ~~ Standard Deviation & 6.64 & 7.31 &  & \\ \hline \ \end{tabular}
 
	\caption[Summary statistics India]{\footnotesize \textbf{(Summary statistics microfinance in India)} Differences between leader households selected by the microfinance organization and non-leader households. All the variables are measured at baseline. This sample merges the census-level data with a detailed survey for a random subsample of households, to fill in missing caste data. The sample excludes households without friends, households with more than 30 friends, and those that have missing caste or electricity data, which is 0.77\% of the overall sample. The standard errors are clustered by village.}
	\label{table:table1summary}	
\end{table}

\newpage

\begin{table}[ht]
	\small
	\centering
	\begin{tabular}{lcccc}
		\hline
		& \multicolumn{2}{c}{\textit{Own Propensity Score}} & \multicolumn{2}{c}{\textit{Friend Propensity Score}} \\
		\textbf{}                & \textit{Coefficient}                             & \textit{Std. Error.}                   & \textit{Coefficient}                                  & \textit{Std. Error}          \\
		\hline
Tile Roof                                & -0.074   & (0.128) & -0.08    & (0.060)  \\
Stone Roof                               & 0.09     & (0.123) & 0.053    & (0.061)  \\
Sheet Roof                               & 0.013    & (0.134) & -0.059   & (0.061)  \\
No. Rooms                                & 0.124*** & (0.037) & 0.007    & (0.013)  \\
Access to Electricity                    & 0.226*   & (0.124) & 0        & (0.041)  \\
Access to Latrine                        & 0.321**  & (0.143) & 0.106**  & (0.052)  \\
General Caste (base OBC)                 & 0.602*** & (0.194) & 0.266*** & (0.094)  \\
Scheduled Caste (base OBC)               & -0.087   & (0.106) & -0.139   & (0.076)  \\
Scheduled Tribe (base OBC)               & -0.099   & (0.234) & 0.046    & (0.097)  \\
Share of general caste in village        & -0.19    & (0.856) & 0.253    & (0.402)  \\
Share of scheduled caste in village      & 0.032    & (0.354) & -0.208   & (0.241)  \\
Share of scheduled tribe in village      & 0.233    & (2.040) & 0.609    & (1.255)  \\
Share of latrine access in village       & 0.597    & (0.852) & 0.384    & (0.567)  \\
Share of electricity access in   village & -1.107   & (0.793) & -0.613   & (0.479)  \\
Total Friends / Village Size             & 9.371*** & (2.155) & 2.233*** & (0.858)  \\
$\log(\sigma_1) $                        & -0.447   & (2.087) &        &    \\
$\sigma_{12}$                            &          &         & 0.255    & (0.339)  \\
$\log(\sigma_2)$                         &          &         & -2.071   & (15.712) \\
Constant                                 & -2.663   & (0.677) & -1.565   & (0.391) \\
		\hline  
		Number of Observations & 7,480 &  & 7,480 \\
		Number of Villages & 43 &  & 43 \\
		\hline
	\end{tabular}
	\caption[Network propensity score parameters microfinance India]{\footnotesize \textbf{(Network Propensity Score Microfinance India)} * Significant at 10\%. ** Significant at 5\%. *** Significant at 1\%. Columns (2) and (4) show the estimated coefficients for the propensity score and friend propensity scores, respectively. Columns (3) and (5) show the corresponding standard errors, that are clustered by village. All the variables are measured at baseline. The bottom rows displays the parameters of the covariance matrix of the unobserved heterogeneity parameters. This sample merges the census-level data with a detailed survey for a random subsample of households, to fill in missing caste data. The sample excludes households without friends, households with more than 30 friends, and those that have missing caste or electricity data, which is 0.77\% of the overall sample.}
	\label{table:tablenetworkpropensityunweighted}	
\end{table}

\newpage
\begin{table}[ht]
	\small
	\centering
	\begin{tabular}{lcccccccc}
		\hline
		& \multicolumn{2}{c}{$\widetilde{\beta}$} & \multicolumn{2}{c}{$\widetilde{\gamma}$} & \multicolumn{2}{c}{$\widetilde{\delta}$} \\
		\textbf{}                & \textit{Coeff.}                             & \textit{Std.}                   & \textit{Coeff.}                                  & \textit{Std.}  & \textit{Coeff.}                                  & \textit{Std.}       \\
& & \textit{Error} & & \textit{Error} & & \textit{Error} \\
\hline
Tile Roof                                & 0.029  & (0.080) & 0.092  & (0.098) & 0.092  & (0.504) \\
Stone Roof                               & 0.025  & (0.095) & 0.05   & (0.085) & -0.136 & (0.381) \\
Sheet Roof                               & 0.033  & (0.067) & 0.085  & (0.122) & -0.147 & (4.082) \\
No. Rooms                                & 0.214  & (0.708) & 0.551  & (0.466) & -1.436 & (1.062) \\
Access to Electricity                    & 0.07   & (0.190) & 0.186  & (0.168) & -0.433 & (0.646) \\
Access to Latrine                        & 0.057  & (0.127) & 0.078  & (0.086) & -0.334 & (0.075) \\
General Caste (base OBC)                 & -0.027 & (0.016) & -0.05  & (0.128) & 0.13   & (0.710) \\
Scheduled Caste (base OBC)               & 0.069  & (0.114) & 0.161  & (0.198) & -0.551 & (0.090) \\
Scheduled Tribe (base OBC)               & 0.005  & (0.016) & 0      & (0.025) & 0.009  & (0.148) \\
Share of general caste in village        & 0.002  & (0.021) & -0.014 & (0.050) & 0.035  & (0.421) \\
Share of scheduled caste in village      & 0.013  & (0.069) & 0.091  & (0.099) & -0.144 & (0.102) \\
Share of scheduled tribe in village      & 0.008  & (0.017) & 0.01   & (0.010) & -0.036 & (0.438) \\
Share of latrine access in village       & 0.02   & (0.078) & 0.052  & (0.056) & -0.178 & (1.060) \\
Share of electricity access in   village & 0.051  & (0.178) & 0.144  & (0.133) & -0.391 & (0.118) \\
Total Friends / Village Size             & 0.009  & (0.021) & 0.016  & (0.014) & -0.05  & 0.000  \\
		\hline  
		Number of Observations & 7,480 &  & 7,480 & & 7,480 \\
		Number of Villages & 43 &  & 43 &  & 43 \\
		\hline
	\end{tabular}
	\caption[Covariate balancing microfinance India]{\footnotesize \textbf{(Covariate Balancing Microfinance India)} * Significant at 10\%. ** Significant at 5\%. *** Significant at 1\%. This table shows the coefficients of inverse-weighted estimators, where each of the baseline characteristics is treated as a (placebo) outcome variable. If the weighting matrix is correctly specified $\widetilde{\beta} =\widetilde{\gamma}= \widetilde{\delta} = 0$. This sample merges the census-level data with a detailed survey for a random subsample of households, to fill in missing caste data. The sample excludes households without friends, households with more than 30 friends, and those that have missing caste or electricity data, which is 0.77\% of the overall sample.}
	\label{table:unweightedbalancing}	
\end{table}

\begin{table}[t]
	\centering
	\begin{tabular}{lcc}
		\hline 
		& \multicolumn{2}{c}{Political Participation} \\
		& \multicolumn{1}{c}{\textit{Average Effect}} & \multicolumn{1}{c}{\textit{Average Effect}} \\	
		& \multicolumn{1}{c}{\textit{on Treated}} & \multicolumn{1}{c}{\textit{on Untreated}} \\
		\hline
		Direct Effect $(\beta)$  & 0.269            & 0.283     \\
		&(0.165)            & (0.166)                 \\
		Spillover Effect $(\gamma)$ &       0.410**  &  0.222  \\
		& (0.116)             & (0.170)   \\
		Interaction $(\delta)$ & -0.185  & -0.303 \\
		& (0.862) & (1.138)  \\
		\hline
		N & 2831 & 2831 \\
		Villages & 16 & 16 \\
		\hline
	\end{tabular}
	
	\vspace{0.025in}
	{\small * p $<$ 0.1, ** p $<$ 0.05. *** p $<$ 0.01}
	\caption[APT and APU political participation in Uganda]{\footnotesize \textbf{(Average partial effects by subpopulation Uganda)} The left column shows the estimated coefficients of the average partial effects on the treated $\mathbb{E}[(\beta_{ig},\gamma_{ig},\delta_{ig}) \mid D_{ig} = 1]$. The right column shows the estimated coefficients of $\mathbb{E}[(\beta_{ig},\gamma_{ig},\delta_{ig}) \mid D_{ig} = 0]$.}	
	\label{table:aptapu_uganda}	
\end{table}

\begin{table}[t]
	\centering
	\begin{tabular}{lcc}
		\hline 
		& \multicolumn{2}{c}{Political Participation} \\
		& \multicolumn{1}{c}{\textit{Average Effect}} & \multicolumn{1}{c}{\textit{Average Effect}} \\	
		& \multicolumn{1}{c}{\textit{on Treated}} & \multicolumn{1}{c}{\textit{on Untreated}} \\
		\hline
		Direct Effect $(\beta)$  & 0.089            & 0.102**     \\
		&(0.165)            & (0.052)                 \\
		Spillover Effect $(\gamma)$ &       0.090  &  0.093  \\
		& (0.099)             & (0.090)   \\
		Interaction $(\delta)$ & -0.094  & -0.108 \\
		& (0.304) & (0.300)  \\
		\hline
		N & 7480 & 7480 \\
		Villages & 43 & 43 \\
		\hline
	\end{tabular}
	
	\vspace{0.025in}
	{\small * p $<$ 0.1, ** p $<$ 0.05. *** p $<$ 0.01}
	\caption[APT and APU microfinance diffusion in India]{\footnotesize \textbf{(Average partial effects by subpopulation India)} The left column shows the estimated coefficients of the average partial effects on the treated $\mathbb{E}[(\beta_{ig},\gamma_{ig},\delta_{ig}) \mid D_{ig} = 1]$. The right column shows the estimated coefficients of $\mathbb{E}[(\beta_{ig},\gamma_{ig},\delta_{ig}) \mid D_{ig} = 0]$.}	
	\label{table:aptapu_india}

\end{table}

\newpage
\onehalfspacing
\newpage

\section{Appendix}
\label{secappendixidentification}

\subsection{Non-Separable Models}

In this section I relax the random coefficients assumption in \eqref{eq:linearoutcome} by assuming that $Y_{ig} = m(X_{ig},\tau_{ig})$ where $m$ is an unknown function and $\tau_{ig}$ is a vector of unobserved heterogeneity of arbitrary dimension. The researcher is interested in identifying the average structural function, defined as
\[
M(x) = \int m(x,\varepsilon) dF(\varepsilon)
\]
The function $M(x)$ identified the average effect if everyone was subject to the same exposure.

The proof of Theorem \ref{thm:fullcontrolfunction} does not make any explicit use of the functional form of the outcome. If the assumptions of the theorem hold, then $X_{ig} \ \indep \tau_{ig} \mid (C_{ig},\Psi_g^*)$ and 
\begin{align*}
\mathbb{E}[Y_{ig} \mid X_{ig}=x,C_{ig}=c,\Psi_g^* = \Psi^*] &= \int m(x,\varepsilon) dF(\varepsilon \mid x,c,\Psi^*) \\
&= \int m(x,\varepsilon) dF(\varepsilon \mid c,\Psi^*)
\end{align*}
This first stage is analogous to matching individuals with similar characteristics and similar levels of exposure.  The conditional mean is only identified over the conditional support of $(C_{ig},\Psi_g^*)$ given $X_{ig}$. When the conditional support of $(C_{ig},\Psi_g^*)$ given $X_{ig}$ equals the unconditional support we say that the system has \textit{full support}. This condition is similar to a rank condition. In that case the average structural function can be identified by integrating the conditional mean using standard arguments as in \cite{imbens2009identification}.
\begin{align*}
&\mathbb{E}[\mathbb{E}[Y_{ig} \mid X_{ig} = x,C_{ig},\Psi_g^*]] \\
&= \int \int m(x,\tau_{ig})dF(\tau_{ig} \mid c,\Psi^*) dF(c,\Psi^*) = M(x)
\end{align*}
Consequently, the average structural function is identified. \cite{imbens2009identification} show how to extend this idea to identify quantile effects in addition to average outcomes. We can also use the same set of arguments to prove identification of the average structural function for the network propensity score using the result of Theorem \ref{thm:expostbalancing}.

\subsection{Spurious Peer Effects}
\label{spuriouseffects}

Consider the following example where an ordinary least squares (OLS) regression recovers spurious peer effects. Suppose that $Y_{ig} = \alpha_{ig}$, and that \nameref{assumptxt:randomsampling}, \nameref{assumptxt:confounders} and \nameref{assumptxt:dyadicnetwork} are satisfied. Let $V_{ig} \equiv (C_{ig},\Psi_g^*,L_{ig})$ denote the confounders and $X_{ig} = (1,\bar{D}_{-ig})$, where $\bar{D}_{-ig}$ is the fraction of treated friends, defined as $T_{ig}/L_{ig}$. In this case there are no treatment effects, direct or indirect, but the outcome are correlated with the confounders. The researcher runs the following regression over the subset of individuals with at least one friend, $\mathcal{F} = \mathbbm{1}\{L_{ig} > 0\}$,
\[
Y_{ig} = \beta_0 + \beta_1 \bar{D}_{-ig} + \varepsilon_{ig}, \qquad \mathbb{E}[\varepsilon_{ig}] = 0.
\]
The true value of the intercept is $\beta_0 = 0$ and the slope is $\beta_1 = 0$. The population OLS coefficient is defined as
\[
\beta_1^{OLS} = \frac{Cov(\bar{D}_{-ig},Y_{ig}) \mid \mathcal{F}}{Var(\bar{D}_{-ig}\mid \mathcal{F})}
\]
Plugging in $Y_{ig} = \alpha_{ig}$ and using the law of total covariance,
\begin{equation}
\beta_1^{OLS} = \frac{\mathbb{E}[\overbrace{Cov(\bar{D}_{-ig},\alpha_{ig}\mid V_{ig},\mathcal{F})}^{(a)} \mid \mathcal{F}] + Cov(\overbrace{\mathbb{E}[\bar{D}_{-ig}\mid V_{ig}]}^{(b)},\overbrace{\mathbb{E}[\alpha_{ig} \mid V_{ig},\mathcal{F}]}^{(c)} \mid \mathcal{F}) }{Var(\bar{D}_{-ig} \mid \mathcal{F})}.
\label{eq:biasols}
\end{equation}
Theorem \ref{thm:expostbalancing} ensures that $\bar{D}_{-ig} \ \indep \ \alpha_{ig} \mid V_{ig}$, which means that (a) is equal to zero. The term $(b)$ equals  $p_{fig}$, the friend propensity. To simplify notation define $\alpha(V_{ig}) = \mathbb{E}[\alpha_{ig} \mid V_{ig},\mathcal{F}] = \mathbb{E}[\alpha_{ig} \mid V_{ig}]$. Consequently,
\begin{equation}
\beta_1^{OLS} = \frac{Cov\left(p_f(V_{ig}),\alpha(V_{ig}) \mid \mathcal{F} \right)}{Var(\bar{D}_{-ig} \mid \mathcal{F})}
\label{eq:biasols2}
\end{equation}
The OLS coefficient is biased when $\alpha(V_{ig})$ are correlated with $p_{fig}$. For example, suppose that $V_{ig}$ is a poverty index and that $p_{fig}$ is positively correlated with $V_{ig}$. That means that vulnerable individuals are more likely to have a higher fraction of friends who are targeted by the program. Similarly, suppose that $Y_{ig}$ is a measure of food insecurity and that $\alpha(V_{ig})$ is increasing in $V_{ig}$. Then $\beta_1^{OLS} > 0$ because $V_{ig}$ drives both the homophily/selection patterns and the baseline outcomes. Alternatively, when the network and treatment assignment are exogenous, $p_{fig}$ is a constant and the OLS estimator is unbiased because the covariance in the numerator of \eqref{eq:biasols2} equals zero.

\subsection{Effects by subpopulation}

In many cases social programs deliberately target individuals based on baseline characteristics, and the policy maker may not be interested in the effects for the overall population. The identification problem is that individuals are only observed in a single treatment status, which means that the researcher has to find appropriates comparison individuals in the control group that approximate the behavior of the treated under a different exposure. To this end, let us define average partial effect on the treated (APT) and untreated (APU)
\[
\tau_{APT} \equiv \mathbb{E}[\tau_{ig} \mid D_{ig} = 1, \mathcal{F}]
\]
\[
\tau_{APU} \equiv \mathbb{E}[\tau_{ig} \mid D_{ig} = 0, \mathcal{F}]
\]
Theorem \ref{thm:att_atu_identification} presents identification results for $\tau_{APT}$ and $\tau_{APU}$,
\begin{thm}[Identification Subpopulations]
	Suppose that (i) $(X_{ig},D_{ig}) \ \indep \ \tau_{ig} \mid V_{ig}$, (ii) $Y_{ig} = X_{ig}'\tau_{ig}$, and (iii) $\mathcal{F}$ is $V_{ig}-$measurable and $\mathbf{Q}_{xx}(v) = \mathbb{E}[X_{ig}X_{ig}' \mid V_{ig} = v]$ is invertible almost surely over the support of $V_{ig} \mid \mathcal{F}$, then  
	\begin{align*}
		\boldsymbol{\tau}_{APT} &= \frac{1}{\mathbb{E}[D_{ig}]}\times \mathbb{E}\left[p_d(V_{ig}) \times \mathbf{Q}_{xx}(V_{ig})^{-1}X_{ig}Y_{ig}  \mid \mathcal{F} \right] \\
		\boldsymbol{\tau}_{APU} &= \frac{1}{1-\mathbb{E}[D_{ig}]}\times \mathbb{E}\left[ (1-p_d(V_{ig})) \times \mathbf{Q}_{xx}(V_{ig})^{-1}X_{ig}Y_{ig} \mid \mathcal{F} \right]
	\end{align*}
	\label{thm:att_atu_identification}
\end{thm}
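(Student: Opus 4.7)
The idea is to mirror the logic of Theorem \ref{lem:randomcoef_identification} but to insert an extra indicator weight that selects the subpopulation of interest. I would first rewrite each conditional expectation as a ratio of joint moments via the identity
\[
\tau_{APT} = \frac{\mathbb{E}[D_{ig}\tau_{ig}\mid\mathcal{F}]}{\mathbb{E}[D_{ig}\mid\mathcal{F}]}, \qquad \tau_{APU} = \frac{\mathbb{E}[(1-D_{ig})\tau_{ig}\mid\mathcal{F}]}{\mathbb{E}[1-D_{ig}\mid\mathcal{F}]},
\]
so that the denominators are directly observable and the task reduces to identifying the two numerators in terms of observables.

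For the APT numerator I would condition on $V_{ig}$ and exploit the strengthened unconfoundedness $(X_{ig},D_{ig})\indep\tau_{ig}\mid V_{ig}$. Independence yields $\mathbb{E}[D_{ig}\tau_{ig}\mid V_{ig}] = \mathbb{E}[D_{ig}\mid V_{ig}]\,\mathbb{E}[\tau_{ig}\mid V_{ig}] = p_d(V_{ig})\mathbb{E}[\tau_{ig}\mid V_{ig}]$. The remaining conditional mean of $\tau_{ig}$ can be expressed in observables by the same algebra that underlies Theorem \ref{lem:randomcoef_identification}: since $Y_{ig}=X_{ig}'\tau_{ig}$ and $X_{ig}\indep\tau_{ig}\mid V_{ig}$,
\[
\mathbb{E}[X_{ig}Y_{ig}\mid V_{ig}] = \mathbb{E}[X_{ig}X_{ig}'\tau_{ig}\mid V_{ig}] = \mathbf{Q}_{xx}(V_{ig})\,\mathbb{E}[\tau_{ig}\mid V_{ig}],
\]
so inverting the almost surely invertible $\mathbf{Q}_{xx}(V_{ig})$ gives $\mathbb{E}[\tau_{ig}\mid V_{ig}] = \mathbf{Q}_{xx}(V_{ig})^{-1}\mathbb{E}[X_{ig}Y_{ig}\mid V_{ig}]$.

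Combining the two pieces and applying the tower property yields
\[
\mathbb{E}[D_{ig}\tau_{ig}\mid\mathcal{F}] = \mathbb{E}\bigl[p_d(V_{ig})\mathbf{Q}_{xx}(V_{ig})^{-1}X_{ig}Y_{ig}\mid\mathcal{F}\bigr],
\]
since $p_d(V_{ig})\mathbf{Q}_{xx}(V_{ig})^{-1}$ is $V_{ig}$-measurable and $\mathcal{F}$ is too, so the inner conditional expectation can be absorbed into an unconditional one. Dividing by $\mathbb{E}[D_{ig}\mid\mathcal{F}]$ produces the claimed expression for $\tau_{APT}$; repeating the argument with $(1-D_{ig})$ in place of $D_{ig}$ and $1-p_d(V_{ig})$ in place of $p_d(V_{ig})$ produces $\tau_{APU}$. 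The step I expect to merit the most care is the factorization $\mathbb{E}[D_{ig}\tau_{ig}\mid V_{ig}] = p_d(V_{ig})\mathbb{E}[\tau_{ig}\mid V_{ig}]$, which is precisely why the hypothesis here strengthens the earlier $X_{ig}\indep\tau_{ig}\mid V_{ig}$ to the joint independence $(X_{ig},D_{ig})\indep\tau_{ig}\mid V_{ig}$; however, since the paper's $X_{ig}$ already contains $D_{ig}$ as a component, the two conditions coincide under the primitives of Theorem \ref{thm:fullcontrolfunction}, so no additional primitive assumption is actually required.
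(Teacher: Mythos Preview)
Your proposal is correct and follows essentially the same route as the paper's proof: both reduce the problem to identifying $\mathbb{E}[D_{ig}\tau_{ig}\mid\mathcal{F}]$, factorize this via the conditional independence $(X_{ig},D_{ig})\indep\tau_{ig}\mid V_{ig}$ into $p_d(V_{ig})\,\mathbb{E}[\tau_{ig}\mid V_{ig}]$, invoke the inverse-weighting identity from Theorem~\ref{lem:randomcoef_identification} to rewrite $\mathbb{E}[\tau_{ig}\mid V_{ig}]$ in observables, and then integrate out $V_{ig}$ by the tower property. The only cosmetic difference is that the paper obtains $\boldsymbol{\tau}_{APU}$ by solving the identity $\boldsymbol{\tau}=\mathbb{E}[D_{ig}]\,\boldsymbol{\tau}_{APT}+(1-\mathbb{E}[D_{ig}])\,\boldsymbol{\tau}_{APU}$ rather than repeating the symmetric argument with $1-D_{ig}$ as you do.
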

The main intuition is fairly similar to Theorem \ref{lem:randomcoef_identification}, in the sense that the inverse weighting ensures equal comparisons across with different strata of $V_{ig}$ whereas the own propensity $p_d(V_{ig})$ weights each strata by the relative number of treated individuals. Notice that the unconditional average partial effects and the $(APT,APU)$ are mutually constrained by the law of iterated expectations $\boldsymbol{\tau} = \mathbb{E}[D_{ig}]\boldsymbol{\tau}_{APT} + (1-\mathbb{E}[D_{ig}])\boldsymbol{\tau}_{APU}$.

Table \ref{table:aptapu_uganda}	computes the average partial effects by subpopulation for the political participation example in Uganda. The coefficients $(\beta_{APT},\delta_{APT})$ and $(\beta_{APU},\delta_{APU})$ have similar magnitudes, standard errors and significance. There are, however, large differences in the magnitudes and significance levels of the spillovers for the control group (in fact $\gamma_{APT} > \gamma_{APU}$). This suggested that individuals with a higher likelihood of participating in the session are more likely to change the behavior if one of their friends is treated. Analogously, in Table \ref{table:aptapu_india}, I compute $(\beta_{APT}, \gamma_{APT}, \delta_{APT})$ and $(\beta_{APU},\gamma_{APU},\delta_{APU})$. The coefficients are similar in magnitude, with comparable standard errors, which suggests that both groups of individuals are fairly similar. In both tables, I compute the standard errors by replacing the definition of $\psi_{IW}$ using a sample analog of the moment conditions in Theorem \ref{thm:att_atu_identification}.

\subsection{Network Propensity Score and Experiments}
\label{experiments_illustration}

One of the most effective ways to identify spillovers is to use a random saturation design. This a two-stage design rising in popularity in the empirical literature \citep{crepon2013labor,gine2018together,bursztyn2019} and studied in several recent econometrics papers \citep{baird,spilloversnoncompliance}. I establish a tight connection between the network propensity and identification in experiments. I show the applicability of my methods to study non-compliance in sparse networks.

In the first stage each group is randomly a saturation, a real number $S_g \in [0,1]$. In the second stage individuals within each group are randomly assigned to treatment with probability $S_g$. This design  is an extension of Bernoulli designs that treat individuals with a fixed probability, such as $S_g = 0.5$, and cluster design that assign complete groups to treatment or control, where $S_g \in \{0,1\}$. The more interesting case combines corner an interior saturations. For example, \cite{crepon2013labor} chooses $S_g \in \{0,0.25,0.5,0.75,1\}$, which generates more experimental variation. To simplify my analysis I focus on the case where the experimenter uses Bernoulli draws to offer treatment in the second stage.

The experimental setting relaxes the assumptions considerably. To discuss the identification of $\tau$ in this experimental context it is useful to assume that $C_{ig}$ includes both baseline  individual characteristics (observed and unobserved). Similarly, I assume that $\Psi_g^*$ includes group characteristics (observed or unobserved) heterogeneity and the exogenous saturations $S_g$. Under this definition it is easy to see that \nameref{assumptxt:confounders} is automatically satisfied because the treatment is exogenous. It is also easy to satisfy the \nameref{assumptxt:randomsampling} and \nameref{assumptxt:dyadicnetwork} assumptions. We can invoke the \cite{aldous1981representations} and \cite{hoover1979relations} theorems that state that any exchangeable network can be represented as a dyadic network with randomly sampled (and possibly unobserved) $C_{ig}$. The purpose of this exercise is to show that in certain experiments there is a simple set of conditioning statistics suffices to identify the treatment effects, even if there is rich unobserved heterogeneity determining the treatment and network choices. \\

\textbf{Example 1 (Perfect Compliance):} The random assignment of saturations and offers means that the propensity score is equal to the group saturation when there is perfect compliance. That means that individuals participate in the program when they are offered and are part of the control when they are not offered. In that case
\begin{equation}
p_{dig} = \underbrace{\mathbb{E}[D_{ig} \mid C_{ig},\Psi_g^*]}_{\text{Definition}} = \underbrace{\mathbb{E}[D_{ig} \mid C_{ig},\Psi_g^*,S_g]}_{\text{Redundancy}} = \underbrace{\mathbb{E}[D_{ig} \mid S_g]}_{\text{Second Stage}} =  \underbrace{S_g}_{\text{First Stage}}.
\label{eq:ownpropensity_rsdesign}
\end{equation}
Equation \eqref{eq:ownpropensity_rsdesign} breaks down the process to show that the propensity is equal to the group saturation. The first equality defines $p_d$. The second equality uses the fact that $S_g$ is a group characteristic that contains redundant information. The last two equality uses the property of the design, that the treatment probability only depends on a saturation which is independent of other characteristics. 

I perform a similar break down for the friend propensity score.
\[
p_{fig} =\underbrace{\mathbb{E}[D_{jg} \mid A_{ijg} =1,C_{ig},\Psi_g^*]}_{\text{Definition}} = \underbrace{\mathbb{E}[D_{jg} \mid A_{ijg} =1,C_{ig},\Psi_g^*,S_g]}_{\text{Redundancy}} = \underbrace{\mathbb{E}[D_{jg} \mid S_g]}_{\text{Second Stage}} = \underbrace{S_g}_{\text{First Stage}}
\]
Finally, the number of friends $L_{ig}$ is not randomly determined by the experimental design and can still be a source of homophily bias that the researcher needs to account for. In networks where everyone is connected $(L_{ig} = N_g - 1)$ this is equivalent to condition on the size of the group, such as classroom size.

The saturation $S_g$ is independent of the random coefficients $\tau_{ig}$ and the baseline information. Formally $\tau_{ig} \ \indep \ S_g \mid L_{ig}$ and hence we can apply Lemma \ref{lem:quasisaturation} to show that $X_{ig} \ \indep \ \tau_{ig} \mid L_{ig}$. That means that matching individuals with similar numbers of friends suffices to identify the average partial effects $\tau$ using Theorem \ref{lem:randomcoef_identification}. \\

\textbf{Example 2: (One-sided compliance)} In practice researchers randomly extend offers but subjects may not be compelled to accept them. Under one-sided compliance treatment status is defined by $D_{ig} = \widetilde{C}_{ig}Z_{ig}$ where $\widetilde{C}_{ig}$ is a binary indicator for whether $\{ig\}$ is a ``complier'' and $Z_{ig}$ is their offer. Compliers with $\widetilde{C}_{ig} = 1$ may perceive larger returns from the program and always participate if offered, where never-takers $\widetilde{C}_{ig} = 0$ do not consider the program worthwhile. In their empirical example from \citep{crepon2013labor}, $D_{ig}$ is a job placement program. The peer effects are potential displacement effect for non-participants that were disadvantaged in a tight labor market. To fit this example within my framework I assume that $\widetilde{C}_{ig}$ is a component of the individual covariates $C_{ig}$.

Non-compliance introduces additional complications because the treatment is no longer randomly assigned. To analyze this problem it is useful to first compute an infeasible propensity score that conditions on the latent complier indicator. If $\tilde{C}_{ig}$ were known
\[
p_{dig} =  \mathbb{E}[\widetilde{C}_{ig} Z_{ig} \mid A_{ijg} = 1,C_{ig},\Psi_g^*] = \widetilde{C}_{ig}S_g
\]
The propensity score for never-takers is always zero, whereas the propensity score for compliers depends on the saturation. The friend propensity equals
\[
p_{fig} = \mathbb{E}[\widetilde{C}_{jg} Z_{jg} \mid A_{ijg} = 1,C_{ig},\Psi_g^*] = \mathbb{E}[\widetilde{C}_{jg} \mid A_{ijg} =1,C_{ig},\Psi_g^*] \times S_g.
\]
The first equality applies the definition of the friend propensity and substitutes the expression for $D_{jg}$ under one-sided compliance. Theorem \ref{lem:quasisaturation} implies that the key dimensions of endogeneity are captured by the vector $V_{ig} = (\widetilde{C}_{ig},\mathbb{E}[\widetilde{C}_{jg} \mid A_{ijg} = 1,C_{ig},\Psi_g^*],L_{ig})$ since $S_g$ is exogenous. The second component of $V_{ig}$ can be interpreted as the probability that a potential complier is treated. This agrees with related work in \cite{spilloversnoncompliance}, where we show which causal effects are identified and show that $(S_g)$ for the spillover effects because of first-stage heterogeneity. They propose a procedure over subsets of the population to recover effects for compliers $\tilde{C}_{ig}=1$ and nervertakers $\tilde{C}_{ig} = 0$. They show that the share of compliers can be consistently using $T_{ig} / S_{ig}$ to construct a valid IV. The procedure relies on complete networks where $L_{ig} = N_g - 1$ and $N_g \to \infty$ in the asymptotic experiment.

\subsection{Regularity Conditions}

In this section I present conditions that are required to derive the asymptotic distribution of the estimator. In order to do so I assume that there is a sequence of distributions indexed by $t$. I denote the realization of variables of agent $\{ig\}$ at point $t$ in the sequence by including the subscript $\{igt\}$. I assume that one or more of the regressors need to be estimated. Let $V_{igt} = (V_{1igt}^0,V_{2igt})$ be the observed regressor and let $V_{igt}^{0} = (V_{1igt}^0,V_{2igt}^0)$. The first vector of regressors is observed without error, but the second estimator is estimated at rate $\max_{g=1,\ldots,G_t}\max_{i=1,\ldots,N_{gt}} \Vert V_{2igt} - V_{2igt}^0 \Vert  = O(\lambda_t)$. As in the main text, I assume that $Z_{igt} = (X_{igt},Y_{igt},V_{igt})$ is a vector of data.

I next outline the key regularity conditions for convergence. First, for the estimator to be consistent the weighting matrix needs to by almost surely full rank in a neighborhood of $\boldsymbol{\theta}$ around the true parameter. A positive semi-definite matrix $\mathbf{Q}_{xx}$ is full rank if and only if its smallest eigenvalue is positive. Consequently, I quantify the \textit{almost sure} requirement by imposing a lower bound on the eigenvalues of the estimated matrix.  Let $\lambda_{min}(v_1,v_2,\boldsymbol{\theta})$ denote the smallest eigenvalue of $\mathbf{Q}_{xx}((v_1,v_2),\boldsymbol{\theta})$ and let $B(\boldsymbol{\theta}_{0t},\delta)$ be a ball or radius $\delta >0$ around $\boldsymbol{\theta}_{0t}$ and suppose that $V_{2igt}$ belongs to a compact set $\mathcal{V}_2$ with probability approaching one. Let  $\underline{\lambda}(V_{1igt}^{0},\boldsymbol{\theta}_{0t},\delta) \equiv \inf_{\theta \in B(\boldsymbol{\theta}_{0t},\delta)} \inf_{v_2 \in \mathcal{V}_2}\lambda_{min}(V_{1igt}^{0},v_2,\boldsymbol{\theta})$ be a lower bound on the eigenvalues of $\mathbf{Q}_{xx}$. I assume infimum holds over all values of $v_2$ to ensure that the matrix is full rank, even if the regressors are noisily estimated.

Second, the weighting matrix also needs to be sufficiently smooth in order to reduce the impact of measurement error from estimating $V_{2igt}$ and $\boldsymbol{\theta}$. I define its Sobolev-norm as
\begin{equation}
Q_{xx}^{\partial}(v_1,v_2,\theta) \equiv \sup_{0 \le \alpha_1 + \alpha_2 \le 3, \ \alpha_1,\alpha_2 \le 2} \left\Vert \frac{\partial^{\alpha_1 + \alpha_2}\mathbf{Q}_{xx}(v_1,v_2,\theta)}{\partial v_2^{\alpha_1}\boldsymbol{\theta}^{\alpha_2}}  \right\Vert
\label{eq:sobolevenorm_Qxx}
\end{equation}
Equation \eqref{eq:sobolevenorm_Qxx} indicates the derivatives of the weighting matrix up to order three need to be bounded. In settings without a generated regressor problem, i.e. $V_{2igt} = V_{2igt}^0$, we typically only require smoothness conditions over $\boldsymbol{\theta}$. In this case, however, bounding the derivatives with respect to $v_2$ as well, allows us to control the generated regressor error. In particular, I require that certain moments of the Sobolev norm need to be bounded.

In addition, the following regularity conditions have to be satisfied.

\begin{assumptxt}[Regularity Conditions]  \quad (i) There exists a $\theta_{0t} \in int\left(\Theta\right)$ such that $\forall \delta > 0$, $
	\inf_{\Vert \boldsymbol{\theta} - \boldsymbol{\theta}_{0t} \Vert > \delta } \mathcal{R}_t(\boldsymbol{\theta}) > \mathcal{R}_t(\boldsymbol{\theta}_{0t})$,
	(ii) $\mathbf{Q}_{xx}(V_{igt};\boldsymbol{\theta})$ is three-times continuously differentiable almost surely and $\mathbb{E}[\sup_{\theta \in \Theta} \sup_{v_2 \in \mathcal{V}_2} \ (Q^{\partial}_{xx}(V_{1igt}^0,v_2,\theta))^4] < \infty$,
	(iii) $\mathbb{E}[\Vert X_{igt}\Vert^4], \mathbb{E}[\Vert Y_{igt} \Vert^2] < \infty$, (iv) $\underline{\lambda}(V_{1igt}^0,\boldsymbol{\theta}_{0t},\delta) > \underline{\lambda} > 0$ almost surely for $\nu > 0$. (v) $H_{0t} \equiv \mathbb{E}\left[\tfrac{\partial}{\partial (\boldsymbol{\theta},\boldsymbol{\beta})'}\psi(Z_{igt},\boldsymbol{\theta}_{0t}) \right]$ is full rank, (vi) $\Omega \equiv \mathbb{E}\left[ \rho_{gt} \overline{\psi}_{g}(\boldsymbol{Z}_{gt},\boldsymbol{\theta}_{0t}) \overline{\psi}_{g}(\boldsymbol{Z}_{gt},\boldsymbol{\theta}_{0t})' \right]$ is positive-definite, (vii) $\max_{ig} \Vert V_{2igt} - V_{2igt}^0 \Vert = O_p(\tau_t)$, and (viii) $\tau_t\sqrt{G_t} = o(1)$ and $(G_t,N_t) \to \infty$ as $t \to \infty$.
	\label{assumptxt:regularity}		
\end{assumptxt}

Condition (i) is an identification condition that says that the true weighting matrix is the unique minimizer of the residuals. This is satisfied as long as the parametric family nests the conditional mean and the true criterion has a unique minimum. Condition (ii) imposes bounds on the moments of the Sobolev norm that hold uniformly over $(\boldsymbol{\theta},v_2)$. Condition (iii) are bounds on the moments of the endogenous variable $X_{igt}$ and $Y_{igt}$. Condition (iv) is a full rank condition for the average causal effect. Condition (v) is a rank condition on the system of equation that is similar to non-colinearity. Condition (vi) says the group-level covariance matrix is non-degenerate and finite. Condition (vii) states the rate of convergence of the generated regressors. Condition (viii) states that the rate needs to be more accurate than the rate of growth of the groups $G_t$.

\newpage

\section{Proofs}

\subsection{Main Proofs}



\begin{proof}[Proof of Theorem \ref{lem:randomcoef_identification} (\nameref{lem:randomcoef_identification})]
	By (ii) $X_{ig} \ \indep \ \tau_{ig} \mid V_{ig}$. By the decomposition property in Lemma \ref{lem:axioms}, $X_{ig}X_{ig}' \ \indep \ \tau_{ig} \mid V_{ig}$ and by (i) $Y_{ig} = X_{ig}'\tau_{ig}$, which means that 
	\begin{align*}
		\mathbf{Q}_{xy}(v) \equiv \mathbb{E}[X_{ig}Y_{ig} \mid V_{ig} = v] &= \mathbb{E}[X_{ig}X_{ig}'\tau_{ig} \mid V_{ig} = v] \\
		&=   \mathbb{E}[X_{ig}X_{ig}'\mid V_{ig} = v] \mathbb{E}[\tau_{ig} \mid V_{ig} = v] \\
		&= \mathbf{Q}_{xx}(v)\tau(v).
	\end{align*}
	If $\mathbf{Q}_{xx}(v)$ is almost surely full rank then $\tau(v) = \mathbf{Q}_{xx}(v)^{-1}\mathbf{Q}_{xy}(v)$ almost surely. Since $\mathcal{F}$ is coarser than $V_{ig}$, $\mathbb{E}[\tau_{ig} \mid V_{ig},\mathcal{F}] = \mathbb{E}[\tau_{ig} \mid V_{ig}]$ and 
	\begin{align*}
		\int \mathbf{Q}_{xx}(v)^{-1}\mathbf{Q}_{xy}(v) \ dF(v \mid \mathcal{F}) &= \int \tau(v) \ dF(v \mid \mathcal{F}) = \tau
	\end{align*}
	Finally, by the law of iterated expectations
	\begin{align*}
	\mathbb{E}[\mathbf{Q}_{xx}(V_{ig})^{-1}X_{ig}Y_{ig} \mid \mathcal{F}] &= \mathbb{E}[\mathbb{E}[\mathbf{Q}_{xx}(V_{ig})^{-1}X_{ig}Y_{ig} \mid V_{ig},\mathcal{F}]\mid \mathcal{F}] \\
	&= \mathbb{E}[\mathbf{Q}_{xx}(V_{ig})^{-1}\mathbf{Q}_{xy}(V_{ig}) \mid \mathcal{F}] \\
	&=\tau.
	\end{align*}
	
\end{proof}

\begin{proof}[Proof of Theorem \ref{thm:fullcontrolfunction} (\nameref{thm:fullcontrolfunction}) ]
	Part (i): I represent $\{ig\}'s$ treatment indicator as $D_{ig} = \mathcal{H}(C_{ig},\Psi_g^*,\eta_{ig})$ where $\mathcal{H}$ is a measurable function and $\eta_{ig} \mid C_{ig},\Psi_g^* \sim F(\eta \mid c,\Psi^*) $ is an unobserved participation shock. Since we can always define the participation shock as $\eta = D_{ig} - \mathbb{P}(D_{ig} =1 \mid C_{ig} = c,\Psi_g^* = \Psi^*)$, this form does not entail any loss of generality.
	
	Let $\zeta_{ig} \equiv (\tau_{ig},\eta_{ig},C_{ig})$. By \nameref{assumptxt:randomsampling} and \nameref{assumptxt:dyadicnetwork},
	\begin{equation}
	\label{eq:indep_covariates}
	\zeta_{ig} \ \indep \ \{ U_{ijg} \}_{j\ne i}^{N_g}, \{\zeta_{jg}\}_{j \ne i}^{N_g} \mid \Psi_g^*
	\end{equation}
	By \eqref{eq:indep_covariates}, as well as the weak union and decomposition properties in Lemma \ref{lem:axioms},
	\begin{align*}
		\zeta_{ig} \ \indep \ \{ U_{ijg} \}_{j\ne i}^{N_g}, \{\zeta_{jg}\}_{j \ne i}^{N_g} &\mid \eta_{ig},C_{ig},\Psi_g^* \\ \implies \quad \tau_{ig} \ \indep \ \{ U_{ijg} \}_{j\ne i}^{N_g}, \{\eta_{jg},C_{jg}\}_{j \ne i}^{N_g} &\mid \eta_{ig},C_{jg},\Psi_g^*
	\end{align*}
	The second line subsets the relevant variables on either side of the independence relation. The participation decisions are functions of personal covariates and selection shocks. Similarly, the friendship vector $\{ig\}$ only depends on the list of preference shocks $(U)$ and covariates $(C)$. Since $X_{ig}' = (1,D_{ig})\otimes \left(1,\varphi\left(i,A_g,C_g,N_g\right) \right)$ and $L_{ig} = \sum_{j = 1,j \ne i}^{N_g}A_{ijg}$, that means that $(L_{ig},X_{ig})$ are both measurable with respect to $\{U_{ijg}\}_{j \ne i}^{N_g},\{\zeta_{jg}\}_{j=1}^{N_g}$.  Then by the decomposition property,
	\begin{equation}
	\tau_{ig} \ \indep \ (X_{ig},L_{ig}) \mid \eta_{ig}, C_{ig},\Psi_g^*  .
	\end{equation}
	By \nameref{assumptxt:confounders}, the outcome heterogeneity is conditionally independent of the selection unobservables, $\tau_{ig} \ \indep \ \eta_{ig} \mid C_{ig},\Psi_g^*$. By the contraction and decomposition properties,
	\begin{equation}
	\tau_{ig} \ \indep \ (X_{ig},L_{ig},\eta_{ig}) \mid C_{ig},\Psi_g^*  \quad \implies \quad \tau_{ig} \ \indep \ (X_{ig},L_{ig}) \mid C_{ig},\Psi_g^*.
	\end{equation}
	Part (ii): We will show that for all $i \in \{1,\ldots,N_g\}$,
	\begin{align*}
		& \mathbb{P}(X_{ig} \le x, L_{ig} \le \ell, Y_{ig} \le y \mid C_{ig} = c,\Psi_g^* = \Psi) \\
		&\quad = \mathbb{P}(X_{1g} \le x, L_{1g} \le \ell, Y_{1g} \le y \mid C_{1g} = c,\Psi_g^* = \Psi).
	\end{align*}
	Let $e_1$ be an $N_g \times 1$ vector with a one in the first entry and zero otherwise, and let $\tau_g$ be an $N_g \times (2+2k)$ matrix of random coefficients. By construction, $C_{1g}' = e_1'C_g$, $D_{1g} = e_1'D_g$, and $\tau_{1g}' = e_1'\tau_g$. Similarly, $L_{1g} = e_1'A_g1_{N_g \times 1}$, where $1_{N_g \times 1}$ is an $N_g \times 1 $ vector of ones.

	The key is to write the variables for $\{ig\}$ in terms of a permutation of the objects for individual $\{1g\}$, and then show that the distribution is invariant to permutations. By \nameref{assumptxt:exchangeableinteractions}, $X_{ig} = (1,D_{ig})\otimes\left( 1, \varphi(1,\Pi_{i1}A_g\Pi_{i1}',\Pi_{i1}D_g,\Pi_{i1}C_g,N_g) \right)$, where $\Pi_{i1}$ is the rotation matrix defined in the assumption. By construction, $C_{ig}'=e_1\Pi_{i1}C_g$, $D_{ig} = e_1'\Pi_{i1}D_g$, and $\tau_{ig}' = e_1'\Pi_{i1}\tau_g$, since $e_i = e_1\Pi_{i1}$. Similarly, $L_{ig} = e_1\Pi_{ij}A_g\Pi_{ij}'1_{N_g \times 1}$, since $\Pi_{ij}'1_{N_g \times 1} = 1_{N_g \times 1}$. By \nameref{assumptxt:randomsampling} and \nameref{assumptxt:dyadicnetwork} all the underlying shocks $\{\zeta_{jg}\}_{i=1}^{N_g},\{U_{ijg}\}_{i,j=1}^{N_g}$ are i.i.d., which means that 
	$$  (\Pi_{i1}A_g\Pi_{i1}',\Pi_{i1}D_g, \Pi_{i1}C_g,\Pi_{i1}\tau_g) \mid \{\Psi_g = \Psi\} \sim (A_g,D_g, C_g,\tau_g) \mid \{\Psi_g = \Psi\}. $$
	Since the key variables of interest are deterministic functions of the above variables,
	$$ (X_{ig},L_{ig},\tau_{ig},C_{ig},Y_{ig}) \mid \{\Psi_g = \Psi\} \sim (X_{1g},L_{1g},\tau_{1g},C_{1g},Y_{ig}) \mid \{\Psi_g = \Psi\}.$$
	To complete the proof we just compute the conditional probability.	
	
\end{proof}

\begin{proof}[Proof of Theorem \ref{thm:closedformtau} (\nameref{thm:closedformtau})] 	
	I make use of the mixture representation of $\mathbf{Q}_{xx}$ derived in Lemma \ref{lem:mixturepresentationQxx}, assuming \nameref{assumptxt:randomsampling}, \nameref{assumptxt:confounders} and \nameref{assumptxt:dyadicnetwork}. If $V_{ig} = (C_{ig},\Psi_g^*,L_{ig})$, then the conditional distribution of the network propensity score is degenerate and hence
	$$\mathbf{Q}_{xx}(v) =  \begin{pmatrix} 1 & \widetilde{\varphi}_1(p_{f},l) \\ \widetilde{\varphi}_1(p_{f},l,p_{f},l) & \widetilde{\varphi}_2(p_{f},l) \end{pmatrix} \otimes \begin{pmatrix} 1 & p_d \\ p_d & p_d \end{pmatrix}.$$
	When $\varphi(t,l) = t/l$, then $\widetilde{\varphi}_1(p_{f},l) = p_f$ and $\widetilde{\varphi}_1(p_{f},l) = p_f(1-p_f)/l + p_f^2$ by using the moments in Lemma \ref{lem:conditionaldist}. The inverse of kronecker product of matrices is equal to the inverse of the kronecker products, which means that
	\begin{align*}
		\mathbf{Q}_{xx}(v)^{-1} &= \begin{pmatrix} 1 & p_f \\ p_f & \frac{p_f(1-p_f)}{l} + p_f^2 \end{pmatrix}^{-1} \otimes \begin{pmatrix} 1 & p_d \\ p_d & p_d \end{pmatrix}^{-1} \\
		&= \left(\frac{1}{p_d(1-p_d)}\right)\left(\frac{l}{p_f(1-p_f)}\right)\begin{pmatrix} \frac{p_f(1-p_f)}{l} + p_f^2 & - p_f \\ - p_f & 1 \end{pmatrix} \otimes \begin{pmatrix} p_d & -p_d \\ -p_d & 1 \end{pmatrix}
	\end{align*}
	We can write the regressors in kronecker product form as $X_{ig}' = (1,T_{ig}/L_{ig}) \otimes (1,D_{ig})$. Hence $\mathbf{Q}_{xx}(V_{ig})^{-1}X_{ig}Y_{ig}$ multiplies two kronecker products. I use the property that for conformable matrices $(M_1,M_2,M_3,M_4)$, $(M_1 \otimes M_2)(M_3 \otimes M_4) = (M_1M_2) \otimes (M_3M_4)$. After some algebraic manipulations we can show that
	\begin{align*}
		\mathbf{Q}_{xx}(V_{ig})^{-1}X_{ig}Y_{ig}
		&= \begin{pmatrix} 1 + \frac{p_fL_{ig} - T_{ig}}{1-p_f} \\ \frac{-p_fL_{ig} + T_{ig}}{p_f(1-p_f)} \end{pmatrix} \otimes \begin{pmatrix} \frac{(1-D_{ig})Y_{ig}}{1-p_d} \\ \frac{D_{ig}Y_{ig}}{p_d} - \frac{(1-D_{ig})Y_{ig}}{1-p_d} \end{pmatrix}.
	\end{align*}
	By Theorem \ref{thm:fullcontrolfunction}, $V_{ig}$ satisfies $\tau_{ig} \ \indep \ X_{ig} \mid V_{ig}$. Assuming the inverse of $\mathbf{Q}_{xx}(V_{ig})$ is well defined then we can apply Theorem \ref{lem:randomcoef_identification} to show that $\tau = \mathbb{E}[\mathbf{Q}_{xx}(V_{ig})^{-1}X_{ig}Y_{ig} \mid \mathcal{F}]$. We can obtain the individual coefficients $(\alpha,\beta,\gamma,\delta)$ by expanding the kronecker product inside the expectation.
	
\end{proof}

\begin{proof}[Proof of Lemma \ref{lem:conditionaldist} (\nameref{lem:conditionaldist})]
	Let $\widetilde{C}_{ig} \equiv (C_{ig},\Psi_g^*)$ and $A_{ig} = \{A_{ijg}\}_{j=1,j \ne i}^{N_g}$. If \nameref{assumptxt:randomsampling} and \nameref{assumptxt:dyadicnetwork} holds, then we can apply Lemma \ref{lem:egocentric_factorization} (\nameref{lem:egocentric_factorization}) to show that
	\begin{equation}
	\label{eq:lik_factorization}
	\mathbb{P}(D_g,A_{ig} \mid \widetilde{C}_{ig}) = \mathbb{P}(D_{ig} \mid \widetilde{C}_{ig}) \prod_{j\ne i}^{N_g} \mathbb{P}(D_{jg},A_{ijg} \mid \widetilde{C}_{ig})
	\end{equation}
	By Bayes' rule, $\mathbb{P}(D_{jg},A_{ijg} \mid \tilde{C}_{ig}) = \mathbb{P}(D_{jg} \mid A_{ijg}, \tilde{C}_{ig}) \mathbb{P}(A_{ijg} \mid \tilde{C}_{ig})$ and substituting into \eqref{eq:lik_factorization}
	\begin{align*}
		&\mathbb{P}(D_g,A_{ig} \mid \widetilde{C}_{ig}) \\
		&= \mathbb{P}(D_{ig} \mid \widetilde{C}_{ig}) \prod_{j\ne i}^{N_g} \mathbb{P}(A_{ijg} \mid \widetilde{C}_{ig}) \prod_{j:A_{ijg} = 1}^{N_g} \mathbb{P}(D_{jg} \mid A_{ijg} = 1, \widetilde{C}_{ig}) \prod_{j:A_{ijg} = 0}^{N_g} \mathbb{P}(D_{jg} \mid A_{ijg} = 0, \widetilde{C}_{ig})
	\end{align*}
	This proves that $D_{ig} \ \indep \{D_{jg},A_{ijg}\}_{j \ne i}^{N_g} \mid \widetilde{C}_{ig}$. Let $L_{ig} \equiv \sum_{j \ne i}A_{ijg}$ be the total friends, $T_{ig} \equiv \sum_{j \ne i} D_{jg}A_{ijg}$ the total number of treated friends and $M_{ig} \equiv \sum_{j \ne i} D_{jg}(1-A_{ijg})$ be the total number of treated non-friends. Consequently, by the decomposition property in Lemma \ref{lem:axioms},
	\[
	D_{ig} \ \indep (L_{ig},T_{ig},M_{ig}) \mid \widetilde{C}_{ig} \quad \implies \quad D_{ig} \ \indep (L_{ig},T_{ig}) \mid \widetilde{C}_{ig} 
	\]
	Furthermore, the likelihood can be factorized in terms of four sets of Bernoulli random variables, with a distinct event probability and $(1,N_g,L_{ig},N_g-L_{ig})$ trials, respectively.
	
	Let $p_f(\tilde{C}_{ig})$ and $p_{m}(z)$ denote the participation probability of friends and non-friends. Then
	\begin{align}
		\begin{split}
			\mathbb{P}(A_{ijg} \mid \widetilde{C}_{ig})&= p_\ell(\widetilde{C}_{ig})^{A_{ijg}}(1-p_\ell(\widetilde{C}_{ig}))^{1-A_{ijg}}  \\
			\mathbb{P}(D_{jg} \mid A_{ijg} = 1, \widetilde{C}_{ig}) &= p_f(\widetilde{C}_{ig})^{D_{jg}}(1-p_f(\widetilde{C}_{ig}))^{1-D_{jg}}  \\
			\mathbb{P}(D_{jg} \mid A_{ijg} = 0, \widetilde{C}_{ig}) &= p_{m}(\widetilde{C}_{ig})^{D_{jg}}(1-p_{m}(\widetilde{C}_{ig}))^{1-D_{jg}}
		\end{split}
		\label{eq:proofcondprob}
	\end{align}
	The product of the probabilities is
	\begin{align}
		\begin{split}
			\prod_{j\ne i}^{N_g} \mathbb{P}(A_{ijg} \mid \widetilde{C}_{ig})
			&= p_\ell(\widetilde{C}_{ig})^{L_{ig}}(1-p_\ell(\widetilde{C}_{ig}))^{N_g-L_{ig}}  \\
			\prod_{j:A_{ijg} = 1}^{N_g} \mathbb{P}(D_{jg} \mid A_{ijg} = 1, \widetilde{C}_{ig}) &= p_f(\widetilde{C}_{ig})^{T_{ig}}(1-p_f(\widetilde{C}_{ig}))^{L_{ig}-T_{ig}}  \\
			\prod_{j:A_{ijg} = 0}^{N_g}\mathbb{P}(D_{jg} \mid A_{ijg} = 0, \widetilde{C}_{ig}) &= p_{m}(\widetilde{C}_{ig})^{M_{ig}}(1-p_{m}(\widetilde{C}_{ig}))^{N_g-L_{ig}-M_{ig}}
		\end{split}
		\label{eq:proofcondprob2}
	\end{align}
	Let $\mathcal{B}_{(d,l,t,m)}$ be the set of permutations of treatment and link formation decisions that produce $B_{ig} = (d,l,t,m)$, where  $B_{ig} \equiv (D_{ig},L_{ig},T_{ig},M_{ig})$. Then $\mathbb{P}_{B_{ig} \mid \widetilde{C}_{ig}}(d,l,t,m)$ is equal to $\sum_{(D_g,A_{ijg}) \in \mathcal{B}_(d,l,t,m)} \mathbb{P}(D_g,A_{ijg})$. The resulting distribution has the form
	\begin{align*}
		\label{eq:factorbinomialpoisson}
		\begin{split}
			L_{ig} \mid \widetilde{C}_{ig} &\sim \text{Binom}(p_\ell(\widetilde{C}_{ig}),N_g) \\
			T_{ig} \mid L_{ig},\widetilde{C}_{ig} &\sim \text{Binom}(p_f(\widetilde{C}_{ig}),L_{ig}) \\
			D_{ig} \mid T_{ig},L_{ig},\widetilde{C}_{ig} &\sim \text{Bernoulli}(p_d(\widetilde{C}_{ig})) \\			
			M_{ig} \mid D_{ig},T_{ig},L_{ig},\widetilde{C}_{ig} &\sim \text{Binom}(p_{m}(\widetilde{C}_{ig},N_g-L_{ig})) \\
		\end{split}
	\end{align*}
	To complete the statement of the lemma, we only report the distribution of $(D_{ig},T_{ig}) \mid \widetilde{C}_{ig},L_{ig}$, which does not depend on $M_{ig}$. The resulting distribution does not involve $p_m(\widetilde{C}_{ig})$.
	
\end{proof}

\begin{proof}[Proof Theorem \ref{thm:expostbalancing} (\nameref{thm:expostbalancing})]	
	If \nameref{assumptxt:randomsampling} and \nameref{assumptxt:dyadicnetwork} hold, then we can apply Lemma \ref{lem:conditionaldist} to show that $D_{ig} \ \indep \ (T_{ig},L_{ig}) \mid C_{ig}$ and
	\begin{align*}
		\begin{split}
			D_{ig} \mid T_{ig},L_{ig},C_{ig},\Psi_g^*  &\sim\text{Bernoulli}(p_{dig}) \\
			T_{ig} \mid L_{ig},C_{ig},\Psi_g^*  &\sim \text{Binomial}(p_{fig},L_{ig})
		\end{split}
	\end{align*}
	The distribution of $(D_{ig},T_{ig},L_{ig})$ is parametrized by $P_{ig} \equiv (p_{dig},p_{fig},L_{ig})$, which means that $(D_{ig},T_{ig},L_{ig}) \mid C_{ig},\Psi_g^*, P_{ig} \sim (D_{ig},T_{ig},L_{ig}) \mid P_{ig}$. Consequently, the network propensity score and the group size summarizes all the pretreatment information and 
	\[
	(D_{ig},T_{ig},L_{ig}) \ \indep \ C_{ig},\Psi_g^* \mid P_{ig}.
	\]
	By construction $X_{ig}$ is a measurable function of  $(D_{ig},L_{ig},T_{ig})$. By applying the decomposition property in Lemma \ref{lem:axioms},
	\begin{equation}
	X_{ig} \ \indep \ C_{ig},\Psi_g^* \mid P_{ig}.
	\label{eq:balancingproof_exante}
	\end{equation}
	This shows that $P_{ig}$ is a balancing score.
	
	If \nameref{assumptxt:randomsampling},  \nameref{assumptxt:confounders} and \nameref{assumptxt:dyadicnetwork} hold, then Theorem \ref{thm:fullcontrolfunction} states that $\tau_{ig} \ \indep \ (X_{ig}, L_{ig}) \mid C_{ig},\Psi_g^*$ which implies $\tau_{ig} \ \indep \ \mid C_{ig},\Psi_g^*,L_{ig}$. By combining the redundancy and weak union properties, it follows that $\tau_{ig} \ \indep \ X_{ig} \mid C_{ig},\Psi_g^*,P_{ig}$. Consequently, by \eqref{eq:balancingproof_exante} and the contraction property, $(\tau_{ig},C_{ig},\Psi_g^*) \ \indep \ X_{ig} \mid P_{ig}$. We can simplify the final expression by the decomposition property,
	\[
	\tau_{ig} \ \indep \ X_{ig} \mid P_{ig}.
	\] 
	
\end{proof}

\begin{proof}[Proof of Theorem \ref{thm:bounds_pseudometrics} (\nameref{thm:bounds_pseudometrics})]
	
	The difference between two friend propensity scores, $d_f$, is equal to
	\begin{align*}
		&d_f =  \left\Vert \int \mathcal{H}(C^*,\Psi_g^*)\left[ \frac{\mathcal{L}(C_{ig},C^*,\Psi_g^*)}{p_\ell(C_{ig},\Psi_g^*)} - \frac{\mathcal{L}(C_{jg},C^*,\Psi_g^*)}{p_\ell(C_{jg},\Psi_g^*)}\right]dF(C^* \mid \Psi_g^*) \right\Vert \\
		&\le \int \left\Vert \mathcal{H}(C^*,\Psi_g^*)\left[ \frac{\mathcal{L}(C_{ig},C^*,\Psi_g^*)}{p_\ell(C_{ig},\Psi_g^*)} - \frac{\mathcal{L}(C_{jg},C^*,\Psi_g^*)}{p_\ell(C_{jg},\Psi_g^*)}\right] \right\Vert dF(C^* \mid \Psi_g^*) \\
		&\le \int \left\Vert  \frac{\mathcal{L}(C_{ig},C^*,\Psi_g^*)}{p_\ell(C_{ig},\Psi_g^*)} - \frac{\mathcal{L}(C_{jg},C^*,\Psi_g^*)}{p_\ell(C_{jg},\Psi_g^*)} \right\Vert dF(C^* \mid \Psi_g^*) \\
		&\le \int \left\Vert  \frac{\mathcal{L}(C_{ig},C^*,\Psi_g^*)-\mathcal{L}(C_{jg},C^*,\Psi_g^*)}{p_\ell(C_{ig},\Psi_g^*)} + \left[ \frac{\mathcal{L}(C_{jg},C^*,\Psi_g^*)}{p_\ell(C_{ig},\Psi_g^*)}-\frac{\mathcal{L}(C_{jg},C^*,\Psi_g^*)}{p_\ell(C_{jg},\Psi_g^*)} \right]  \right\Vert dF(C^* \mid \Psi_g^*) \\
		&\le \frac{1}{p_{\ell}(C_{ig},\Psi_g^*)}\int \Vert \mathcal{L}(C_{ig},C^*,\Psi_g^*) - \mathcal{L}(C_{jg},C^*,\Psi_g^*) \Vert dF(C^* \mid \Psi_g^*) + \\
		&\quad + \left[ \frac{1}{p_\ell(C_{ig},\Psi_g^*)}-\frac{1}{p_\ell(C_{jg},\Psi_g^*)} \right] \int \left\Vert   \mathcal{L}(C_{jg},C^*,\Psi_g^*) \right\Vert dF(C^* \mid \Psi_g^*) \\
		&\le \frac{1}{p_{\ell}(C_{ig},\Psi_g^*)}d_{\Psi_g^*} + \left[ \frac{1}{p_\ell(C_{ig},\Psi_g^*)}-\frac{1}{p_\ell(C_{jg},\Psi_g^*)} \right] p_\ell(C_{jg},\Psi_g^*) \\
		&\le \frac{1}{p_{\ell}(C_{ig},\Psi_g^*)}d_{\Psi_g^*} + \frac{1}{p_{\ell}(C_{ig},\Psi_g^*)}(p_\ell(C_jg,\Psi_g^*)-p_\ell(C_{ig},\Psi_g^*)) \\
		&\le \frac{1}{p_{\ell}(C_{ig},\Psi_g^*)}d_{\Psi_g^*} + \frac{1}{p_{\ell}(C_{ig},\Psi_g^*)}d_{\Psi_g^*}
	\end{align*}
	
\end{proof}

\begin{proof}[Proof of Lemma \ref{lem:mixturepresentationQxx} (\nameref{lem:mixturepresentationQxx})]
	By construction we can write the covariates as $X_{ig}' = (1,\varphi(T_{ig},L_{ig})) \otimes (1,D_{ig})$. Therefore we can write $X_{ig}X_{ig}'$ as
	\[
	X_{ig}X_{ig}' = \begin{pmatrix}1 & \varphi(T_{ig},L_{ig})' \\ \varphi(T_{ig},L_{ig}) & \varphi(T_{ig},L_{ig})\varphi(T_{ig},L_{ig})' \end{pmatrix} \otimes \begin{pmatrix} 1 & D_{ig} \\ D_{ig} & D_{ig} \end{pmatrix}
	\]
	Define the functions
	\begin{align*}
	\widetilde{\varphi}_1(p_f,l) &= \mathbb{E}[\varphi(T_{ig},L_{ig}) \mid p_{fig} = p_f,L_{ig} = l] \\
	\widetilde{\varphi}_2(p_f,l) &= \mathbb{E}[\varphi(T_{ig},L_{ig})\varphi(T_{ig},L_{ig})' \mid p_{fig} = p_f,L_{ig} = l].
	\end{align*}
	Under Lemma \ref{lem:conditionaldist}, $D_{ig}$ is conditionally independent of $(T_{ig},L_{ig})$ given $(C_{ig},\Psi_g^*,L_{ig})$, and the distributions are parametrized by the components of the network propensity score. Therefore we can decompose the conditional moments of $X_{ig}X_{ig}'$ as
	\[
	\mathbb{E}[X_{ig}X_{ig}' \mid C_{ig} = c,\Psi_g^* = \Psi, L_{ig} = l] = \begin{pmatrix}1 & \widetilde{\varphi}_1(p_f,l)'\\ \widetilde{\varphi}_1(p_f,l) & \widetilde{\varphi}_2(p_f,l)' \end{pmatrix} \otimes \begin{pmatrix} 1 & p_d \\ p_d & p_d \end{pmatrix}
	\]	
	Since $V_{ig}$ is measurable with respect to $(C_{ig},\Psi_g^*,L_{ig})$ we can apply the law of iterated expectations to obtain
	\begin{align}
		\mathbf{Q}_{xx}(v) = \int  \begin{pmatrix} 1 & \widetilde{\varphi}_1(p_{f},l) \\ \widetilde{\varphi}_1(p_{f},l,p_{f},l) & \widetilde{\varphi}_2(p_{f},l) \end{pmatrix} \otimes \begin{pmatrix} 1 & p_d \\ p_d & p_d \end{pmatrix} dF(p_{d},p_{f},l \mid V_{ig} = v).		
	\end{align}

\end{proof}

\begin{proof}[Proof of Lemma \ref{lem:quasisaturation} (\nameref{lem:quasisaturation})]
	Let $\widetilde{C}_{ig} \equiv (C_{ig},\Psi_g^*)$ and $X_{ig}^* \equiv (X_{ig},L_{ig})$. If \nameref{assumptxt:randomsampling}, \nameref{assumptxt:confounders} and \nameref{assumptxt:dyadicnetwork} hold, then we can apply Theorem \ref{thm:expostbalancing} to show that $(X_{ig},L_{ig}) \ \indep \ \widetilde{C}_{ig} \mid p_d(\widetilde{C}_{ig}),p_f(\widetilde{C}_{ig}),L_{ig}$.
	
	Under \nameref{assumptxt:randomsampling}, \nameref{assumptxt:confounders}  and \nameref{assumptxt:confounders} we can apply Theorem \ref{thm:fullcontrolfunction} to show that $(X_{ig},L_{ig}) \ \indep \ \tau_{ig} \mid \widetilde{C}_{ig}$. By the weak union property, $(X_{ig},L_{ig}) \ \indep \ \tau_{ig} \mid \widetilde{C}_{ig}, p_d(\widetilde{C}_{ig}),p_f(\widetilde{C}_{ig}),L_{ig}$. Applying the contraction axiom,
	\[
	(X_{ig},L_{ig}) \ \indep \ (\tau_{ig},\widetilde{C}_{ig}) \mid p_d(\widetilde{C}_{ig}),p_f(\widetilde{C}_{ig}),L_{ig}
	\]	
	Since $V_{ig}$ is $(\widetilde{C}_{ig},L_{ig})-$measurable, we can apply the weak union property, as
	\[
	(X_{ig},L_{ig}) \ \indep \ (\tau_{ig},\widetilde{C}_{ig}) \mid p_d(\widetilde{C}_{ig}),p_f(\widetilde{C}_{ig}),L_{ig},V_{ig}
	\]
	` By decomposition $X_{ig} \ \indep \ \tau_{ig} \mid p_d(\widetilde{C}_{ig}),p_f(\widetilde{C}_{ig}),L_{ig},V_{ig}$. Since by assumption of the theorem $p_d(\widetilde{C}_{ig}),p_f(\widetilde{C}_{ig}),L_{ig} \ \indep \ \tau_{ig} \mid V_{ig}$, we can apply the contraction axiom again to show that
	\[
	(X_{ig},p_d(\widetilde{C}_{ig}),p_f(\widetilde{C}_{ig}),L_{ig}) \ \indep \ \tau_{ig} \mid V_{ig}.
	\]
	Finally, by the decomposition property, $X_{ig} \ \indep \ \tau_{ig} \mid V_{ig}$.
	
\end{proof}

\newpage

\subsection{Proof Asymptotics}


\begin{proof}[Proof of Theorem \ref{thm:normality_averageeffects} (\nameref{thm:normality_averageeffects}) ]
	Define the square residual function as $\mathcal{R}(z,v_2,\theta) \equiv r((x,y,(v_1,v_2)),\theta)^2$, so that the estimated and population criterion functions can be written as
	\[
	\widehat{\mathcal{R}}_t(\boldsymbol{\theta}) \equiv \frac{1}{G_t\bar{n}}\sum_{ig} \mathcal{R}(Z_{igt},V_{2igt},\boldsymbol{\theta}),
	\]
	\[
	\mathcal{R}_t(\boldsymbol{\theta}) \equiv \mathbb{E}[\mathcal{R}(Z_{igt},V_{2igt}^0,\boldsymbol{\theta})].
	\]
	Our first task is to prove uniform convergence of the criterion function by verifying the conditions of Lemma \ref{lem:uniformconvergence_sampleaverages}. First, by Assumption (vii) $\max_{ig} \Vert V_{2igt} - V_{2igt}^0 \Vert = O_p\left(\lambda_t \right)$. By assumption (viii), $\sqrt{G_t}\lambda_t = o(1)$ which means that the maximum discrepancy is $\tau_t = o(1)$, as required.
	
	Second we verify the uniform bounds on the moments. Assumptions (ii) and (iii) in \nameref{assumptxt:regularity} imply that $\mathcal{R}_t(Z_{igt},V_{2igt}^0,\boldsymbol{\theta})$ has bounded moments. Conversely, let $\mathcal{R}_{igt}^V$ and $S_{igt}$ be uniform bounds on the derivatives $\tfrac{\partial \mathcal{R}}{\partial v_2}$ and the score $\psi_{q} \equiv \tfrac{\partial \mathcal{R}}{\partial \theta}$ as defined in \eqref{eq:envelope_RA} and \eqref{eq:envelope_S}. These bounds hold uniformly over $\boldsymbol{\tau}$ because the average effect parameter does not enter $\mathcal{R}$. The bound on the expectation of the Sobolev-norm in \nameref{assumptxt:regularity} part (ii) and Lemma \eqref{lem:uniformboundsderivatives} imply that $\mathbb{E}[\mathcal{R}_{igt}^V]<\infty$ and $\mathbb{E}[S_{igt}]<\infty$. Consequently, $\mathcal{R}$ satisfies the requirements of Lemma \ref{lem:uniformconvergence_sampleaverages}, and hence
	\begin{equation}
	\sup_{\boldsymbol{\theta} \in \Theta} \Vert \widehat{\mathcal{R}}_t(\boldsymbol{\theta}) - \mathcal{R}_t(\boldsymbol{\theta}) \Vert \to^{p} 0
	\label{eq:uniformconsistency_criterion}
	\end{equation}
	Our next task is to show that $\widehat{\boldsymbol{\theta}}_t$ is consistent. By \nameref{assumptxt:regularity} part (i) for any $\delta > 0$ there exists a $\nu > 0$ such that 
	\begin{align*}
		&\mathbb{P}\left( \left\Vert \widehat{\boldsymbol{\theta}}_t-\boldsymbol{\theta}_{0t} \right\Vert > \delta \right) \\
		&\le \mathbb{P}(\mathcal{R}_t(\widehat{\boldsymbol{\theta}}_t) - \mathcal{R}_t(\boldsymbol{\theta}_{0t}) \ge \nu )\\
		&= \mathbb{P}(\mathcal{R}_t(\widehat{\boldsymbol{\theta}}_t) - \widehat{\mathcal{R}}_t(\widehat{\boldsymbol{\theta}}_t) + \widehat{\mathcal{R}}_t(\widehat{\boldsymbol{\theta}}_t) - \mathcal{R}_t(\boldsymbol{\theta}_{0t}) \ge \nu ) &\text{ Adding/subtracting } \mathcal{R}_t(\widehat{\boldsymbol{\theta}}_t)\\
		&\le \mathbb{P}(\mathcal{R}_t(\widehat{\boldsymbol{\theta}}_t) - \widehat{\mathcal{R}}_t(\widehat{\boldsymbol{\theta}}_t) + \widehat{\mathcal{R}}_t(\boldsymbol{\theta}_{0t}) - \mathcal{R}(\boldsymbol{\theta}_{0t}) \ge \nu ) &\text{ Since } \widehat{R}_t(\widehat{\boldsymbol{\theta}}_t) \le \widehat{R}_t(\boldsymbol{\theta}_{0t}) \\
		&\le \mathbb{P}\left(2 \sup_{\boldsymbol{\theta} \in \Theta} \left\Vert \widehat{\mathcal{R}}_t(\boldsymbol{\theta}) - \mathcal{R}_t(\boldsymbol{\theta}) \right\Vert \ge \nu \right) &\text{Uniform Bound}\\
		&\to^p 0 &\text{ By \eqref{eq:uniformconsistency_criterion}}
	\end{align*}
	Consequently $\widehat{\boldsymbol{\theta}}_t \to^{p} \boldsymbol{\theta}_{0t}$.
	
	We now turn to the task of proving asymptotic normality. In a slight abuse of notation, I use $\psi(z,v_2,\boldsymbol{\tau},\boldsymbol{\theta})$ to denote the influence function $\psi((x,y,(v_1,v_2)),\boldsymbol{\tau},\boldsymbol{\theta})$
	\[
	o_p\left( \tau \right) = \frac{1}{G_tN_t}\sum_{ig} \psi(Z_{igt},V_{2igt},\widehat{\boldsymbol{\tau}}_t,\widehat{\boldsymbol{\theta}}_t)
	\]	
	By a first-order expansion
	\begin{align}
	\begin{split}
	0&= \frac{1}{G_tN}\sum_{ig} \psi(Z_{igt},V_{2igt}^0,\boldsymbol{\tau}_{0t},\boldsymbol{\theta}_{0t})  \\
	&\quad + \frac{1}{G_tN}\sum_{ig} \tfrac{\partial}{\partial v_2}\psi(Z_{igt},\widetilde{V}_{2igt},\widetilde{\boldsymbol{\tau}},\widetilde{\boldsymbol{\theta}}_t)\Delta_{igt}
	+ 	\frac{1}{NG_t}\sum_{ig}\left( \begin{matrix}
	\frac{\partial}{\partial \boldsymbol{\tau}'} \psi(Z_{igt},\widetilde{V}_{2igt},\widetilde{\boldsymbol{\tau}},\widetilde{\boldsymbol{\theta}}_t) \\
	\frac{\partial}{\partial \boldsymbol{\theta}'} \psi(Z_{igt},\widetilde{V}_{2igt},\widetilde{\boldsymbol{\tau}},\widetilde{\boldsymbol{\theta}}_t)
	\end{matrix}\right) \
	\left( \begin{matrix}  \widehat{\boldsymbol{\tau}}_t - \boldsymbol{\tau}_{0t} \\
	\widehat{\boldsymbol{\theta}}_t - \boldsymbol{\theta}_{0t} \end{matrix} \right)
	\label{eq:firstorder_paramthetabeta}
	\end{split}
	\end{align}

	Our next task is to show that the second term is $O_p\left(\lambda_t \sqrt{G_t} \right)$. To this end it is useful to decompose that influence function into two sets of equations $\psi = [\psi_q,\psi_{IW}]'$, for the weighting matrix and the average effects, respectively. Let $B(\boldsymbol{\theta}_{0t},\nu)$ denote a ball of radius $\nu$ around the true parameter. By assumption (iv) the smallest eigenvalue of ${Q}_{xx}$ is bounded by a fixed constant for $\boldsymbol{\theta} \in B(\boldsymbol{\theta}_{0t},\nu)$. Since $\widehat{\boldsymbol{\theta}}_t$ and $\widetilde{\boldsymbol{\theta}}_t$ are both consistent, the estimator is contained in the ball with probability approaching one as $(G_t,N_t) \to \infty$.
		
	Define $S_{igt}^V$ and $\psi_{IW,ig}^{\partial}$ as uniform upper bounds for the partial derivatives of $s$ and $\psi_{IW}$ as defined in \eqref{eq:envelope_SA} and \eqref{eq:envelope_psitheta}. Furthermore, let $\Delta_{max} \equiv \max_{ig}\Vert V_{2igt} - V_{2igt}^0 \Vert$ be the maximum discrepancy between the generated and true regressors. By the triangle inequality.
	\begin{align}
	\begin{split}
	&\left\Vert	\frac{1}{G_t\bar{n}_t}\sum_{ig} \tfrac{\partial}{\partial v_2}\psi(Z_{igt},\widetilde{V}_{2igt},\widetilde{\boldsymbol{\tau}},\widetilde{\boldsymbol{\theta}}_t)\Delta_{igt} \right\Vert \\
	&\le \frac{1}{G_t\bar{n}_t}\sum_{ig} \left\Vert \tfrac{\partial}{\partial v_2}\psi(Z_{igt},\widetilde{V}_{2igt},\widetilde{\boldsymbol{\tau}},\widetilde{\boldsymbol{\theta}}_t) \right\Vert \ \cdot \ \Delta_{max} \\
	&\le \frac{1}{G_t\bar{n}_t}\sum_{ig} \left( \left\Vert \tfrac{\partial s(Z_{igt},\widetilde{V}_{2igt},\widetilde{\boldsymbol{\tau}},\widetilde{\boldsymbol{\theta}}_t)}{\partial v_2} \right\Vert +
	\left\Vert \tfrac{\partial \psi_{IW}(Z_{igt},\widetilde{V}_{2igt},\widetilde{\boldsymbol{\tau}},\widetilde{\boldsymbol{\theta}}_t)}{\partial v_2} \right\Vert \right) \ \cdot \ \Delta_{max} \quad \ \ \text{Component Bounds}
	\\
	&\le \left( \frac{1}{G_t\bar{n}_t}\sum_{ig} S_{igt}^V + \psi_{IW,ig}^{\partial} \right) \cdot \Delta_{max} + o_p(1) \qquad\qquad\qquad\qquad \text{Since $\widetilde{\boldsymbol{\theta}}_t \in B(\boldsymbol{\theta}_{0t},\nu)$ w.p.a.1}
	\end{split}
	\label{eq:bound_deriv_psi_A}
	\end{align}
	The discrepancy $\Delta_{max}$ is $O_p\left(\lambda_t\right)$ by Assumption (vii). Conversely, the bounds on the expectation of the Sobolev-norm in \nameref{assumptxt:regularity} part (ii) and the moments in (iii) can be used to show that $\mathbb{E}[S_{igt}^V],\mathbb{E}[\psi_{IW,ig}^\partial]<\infty$ and $\frac{1}{G_t\bar{n}_t}\sum_{ig}  \left(S_{igt}^V + \psi_{IW,ig}^{\partial}\right) = O_p(1)$, by Lemmas \ref{lem:uniformboundsderivatives}
and \ref{lem:Op_bounds_averages}, respectively. By combining the two findings we conclude that the right-hand side of \eqref{eq:bound_deriv_psi_A} is $O_p\left(\lambda_t\right)$.

	The partial derivative with respect to $\boldsymbol{\tau}$ in \eqref{eq:firstorder_paramthetabeta} has a simple form
	\[
	\frac{\partial}{\partial \boldsymbol{\tau}'} \psi(Z_{igt},\widetilde{V}_{2igt},\widetilde{\boldsymbol{\tau}},\widetilde{\boldsymbol{\theta}}_t) = \left( \begin{matrix} \boldsymbol{0} \\ -I \end{matrix} \right) \equiv H_{0,\tau}
	\]
	The first set of rows is zero because the equations to compute to the weighting matrix and the second rows is the identity because $\boldsymbol{\tau}$ enters linearly in $\psi_{IW}$. In this case the derivative is constant and crucially, does not depend on the estimated parameters.
	
	Since the components that contain $\boldsymbol{\theta}$ and $\boldsymbol{\tau}$ are additively separable, the partial derivative with respect to $\boldsymbol{\theta}$ in \eqref{eq:firstorder_paramthetabeta} does not depend on $\boldsymbol{\tau}$. We write this concisely as
	\begin{equation}
	\frac{\partial}{\partial \boldsymbol{\theta}'} \psi(Z_{igt},\widetilde{V}_{2igt},\widetilde{\boldsymbol{\tau}},\widetilde{\boldsymbol{\theta}}_t) = 	\frac{\partial}{\partial \boldsymbol{\theta}'} \psi(Z_{igt},\widetilde{V}_{2igt},\widetilde{\boldsymbol{\theta}}_t) = \left( \begin{matrix} \tfrac{\partial}{\partial \boldsymbol{\theta}'}\psi_q(Z_{igt},\widetilde{V}_{2igt},\tilde{\boldsymbol{\theta}}) \\ \tfrac{\partial}{\partial \boldsymbol{\theta}'}\psi_{IW}(Z_{igt},\widetilde{V}_{2igt},\tilde{\boldsymbol{\theta}}) \end{matrix} \right)
	\label{eq:deriv_psi_theta}
	\end{equation}
	Our next task is to impose integrable bounds on \eqref{eq:deriv_psi_theta} in order to apply the uniform consistency result in \ref{lem:uniformconvergence_sampleaverages}. On one hand, our bounds on the expectations in assumptions (ii) and (iii) allow us to apply the first part of Lemma \ref{lem:uniformboundsderivatives}. The lemma shows that $\tfrac{\partial \psi_q}{\partial \boldsymbol{\theta}},\tfrac{\partial^2 \psi_q}{\partial \boldsymbol{\theta}\partial \boldsymbol{\theta}'}$, $\tfrac{\partial^2 \psi_q}{\partial v_2 \partial \boldsymbol{\theta}'}$ are uniformly bounded over $(V_2,\boldsymbol{\theta}) \in \mathcal{V}_2\times \Theta$ by integrable random variables. On the other hand, assumption (ii), (iii) and (iv) allow us to apply the second part of the lemma, which implies $\frac{\partial^2 \psi_{IW}}{\partial \boldsymbol{\theta}\partial \boldsymbol{\theta}'}, \frac{\partial^2 \psi_{IW}}{\partial v_2 \partial \boldsymbol{\theta}'}$ are uniformly bounded over $(V_2,\boldsymbol{\theta}) \in \mathcal{V}_2\times B(\boldsymbol{\theta}_{0t},\nu)$ by an integrable random variable. Consequently, we can apply Lemma \ref{lem:uniformconvergence_sampleaverages} to show that 
	\begin{equation}
	\sup_{\boldsymbol{\theta} \in B(\boldsymbol{\theta}_{0t},\nu)} \left\Vert \frac{1}{G_t\bar{n}_t}\sum_{ig} \tfrac{\partial}{\partial \boldsymbol{\theta}'} \psi(Z_{igt},\widetilde{V}_{2igt},\widetilde{\boldsymbol{\theta}}_t) - \mathbb{E}\left[\tfrac{\partial}{\partial \boldsymbol{\theta}'} \psi(Z_{igt},V_{2igt}^0,\boldsymbol{\theta})\right] \right\Vert \to^{p} 0
	\label{eq:uniformconsistency_psi_theta}
	\end{equation}
	Since $\widehat{\boldsymbol{\theta}}_t$ is consistent $\Vert \widetilde{\boldsymbol{\theta}}_t-\boldsymbol{\theta}_{0t} \Vert \le \Vert \widehat{\boldsymbol{\theta}}_t-\boldsymbol{\theta}_{0t} \Vert = o_p(1)$. Therefore, by the uniform consistency result in \eqref{eq:uniformconsistency_psi_theta},
	\[
	\frac{1}{G_t\bar{n}_t} \sum_{ig} \tfrac{\partial}{\partial \boldsymbol{\theta}'} \psi(Z_{igt},\widetilde{V}_{2igt},\widetilde{\boldsymbol{\theta}}_t) \to^{p} \mathbb{E}\left[\tfrac{\partial}{\partial \boldsymbol{\theta}'} \psi(Z_{igt},V_{2igt}^0,\boldsymbol{\theta}_{0t})\right] \equiv H_{0,\theta}
	\]
	By assumption (iv), $H_0 = [H_{0,\tau},H_{0,\theta}]$ is full rank. Therefore, solving for the parameter in \eqref{eq:firstorder_paramthetabeta} and multiplying by $\sqrt{G_t}$,
	\[
	\sqrt{G_t}	\left( \begin{matrix}  \widehat{\boldsymbol{\tau}}_t - \boldsymbol{\tau}_{0t} \\
		\widehat{\boldsymbol{\theta}}_t - \boldsymbol{\theta}_{0t} \end{matrix} \right) =
	 -\left( H_0+ o_p(1) \right)^{-1}
	 \left[ \sqrt{G_t}\left( \frac{1}{NG_t}\sum_{ig} \psi(Z_{igt},V_{2igt}^0,\boldsymbol{\theta}_{0t})  \right)
	 + O_p\left(\lambda_t \sqrt{G_t} \right) \right]
	\]
	Let $\mathbb{E}_*$ and $\mathbb{E}$ denote the sampling (equal-weighted-group) measure and the population measure respectively. By construction, $\mathbb{E}_*[\rho_{gt} \psi(Z_{igt},\theta)] = \mathbb{E}[\psi(Z_{igt},\theta)]$, where $\rho_{gt} \equiv N_{gt} / N_t$ is the relative size of each group. By Lemma \ref{lem:Op_bounds_averages}, $\tfrac{\bar{n}_t}{N_t} \to^p 1$ as $(G_t,N_t) \to \infty$. Conversely, define the within-group average $\overline{\psi}_g(\boldsymbol{Z}_g,\boldsymbol{\theta}_{0t}) \equiv \frac{1}{N_{gt}}\sum_{i=1}^{N_{gt}} \psi(Z_{igt},V_{2igt}^0,\boldsymbol{\theta}_{0t})$, where $\boldsymbol{Z}_g \equiv \{ (X_{igt},Y_{igt},(V_{igt}^0))\}_{i=1}^{N_{gt}}$ is a matrix of individual covariates. By some algebraic manipulations
	\[
	\frac{\sqrt{G_t}}{\bar{n}_tG_t}\sum_{ig} \psi(Z_{igt},V_{2igt}^0,\boldsymbol{\theta}_{0t})   = \left( \frac{N_t}{\bar{n}_t} \right)\left(\frac{1}{\sqrt{G_t}}\sum_{g=1}^{G_t} \rho_{gt} \overline{\psi}_{g}(\boldsymbol{Z}_g,\boldsymbol{\theta}_{0t}) \right)
	\]
	Our final task is to apply the central limit theorem. First, we check that the influence function is mean zero. By distributing the expectation
	\[
	\mathbb{E}_*[\rho_{gt} \overline{\psi}_g(\boldsymbol{Z}_g,\boldsymbol{\tau}_{0t},\boldsymbol{\theta}_{0t})] = \mathbb{E}_*[\rho_{gt} \psi(Z_{igt},V_{2igt}^0,\boldsymbol{\tau}_{0t},\boldsymbol{\theta}_{0t})] = \mathbb{E}[\psi(Z_{igt},V_{2igt}^0,\boldsymbol{\tau}_{0t},\boldsymbol{\theta}_{0t})]
	\]
	Recall that $\psi = [s,\psi_{IW}]$. The mean of $s$ is equal to zero at the true value when the weighting matrix is properly specified. Similarly, $\psi_{IW}$ is equal to zero by Theorem \ref{lem:randomcoef_identification}.
	
	Finally, by assumption (v), $\mathbb{E}_*[\rho_{gt}^2 \overline{\psi}_{g}\overline{\psi}_{g}'] = \mathbb{E}[\rho_{gt} \overline{\psi}_{g}\overline{\psi}_{g}'] \equiv \Omega_{0t}$ is a positive-definite matrix. By the Lindenber-Feller central limit theorem, as $(G_t,N_t) \to \infty$,
	\[
	\Omega_{0t}^{-1}\left(\frac{1}{\sqrt{G_t}}\sum_{g=1}^{G_t} \rho_{gt} \overline{\psi}_{g}(\boldsymbol{Z}_g,\boldsymbol{\theta}_{0t})\right) \to^d \mathcal{N}(\boldsymbol{0},I).
	\]
	Combining the results we prove that the estimator converges to a normal distribution plus a bias term, 
	\[
	\sqrt{G_t}\Sigma_t^{-1/2}\left( \begin{matrix} \widehat{\boldsymbol{\theta}}_t - \boldsymbol{\theta}_{0t} \\
	\widehat{\boldsymbol{\tau}}_t - \boldsymbol{\tau}_{0t} 
	\end{matrix} \right)  \to^d \mathcal{N}(\boldsymbol{0},I) + O_p\left(\lambda_t \sqrt{G_t} \right).
	\]
	where $\Sigma_t = H_{0t}^{-1}\Omega_{0t}H_{0t}^{-1}$. Under assumption (viii) the second term is $o_p(1)$,
	
\end{proof}

\newpage

\subsection{Proofs Extensions and Experiments}

\begin{proof}[Proof of Theorem \ref{thm:att_atu_identification} (\nameref{thm:att_atu_identification})]
	For the APT, our first objective is to rewrite the inner term of the expectation in terms of the localized effect $\boldsymbol{\tau}(v)$, instead of $(\mathbf{Q}_{xx},X_{ig},Y_{ig})$. To this end we compute the conditional expectation given a particular value of the control variable $V_{ig}$.  In this case $p_d(V_{ig})$ is a constant given $V_{ig}$, so we can directly apply part (i) of Lemma \ref{lem:randomcoef_identification},
	\begin{align}
		\begin{split}
			\mathbb{E}[p_d(V_{ig}) \times \mathbf{Q}_{xx}(V_{ig})^{-1}X_{ig}Y_{ig} \mid V_{ig} = v]
			&= p_d(V_{ig}) \times \mathbf{Q}_{xx}(v)^{-1}\mathbb{E}[X_{ig}Y_{ig} \mid V_{ig}] \\
			&= p_d(V_{ig}) \times \mathbf{Q}_{xx}(v)^{-1}\mathbf{Q}_{xy}(V_{ig}) \\
			&= 	p_d(v) \times \boldsymbol{\tau}(v)
		\end{split}
		\label{eq:att_condexpect_initial}
	\end{align}
	The second task is to express \eqref{eq:att_condexpect_initial} in terms of of the primitives $(D_{ig},\tau_{ig})$. By definition $p_d(v) = \mathbb{P}(D_{ig} = 1 \mid V_{ig} = v)$. Since $V_{ig}$ is a control variable for $D_{ig}$, it follows that
	\[
	\boldsymbol{\tau} \equiv \mathbb{E}[\tau_{ig} \mid V_{ig} = v] = \mathbb{E}[\tau_{ig} \mid V_{ig} = v, D_{ig} = 1]
	\]
	Then by the law of iterated expectations $p_d(v) \times \boldsymbol{\tau}(v) $ equals
	\begin{align}
		\begin{split}
			\mathbb{P}(D_{ig} = 1 \mid V_{ig} = v) \times \mathbb{E}[\tau_{ig} \mid V_{ig} = v, D_{ig} = 1] = \mathbb{E}[ D_{ig} \tau_{ig} \mid V_{ig} = v]
		\end{split}
		\label{eq:att_condexpect_final}
	\end{align}
	Therefore \eqref{eq:att_condexpect_final} produces a simplified expression for the conditional expectation in \eqref{eq:att_condexpect_initial}. Applying the law of iterated expectations and substituting the expression in  \eqref{eq:att_condexpect_final}, 
	\[
	\mathbb{E}\left[p_d(V_{ig}) \times \mathbf{Q}_{xx}(V_{ig})^{-1}X_{ig}Y_{ig} \right] = \mathbb{E}[\mathbb{E}[ D_{ig} \tau_{ig} \mid V_{ig}]] = \mathbb{E}[D_{ig}\tau_{ig}]
	\]
	By Bayes's rule and the fact that $D_{ig}$ is binary,
	\begin{equation}
	\frac{\mathbb{E}[D_{ig} \tau_{ig}]}{\mathbb{P}(D_{ig} = 1)} = \mathbb{E}[\tau_{ig} \mid D_{ig} = 1] = \boldsymbol{\tau}_{APT}
	\label{eq:att_bayes}
	\end{equation}
	The unconditional effect, the APT and APU effects are mutually constrained by the law of iterated expectations, which implies that $
	\boldsymbol{\tau} = \mathbb{P}(D_{ig} = 1) \boldsymbol{\tau}_{APT} + \mathbb{P}(D_{ig} = 0) \boldsymbol{\tau}_{APU}$. Lemma \ref{lem:randomcoef_identification} implies that  $\boldsymbol{\tau} = \mathbb{E}[\mathbf{Q}_{xx}(V_{ig})^{-1}X_{ig}Y_{ig}]$. Therefore, we can solve for the APU effect by substituting the expressions for $(\boldsymbol{\tau},\boldsymbol{\tau}_{APT})$ and solving for $\boldsymbol{\tau}_{APU}$,
	\[
	\boldsymbol{\tau}_{APU} = \frac{1}{1-\mathbb{E}[D_{ig}]}\times \mathbb{E}\left[ (1-p_d(V_{ig})) \times \mathbf{Q}_{xx}(V_{ig})^{-1}X_{ig}Y_{ig} \right]
	\]		
\end{proof}

\newpage


\subsection{Supporting Lemmas}

\begin{lem}[Properties of Conditional Independence]
	\label{lem:axioms}
	Let $X,Y,Z,W$ be random vectors defined on a common probability space, and let $h$ be a measurable function.
	Then:
	\begin{enumerate}[(i)]
		\item (Symmetry): $X \indep Y|Z \implies Y \indep X|Z$.
		\item (Redundancy): $X \indep Y|Y$.
		\item (Decomposition): $X \indep Y|Z$ and $W = h(Y) \implies X \indep W|Z$.
		\item (Weak Union): $X\indep Y|Z$ and $W = h(Y) \implies X \indep Y|(W,Z)$.
		\item (Contraction): $X\indep Y|Z$ and $X\indep W|(Y,Z) \implies X\indep (Y,W)|Z$.
	\end{enumerate}
\end{lem}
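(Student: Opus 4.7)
The plan is to establish the five axioms via the joint factorization characterization: $X \indep Y \mid Z$ holds if and only if the regular conditional joint distribution of $(X,Y)$ given $Z$ factorizes as $F_{X,Y \mid Z}(x,y \mid z) = F_{X \mid Z}(x \mid z) F_{Y \mid Z}(y \mid z)$ almost surely in $z$. Under the Euclidean setup maintained throughout the paper, regular conditional distributions exist, so this equivalence applies and each axiom reduces to a short manipulation of joint factorizations.

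\textit{Symmetry} (i) is immediate since the right-hand side of the factorization is symmetric in the roles of $X$ and $Y$. \textit{Redundancy} (ii) holds because $Y$ is $Y$-measurable, so the conditional law of $Y$ given $Y$ is a point mass and $F_{X,Y \mid Y}(x,y \mid y^{*}) = F_{X \mid Y}(x \mid y^{*})\,\mathbbm{1}\{y^{*} \le y\}$ factors automatically. \textit{Decomposition} (iii) follows by integrating $F_{X,Y \mid Z}$ against the event $\{h(Y) \le w\}$ to push the marginal of $Y$ through $h$; the resulting expression is $F_{X,W \mid Z}(x,w \mid z) = F_{X \mid Z}(x \mid z) F_{W \mid Z}(w \mid z)$, where the product structure is inherited from $X \indep Y \mid Z$.

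For \textit{Weak Union} (iv), I would use Bayes' rule to express $F_{X,Y \mid W,Z}(x,y \mid w,z)$ as the ratio $F_{X,Y,W \mid Z}(x,y,w \mid z) / F_{W \mid Z}(w \mid z)$. Because $W = h(Y)$ is a deterministic function of $Y$, the joint distribution of $(X,Y,W)$ given $Z$ is supported on the graph of $h$, and applying $X \indep Y \mid Z$ factors the numerator so that the ratio collapses to $F_{X \mid Z}(x \mid z) F_{Y \mid W,Z}(y \mid w,z)$, as required. For \textit{Contraction} (v), the plan is to apply the premise $X \indep W \mid (Y,Z)$ to obtain $F_{X,W \mid Y,Z} = F_{X \mid Y,Z} F_{W \mid Y,Z}$, then invoke $X \indep Y \mid Z$ to replace $F_{X \mid Y,Z}$ with $F_{X \mid Z}$. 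Multiplying through by $F_{Y \mid Z}$ yields $F_{X,Y,W \mid Z}(x,y,w \mid z) = F_{X \mid Z}(x \mid z) F_{(Y,W) \mid Z}(y,w \mid z)$, which is the claim.

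The main obstacle is technical rather than conceptual: manipulating conditional distributions on general probability spaces requires care with the measure-zero sets on which the conditional expressions are undefined or non-unique. Because the paper's random vectors are Euclidean, regular conditional distributions exist and each manipulation above goes through up to null sets, which is the standard interpretation of ``almost surely'' in conditional independence statements. These are the classical graphoid axioms systematized by Dawid (1979), so the proposed derivations are standard; their role here is primarily to make the paper self-contained and to justify the repeated appeals to these properties in the proofs of Theorems \ref{thm:fullcontrolfunction} and \ref{thm:expostbalancing} and of Lemma \ref{lem:quasisaturation}.
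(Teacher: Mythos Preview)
Your proposal is correct, and in fact it does more than the paper does: the paper's ``proof'' of this lemma is simply a citation to \cite{Constantinou2017} with no argument given. Your factorization-based sketches are the standard route and are sound; the only remark is that the paper treats these axioms as known results to be quoted rather than proved, so your self-contained treatment goes beyond what was required.
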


\begin{proof}
	\cite{Constantinou2017}
\end{proof}

\begin{lem}[Combining Events]
	Let $E,E^*,U,U^*,\Psi$ be random variables on a common probability space. Suppose that (i) $E \ \indep \ E^* \mid \Psi$, (ii) $(E,E^*) \ \indep \ U^* \mid \Psi$ and (iii) $U \ \indep \ (U^*,E,E^*) \mid \Psi$. Then
	\[
	(E,U) \ \indep \ (E^*,U^*) \mid \Psi
	\]
	\label{lem:combiningevents}
\end{lem}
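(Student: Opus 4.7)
The plan is to deduce the conclusion purely by repeated application of the graphoid axioms (symmetry, decomposition, weak union, and contraction) collected in Lemma \ref{lem:axioms}, using the three hypotheses (i)--(iii) as the only inputs. All statements below are implicitly conditional on $\Psi$.

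First, I would handle the $E$ factor. From hypothesis (ii), $(E,E^*) \indep U^* \mid \Psi$, so by the weak union property with $W = E^*$, I get $E \indep U^* \mid (E^*,\Psi)$. Combined with hypothesis (i), $E \indep E^* \mid \Psi$, the contraction axiom yields
\[
E \indep (E^*, U^*) \mid \Psi.
\]
By symmetry this is equivalent to $(E^*,U^*) \indep E \mid \Psi$.

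Second, I would fold in the $U$ factor. Hypothesis (iii) says $U \indep (U^*,E,E^*) \mid \Psi$; by weak union (take $W = E$, which is a measurable function of the block $(U^*,E,E^*)$), this gives $U \indep (E^*,U^*) \mid (E,\Psi)$, or equivalently $(E^*,U^*) \indep U \mid (E,\Psi)$. Now applying contraction to the two statements $(E^*,U^*) \indep E \mid \Psi$ and $(E^*,U^*) \indep U \mid (E,\Psi)$ produces
\[
(E^*,U^*) \indep (E,U) \mid \Psi,
\]
which is the desired conclusion after one final appeal to symmetry.

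Since each step is a direct invocation of an axiom from Lemma \ref{lem:axioms}, there is no substantive obstacle; the only care required is bookkeeping, namely choosing the right factor to strip out at each weak-union step so that the hypotheses of contraction line up on a common conditioning set. I would present the argument as the two-stage contraction above rather than attempting a single application, because the single-stage version would require an intermediate independence $(E,U) \indep E^* \mid \Psi$ that is not directly available from the hypotheses without first doing essentially the same work.
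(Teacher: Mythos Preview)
Your proof is correct and follows essentially the same two-stage contraction argument as the paper: first establish $E \indep (E^*,U^*) \mid \Psi$ by combining (i) with a weak-union/decomposition consequence of (ii), then fold in $U$ by combining this with a weak-union/decomposition consequence of (iii). The only cosmetic difference is that the paper states the intermediate step from (ii) as $(E,E^*)\indep U^*\mid E^*,\Psi$ before decomposing, whereas you go straight to $E\indep U^*\mid (E^*,\Psi)$; both require an implicit symmetry to make the weak-union axiom apply as stated, so neither is more rigorous than the other.
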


\begin{lem}[Egocentric Likelihood]
	Suppose that $D_{ig}$ is $(C_{ig},\Psi_g^*,\eta_{ig})-$measurable and $A_{ijg}$ is $(C_{ig},C_{jg},\Psi_g^*,U_{ijg})-$measurable. If \nameref{assumptxt:randomsampling} and \nameref{assumptxt:dyadicnetwork} hold, then for $V_{ig} \equiv (C_{ig},\Psi_g^*)$
	\[
	\mathbb{P}(D_g,A_{ig} \mid V_{ig}) = \mathbb{P}(D_{ig} \mid V_{ig}) \prod_{j\ne i}^{N_g} \mathbb{P}(D_{jg},A_{ijg} \mid V_{ig})
	\]
	\label{lem:egocentric_factorization}
\end{lem}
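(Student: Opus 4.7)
The plan is to represent each factor on the left-hand side in terms of its primitive latent inputs and then exploit the conditional-independence structure induced by \nameref{assumptxt:randomsampling} and \nameref{assumptxt:dyadicnetwork} to show that, after conditioning on $V_{ig}=(C_{ig},\Psi_g^{*})$, the joint distribution splits into one block driven by $\eta_{ig}$ (generating $D_{ig}$) and $N_g-1$ independent blocks driven by $(C_{jg},\eta_{jg},U_{ijg})$ (each generating the pair $(D_{jg},A_{ijg})$). Factorization of the conditional probability then follows mechanically.

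First I would list the measurability facts: $D_{ig}=\mathcal{H}(C_{ig},\Psi_g^{*},\eta_{ig})$, $D_{jg}=\mathcal{H}(C_{jg},\Psi_g^{*},\eta_{jg})$, and $A_{ijg}=\mathcal{L}(C_{ig},C_{jg},\Psi_g^{*},U_{ijg})$. So $D_{ig}$ is measurable with respect to $(V_{ig},\eta_{ig})$ and, for each $j\ne i$, the pair $(D_{jg},A_{ijg})$ is measurable with respect to $(V_{ig},C_{jg},\eta_{jg},U_{ijg})$.

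Second, I would establish the key mutual-independence claim: conditional on $V_{ig}$, the collection
\[
\eta_{ig},\ \{(C_{jg},\eta_{jg},U_{ijg})\}_{j\ne i}
\]
is mutually independent. By \nameref{assumptxt:randomsampling}(ii), the tuples $(C_{jg},\eta_{jg})$ are i.i.d. across $j$ given $\Psi_g^{*}$, so $\{(C_{jg},\eta_{jg})\}_{j\ne i}$ are mutually independent of $(C_{ig},\eta_{ig})$ given $\Psi_g^{*}$; conditioning further on $C_{ig}$ is permitted by the weak union and decomposition axioms in Lemma \ref{lem:axioms}, because $C_{ig}$ is measurable with respect to the block it belongs to. By \nameref{assumptxt:dyadicnetwork}(ii), the shocks $\{U_{ijg}\}_{j\ne i}$ are i.i.d. and independent of the entire covariate/selection vector given $\Psi_g^{*}$, and likewise remain independent after adding $C_{ig}$ to the conditioning set. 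Combining these two statements (again via the contraction axiom), we obtain mutual independence of $\eta_{ig}$ and the triples $\{(C_{jg},\eta_{jg},U_{ijg})\}_{j\ne i}$ given $V_{ig}$.

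Third, the factorization drops out: since $(D_{ig},\{(D_{jg},A_{ijg})\}_{j\ne i})$ is a deterministic function of $V_{ig}$ and the mutually independent blocks above, its conditional distribution given $V_{ig}$ is the product of the blockwise conditional distributions. Writing this out,
\[
\mathbb{P}(D_g,A_{ig}\mid V_{ig})=\mathbb{P}(D_{ig}\mid V_{ig})\prod_{j\ne i}^{N_g}\mathbb{P}(D_{jg},A_{ijg}\mid V_{ig}),
\]
which is the desired identity. The only genuinely delicate step is the second one: one must be careful that enlarging the conditioning set from $\Psi_g^{*}$ to $V_{ig}=(C_{ig},\Psi_g^{*})$ does not create dependence among the blocks for $j\ne i$. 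This is handled by invoking the weak-union/decomposition/contraction axioms of Lemma \ref{lem:axioms}, using that $C_{ig}$ is itself part of the $i$-th block and independent of all other-block components given $\Psi_g^{*}$ by the i.i.d. assumption; everything else in the argument is bookkeeping.
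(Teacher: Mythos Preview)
Your argument is correct and rests on the same independence primitives as the paper's proof, but you organize it more directly. The paper expands $\mathbb{P}(D_g,A_{ig}\mid V_{ig})$ by the chain rule into a product of sequential conditionals and then, for each $j$, invokes the auxiliary Lemma~\ref{lem:combiningevents} (\nameref{lem:combiningevents}) to show that $(\eta_{jg},C_{jg},U_{ijg})$ is independent of the ``past'' blocks $\{(\eta_{kg},C_{kg},U_{ikg})\}_{k<j}$ given $\Psi_g^{*}$, and only afterwards moves $C_{ig}$ into the conditioning set. You instead establish mutual conditional independence of all the blocks $\eta_{ig},\{(C_{jg},\eta_{jg},U_{ijg})\}_{j\ne i}$ given $V_{ig}$ in one shot, and read off the product form immediately. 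Your route is shorter and bypasses Lemma~\ref{lem:combiningevents} entirely; the paper's sequential version is more explicit about how each graphoid axiom is applied at each step. Both are valid, and the only point you might spell out a touch more carefully is why mutual independence of the $j\ne i$ blocks survives after adding $C_{ig}$ to the conditioning set: this follows because, given $\Psi_g^{*}$, each $B_j=(C_{jg},\eta_{jg},U_{ijg})$ is independent of $(C_{ig},\{B_k\}_{k\ne j})$, so weak union yields $B_j\indep\{B_k\}_{k\ne j}\mid C_{ig},\Psi_g^{*}$.
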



\begin{lem}[Bounds Quotients]
	\label{lem:boundsquotients} Let $a,b$ be non-zero scalars and suppose that $\Vert b \Vert \ge \underline{b} > 0$. Then
	\[
	\Vert a^{-1} - b^{-2} \Vert \le \frac{\underline{b}^{-2} \Vert b-a \Vert}{1-\underline{b}^{-1}\Vert b - a \Vert}.
	\]
\end{lem}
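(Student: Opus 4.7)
The plan is to recognize that this is a scalar inverse-difference bound which follows from a direct algebraic identity combined with the reverse triangle inequality; I read the right-hand side exponent as a typo for $b^{-1}$, since otherwise the dimensions of the two sides do not match. First I would write the exact identity
\[
a^{-1} - b^{-1} = \frac{b-a}{ab},
\]
so that controlling $|a^{-1} - b^{-1}|$ reduces to producing a lower bound on $|ab|$ using the hypothesis $|b| \ge \underline{b}$.

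Next I would lower-bound $|a|$ in terms of $|b|$ and $|b-a|$ via the reverse triangle inequality,
\[
|a| = |b - (b-a)| \ge |b| - |b-a| \ge \underline{b} - |b-a|,
\]
which is the step the final bound hinges on. Multiplying this by $|b| \ge \underline{b}$ yields
\[
|ab| \ge \underline{b}\bigl(\underline{b} - |b-a|\bigr) = \underline{b}^{2}\bigl(1 - \underline{b}^{-1}|b-a|\bigr),
\]
and substituting into the identity gives the claimed inequality. This step is only meaningful in the regime $|b-a| < \underline{b}$ (so the denominator is positive); in the complementary regime the right-hand side of the stated bound is non-positive or undefined, so the inequality is vacuous or trivially interpreted, and I would note this explicitly rather than prove anything further.

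There is no real obstacle: the argument is three lines of elementary manipulation, and the only subtlety is being careful about the range of validity of the denominator $1 - \underline{b}^{-1}|b-a|$. If a vector/matrix version is intended (which the $\Vert \cdot \Vert$ notation suggests the author may want later), the same template works by replacing $ab$ with the relevant product and using submultiplicativity of the norm together with the Neumann-series lower bound $\|a^{-1}\| \le \|b^{-1}\| / (1 - \|b^{-1}\| \|b-a\|)$; I would flag this generalization but prove only the scalar claim as stated.
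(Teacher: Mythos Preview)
Your proposal is correct and reaches the same bound as the paper, but by a different route. The paper writes $a^{-1}-b^{-1}=b^{-1}(b-a)a^{-1}$ and then bounds $\|a^{-1}\|\le\|b^{-1}\|+\|a^{-1}-b^{-1}\|$ via the triangle inequality, which produces a self-referential linear inequality in the unknown $\|a^{-1}-b^{-1}\|$ that is solved algebraically. You instead lower-bound $|a|$ directly with the reverse triangle inequality and substitute into $|ab|$. Your argument is the more transparent one for scalars and has the virtue of making the positivity constraint $|b-a|<\underline{b}$ explicit; the paper's solving step silently needs the same condition for the inequality not to flip. Conversely, the paper's bootstrap formulation is exactly the template that extends to matrices or operators via submultiplicativity---the Neumann-series generalization you flag at the end. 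Your reading of $b^{-2}$ as a typo for $b^{-1}$ is also confirmed by the paper's own proof.
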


\begin{lem}[Derivative of Inverse Matrix]
	Let $v \in \mathbb{R}$ and suppose that $Q(v)$ is differentiable and full rank in an open set around $v_0$. Then $\tfrac{\partial}{\partial v} Q^{-1}(v_0) = - Q^{-1}(v_0) \tfrac{\partial Q(v_0)}{\partial v} Q^{-1}(v_0)$.
	\label{lem:derivativesinversematrix}
\end{lem}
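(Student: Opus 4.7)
The plan is to derive the formula by differentiating the defining identity $Q(v)Q^{-1}(v) = I$ and solving for the derivative of the inverse.

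First, I would establish that $Q^{-1}(v)$ is itself differentiable at $v_0$. Since $Q(v)$ is full rank on an open neighborhood $\mathcal{U}$ of $v_0$, its determinant $\det(Q(v))$ is nonzero and continuous on $\mathcal{U}$. By the cofactor expansion, $Q^{-1}(v) = \det(Q(v))^{-1}\,\mathrm{adj}(Q(v))$, where both the determinant and the adjugate are polynomials in the entries of $Q(v)$. Since $Q(v)$ is differentiable and $\det(Q(v)) \neq 0$ on $\mathcal{U}$, the map $v \mapsto Q^{-1}(v)$ is differentiable on $\mathcal{U}$ as a composition and quotient of differentiable maps with nonvanishing denominator.

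Next, I would differentiate both sides of $Q(v) Q^{-1}(v) = I$ with respect to $v$. Applying the product rule for matrix-valued functions gives
\[
\frac{\partial Q(v)}{\partial v} Q^{-1}(v) + Q(v)\, \frac{\partial Q^{-1}(v)}{\partial v} = 0.
\]
Rearranging and left-multiplying by $Q^{-1}(v)$, which is well defined by the previous step, yields
\[
\frac{\partial Q^{-1}(v)}{\partial v} = - Q^{-1}(v)\, \frac{\partial Q(v)}{\partial v}\, Q^{-1}(v).
\]
Evaluating at $v = v_0$ gives the claimed identity.

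There is no real obstacle here, since this is a standard manipulation; the only point requiring care is the preliminary differentiability claim, which is immediately handled by the cofactor formula together with the full-rank assumption on the open neighborhood $\mathcal{U}$.
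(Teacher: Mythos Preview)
Your proof is correct and follows essentially the same approach as the paper: differentiate the identity $Q(v)Q^{-1}(v)=I$ with the product rule and solve for $\tfrac{\partial}{\partial v}Q^{-1}(v)$. If anything, your version is slightly more careful in first justifying the differentiability of $Q^{-1}(v)$ via the cofactor formula, which the paper takes for granted.
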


\begin{lem}[Uniform Bounds Criterion Derivatives]
	
Let $\lambda_{min}(v_1,v_2,\boldsymbol{\theta})$ denote the smallest eigenvalue of $\mathbf{Q}_{xx}((v_1,v_2),\boldsymbol{\theta})$ and let $B(\boldsymbol{\theta}_0,\delta)$ be a ball or radius $\delta >0 $ around $\boldsymbol{\theta}_0$. Let  $\underline{\lambda}(V_{1igt}^{0},\boldsymbol{\theta}_{0t},\delta) \equiv \inf_{\theta \in B(\boldsymbol{\theta}_{0t},\delta)} \inf_{v_2 \in \mathcal{V}_2}\lambda_{min}(V_{1igt}^{0},v_2,\boldsymbol{\theta})$ be a lower bound on the eigenvalues of $\mathbf{Q}_{xx}$ for parameters in that set. Furthermore, define
\begin{align}
\begin{split}
\mathcal{R}_{ig}^A &\equiv \sup_{\theta \in \Theta} \ \sup_{v_2 \in \mathcal{V}_2} \ \left\Vert \tfrac{\partial}{\partial v_2}\mathcal{R}(Z_{ig},v_2,\theta) \right\Vert 
\end{split}
\label{eq:envelope_RA} \\
\begin{split}
S_{ig} &\equiv \sup_{\theta \in \Theta} \ \sup_{v_2 \in \mathcal{V}_2} \ \left\Vert s(Z_{ig},v_2,\theta)  \right\Vert 
\end{split}
\label{eq:envelope_S}
\\
\begin{split}
S_{ig}^{A} &\equiv \sup_{\theta \in \Theta} \ \sup_{v_2 \in \mathcal{V}_2} \ \left\Vert \tfrac{\partial}{\partial v_2}s(Z_{ig},v_2,\theta)  \right\Vert
\end{split}
\label{eq:envelope_SA}
\\
\begin{split}
S_{ig}^{A\theta} &\equiv \sup_{\theta \in \Theta} \ \sup_{v_2 \in \mathcal{V}_2} \ \left\Vert \tfrac{\partial^2}{\partial v_2\partial \theta}s(Z_{ig},v_2,\theta) \right\Vert
\end{split}
\label{eq:envelope_SATheta}
\\
\begin{split}
\psi_{IW,ig}^{\partial} &\equiv \sup_{\theta \in B(\boldsymbol{\theta}_0,\nu)} \ \sup_{v_2 \in \mathcal{V}_2} \ \sup_{0 \le \alpha_1 + \alpha_2 \le 2} \left\Vert \tfrac{\partial^{\alpha_1 + \alpha_2}}{\partial v_2^{\alpha_1}\partial\boldsymbol{\theta}^{\alpha_2}}\psi_{IW}(Z_{ig},v_2,\boldsymbol{\tau},\boldsymbol{\theta})  \right\Vert
\end{split}
\label{eq:envelope_psitheta}
\end{align}
then the following statements hold
\begin{enumerate}[(i)]
	\item If $\mathbb{E}[\Vert X_{ig} \Vert^4] < \infty$ and $\mathbb{E}[\sup_{\theta \in \Theta} \sup_{v_2 \in \mathcal{V}_2} \ (Q^{\partial}_{xx}(V_{1ig},v_2,\theta))^2] $ is bounded, then $\mathbb{E}[\mathcal{R}_{ig}^A]$ $\mathbb{E}[S_{ig}]$, $\mathbb{E}[S_{ig}^A]$ and $\mathbb{E}[S_{ig}^{A\theta}]$ are bounded.
	\item Suppose that in addition $\mathbb{E}[\Vert Y_{ig}\Vert^2] < \infty$, $\mathbb{E}[\sup_{\theta \in \Theta} \sup_{v_2 \in \mathcal{V}_2} \ (Q^{\partial}_{xx}(V_{1ig},v_2,\theta))^4]< \infty$ and $\underline{\lambda} > 0$ almost surely. Then $\mathbb{E}[\psi_{IW,ig}^{\partial}]$ is also bounded.
\end{enumerate}
\label{lem:uniformboundsderivatives}
\end{lem}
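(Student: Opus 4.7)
\textbf{Proof plan for Lemma~\ref{lem:uniformboundsderivatives}.} The plan is to reduce every envelope to a polynomial expression in $\|X_{ig}\|$, $|Y_{ig}|$, and the Sobolev-norm $Q^{\partial}_{xx}(V_{1ig},v_2,\theta)$, and then apply Hölder's inequality against the hypothesized moment bounds. The argument for (i) uses only the product structure of the residual; the argument for (ii) additionally exploits Lemma~\ref{lem:derivativesinversematrix} together with the uniform eigenvalue lower bound $\underline{\lambda}$.

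For part (i), I would start by writing $\mathcal{R}(z,v_2,\theta)=\|\mathrm{vec}(X_{ig}X_{ig}'-\mathbf{Q}_{xx})\|^{2}$ and observing that the Sobolev norm $Q^{\partial}_{xx}$, taken over multi-indices with $\alpha_1+\alpha_2\le 3$, $\alpha_1,\alpha_2\le 2$, simultaneously bounds $\mathbf{Q}_{xx}$ itself and all of its partial derivatives that enter $\mathcal{R}^A_{ig}$, $S_{ig}$, $S^{A}_{ig}$ and $S^{A\theta}_{ig}$. Applying the product rule once or twice yields, for each of these envelopes, a bound of the schematic form
\begin{equation*}
\bigl(\|X_{ig}\|^{2}+Q^{\partial}_{xx}\bigr)\cdot Q^{\partial}_{xx}\ \lesssim\ \|X_{ig}\|^{2}Q^{\partial}_{xx}+(Q^{\partial}_{xx})^{2},
\end{equation*}
with the supremum over $(\theta,v_2)\in\Theta\times\mathcal{V}_2$ entering only through $Q^{\partial}_{xx}$. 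A Cauchy--Schwarz step then gives $\mathbb{E}\bigl[\|X_{ig}\|^{2}Q^{\partial}_{xx}\bigr]\le \mathbb{E}\bigl[\|X_{ig}\|^{4}\bigr]^{1/2}\mathbb{E}\bigl[(Q^{\partial}_{xx})^{2}\bigr]^{1/2}$, and both factors are finite by the stated hypotheses. Thus each of the four envelopes in (i) is integrable.

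For part (ii), the main new ingredient is controlling derivatives of $\mathbf{Q}_{xx}(V_{ig},\theta)^{-1}$. Iterated use of Lemma~\ref{lem:derivativesinversematrix} expresses any mixed partial of $\mathbf{Q}_{xx}^{-1}$ up to second order as a finite sum of products of the form $\mathbf{Q}_{xx}^{-1}\bigl(\partial^{\alpha}\mathbf{Q}_{xx}\bigr)\mathbf{Q}_{xx}^{-1}\bigl(\partial^{\beta}\mathbf{Q}_{xx}\bigr)\mathbf{Q}_{xx}^{-1}$, with at most three inverse factors and multi-indices $\alpha,\beta$ of order at most two. Restricting the supremum in the definition of $\psi_{IW,ig}^{\partial}$ to $\theta\in B(\boldsymbol{\theta}_{0t},\nu)$ and $v_{2}\in\mathcal{V}_{2}$ and invoking the uniform eigenvalue bound $\underline{\lambda}>0$ gives $\|\mathbf{Q}_{xx}^{-1}\|\le\underline{\lambda}^{-1}$ throughout, which together with the Sobolev bound yields the schematic envelope
\begin{equation*}
\psi_{IW,ig}^{\partial}\ \lesssim\ \underline{\lambda}^{-3}\bigl(1+(Q^{\partial}_{xx})^{2}\bigr)\,\|X_{ig}\|\,|Y_{ig}|+\underline{\lambda}^{-1}\|X_{ig}\|\,|Y_{ig}|.
\end{equation*}
Taking expectations and applying Hölder against $(Q^{\partial}_{xx})^{4}$, $\|X_{ig}\|^{4}$ and $Y_{ig}^{2}$ delivers a finite bound on $\mathbb{E}[\psi_{IW,ig}^{\partial}]$, which proves (ii).

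The main obstacle is bookkeeping: writing the mixed partials of $\mathbf{Q}_{xx}^{-1}X_{ig}Y_{ig}$ via the product and chain rules produces a combinatorial expansion, and one must check that in every resulting term the number of $\mathbf{Q}_{xx}^{-1}$ factors and the orders of the $\partial^{\alpha}\mathbf{Q}_{xx}$ factors are all within the ranges controlled by $\underline{\lambda}^{-1}$ and $Q^{\partial}_{xx}$. A secondary subtlety is that the eigenvalue bound in (iv) of \nameref{assumptxt:regularity} only holds on the ball $B(\boldsymbol{\theta}_{0t},\nu)$, so the sup in the definition of $\psi_{IW,ig}^{\partial}$ must be interpreted on that ball (as is done in the main text when invoking this lemma after establishing consistency of $\widehat{\boldsymbol{\theta}}_{t}$); I would flag this explicitly at the start of the proof of (ii).
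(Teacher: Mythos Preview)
Your proposal is correct and follows essentially the same route as the paper's proof: write the residual in vectorized form, bound all derivatives of $\mathcal{R}$ and $s$ by polynomials in $\|X_{ig}\|^{2}$ and the Sobolev norm $Q^{\partial}_{xx}$ via the product rule, then use Cauchy--Schwarz; for part (ii), iterate Lemma~\ref{lem:derivativesinversematrix} to express mixed partials of $\mathbf{Q}_{xx}^{-1}$ as products with at most three inverse factors, bound each inverse by $\underline{\lambda}^{-1}$ on the ball $B(\boldsymbol{\theta}_{0t},\nu)$, and finish with Cauchy--Schwarz against the fourth-moment hypothesis on $Q^{\partial}_{xx}$. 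Your explicit flag that the eigenvalue bound only holds on the restricted ball is exactly the point the paper relies on implicitly.
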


\begin{lem}[Stochastically Bounded Averages]
	Let $X_{igt}$ be a sequence of random variable such that $\mathbb{E}[\Vert X_{igt} \Vert] < \infty$, $\bar{X}_t = \frac{1}{GN}\sum_{ig} X_{igt}$ the sample average and $N_t = \mathbb{E}[N_{gt}]$ be the expected group size, and $(G_t,N_t) \to \infty$ as $t \to \infty$. Suppose that the groups are randomly sampled with equal weight from a superpopulation and that \nameref{assumptxt:boundedrelative_n} holds, then $\mathbb{E}_*[\tfrac{\bar{n}_t}{N_t}\bar{X}_t] = \mathbb{E}\left[ X_{igt}\right]$ and $\bar{X}_t = O_p(1)$ as $(G,N_*) \to \infty$, where $\mathbb{E}_*$ is the sampling (equal-group-weight) measure and $\mathbb{E}$ is the population measure. Furthermore, if \nameref{assumptxt:randomsampling} holds then $\bar{X}_t \to^{p} \mathbb{E}_t[X_{ig}]$ and $\frac{\bar{n}}{N_t} \to^{p} 1$.
	\label{lem:Op_bounds_averages}
\end{lem}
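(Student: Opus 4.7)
The plan is to reduce everything to the within-group averages and then apply a size-biased sampling identity together with a weak law of large numbers across groups. Let $\bar{X}_g \equiv N_{gt}^{-1}\sum_{i=1}^{N_{gt}}X_{igt}$ denote the within-group mean and $\rho_{gt} \equiv N_{gt}/N_t$ the relative size. I would first note the identity
\begin{equation*}
\tfrac{\bar{n}_t}{N_t}\bar{X}_t \;=\; \frac{1}{\bar{n}_t G_t}\sum_{g=1}^{G_t} N_{gt}\bar{X}_g \cdot \tfrac{\bar{n}_t}{N_t}
\;=\; \frac{1}{G_t}\sum_{g=1}^{G_t}\rho_{gt}\bar{X}_g,
\end{equation*}
which reorganises the double sum into a group-level arithmetic mean and isolates the scaling factor $\bar{n}_t/N_t$. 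Since \nameref{assumptxt:boundedrelative_n} bounds each $\rho_{gt} \in [\underline{\rho},\overline{\rho}]$, this scaling factor itself is deterministic-bounded in the same interval.

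For the first claim, I would prove the size-biased sampling identity. Under the equal-group-weight measure $\mathbb{E}_*$, sampling a group $g$ with equal probability and then drawing an individual $i$ uniformly within $g$ does \emph{not} reproduce the population measure $\mathbb{E}$, because large groups are underrepresented; the correction factor is exactly $\rho_{gt}$. Concretely, $\mathbb{E}[X_{igt}] = \mathbb{E}_*[\sum_i X_{igt}]/\mathbb{E}_*[N_{gt}] = \mathbb{E}_*[N_{gt}\bar{X}_g]/N_t = \mathbb{E}_*[\rho_{gt}\bar{X}_g]$. Taking $\mathbb{E}_*$ of the identity above then yields $\mathbb{E}_*[\tfrac{\bar{n}_t}{N_t}\bar{X}_t] = \mathbb{E}_*[\rho_{gt}\bar{X}_g] = \mathbb{E}[X_{igt}]$.

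For the $O_p(1)$ statement I would apply Jensen's inequality to pull the norm inside the within-group average, obtaining $\mathbb{E}_*\|\rho_{gt}\bar{X}_g\| \le \mathbb{E}_*[\rho_{gt}N_{gt}^{-1}\sum_i \|X_{igt}\|] = \mathbb{E}\|X_{igt}\| < \infty$ by the same size-biasing calculation applied to $\|X_{igt}\|$. Combining with $\bar{n}_t/N_t \ge \underline{\rho}$ almost surely gives $\mathbb{E}_*\|\bar{X}_t\| \le \underline{\rho}^{-1}\mathbb{E}\|X_{igt}\|$, and Markov's inequality delivers $\bar{X}_t = O_p(1)$.

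For the final two claims I would invoke \nameref{assumptxt:randomsampling}(i), under which $\{\bar{X}_g,\rho_{gt}\}_{g=1}^{G_t}$ are i.i.d.\ across groups with finite first moment (by the bounds just derived). The weak law of large numbers then gives $\tfrac{1}{G_t}\sum_g \rho_{gt}\bar{X}_g \to^p \mathbb{E}[X_{igt}]$ and $\tfrac{\bar{n}_t}{N_t} = \tfrac{1}{G_t}\sum_g \rho_{gt} \to^p \mathbb{E}_*[\rho_{gt}] = 1$. Slutsky's theorem applied to $\bar{X}_t = (N_t/\bar{n}_t)\cdot(\bar{n}_t/N_t)\bar{X}_t$ yields $\bar{X}_t \to^p \mathbb{E}[X_{igt}]$. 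The only genuine subtlety is to keep the two measures $\mathbb{E}$ and $\mathbb{E}_*$ separate throughout and to note that the bounded-ratios assumption is what guarantees the scaling factor $\bar{n}_t/N_t$ stays bounded away from zero along the sequence; nothing else requires delicate argument.
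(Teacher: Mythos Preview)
Your proof is correct and follows essentially the same route as the paper: rewrite $\tfrac{\bar n_t}{N_t}\bar X_t$ as the group-level average $G_t^{-1}\sum_g \rho_{gt}\bar X_g$, use the size-biasing identity $\mathbb{E}_*[\rho_{gt}\bar X_g]=\mathbb{E}[X_{igt}]$ for unbiasedness, bound $\mathbb{E}_*\|\bar X_t\|$ via the triangle inequality and the a.s.\ bound $\bar n_t/N_t\ge\underline{\rho}$ to get $O_p(1)$ by Markov, and finish with a law of large numbers across i.i.d.\ groups plus Slutsky. The only cosmetic difference is that the paper cites the strong law (Billingsley) where you invoke the weak law, and it leaves the Slutsky step implicit.
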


\begin{lem}[Uniform Consistency with Generated Regressors]
	Let $f$ be a measurable function of $(z,v_2,\boldsymbol{\tau},\boldsymbol{\theta})$ that is continuously differentially with respect to $(v_2,\boldsymbol{\tau},\boldsymbol{\theta})$. Suppose that
	\begin{enumerate}[(i)] 
		\item \quad $\max_{igt} \Vert V_{2igt} - V_{2igt}^0\Vert \to_p 0$
		\item \quad $\mathbb{E}\left[\sup_{(\boldsymbol{\tau},\boldsymbol{\theta},v_2)\in \mathcal{T} \times \Theta \times \mathcal{V}_2} \left\Vert f(Z_{igt},v_2,\boldsymbol{\tau},\boldsymbol{\theta})\right\Vert \right] < \infty$.
		\item \quad $\mathbb{E}\left[\sup_{(\boldsymbol{\tau},\boldsymbol{\theta},v_2)\in \mathcal{T} \times \Theta \times \mathcal{V}_2} \left\Vert \tfrac{\partial}{\partial (v_2,\boldsymbol{\tau},\boldsymbol{\theta})}  f(Z_{igt},v_2,\boldsymbol{\tau},\boldsymbol{\theta})\right\Vert \right] < \infty$.
	\end{enumerate}
	If \nameref{assumptxt:randomsampling} holds, and $(G_t,N_t) \to \infty$ as $t \to \infty$, then 
	\[
	\sup_{\boldsymbol{\tau} \in \mathcal{T}} \sup_{\boldsymbol{\theta} \in \Theta} \left\Vert \frac{1}{G_t\bar{n}_t}\sum_{ig} f(Z_{igt},V_{2igt},\boldsymbol{\tau},\boldsymbol{\theta}) - \mathbb{E}[f(Z_{igt},V_{2igt}^0,\boldsymbol{\tau},\boldsymbol{\theta})]. \right\Vert \to^p 0
	\]
	\label{lem:uniformconvergence_sampleaverages}
\end{lem}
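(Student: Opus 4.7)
The plan is to bound the uniform error by the triangle inequality into a \emph{generated-regressor} piece and a standard \emph{uniform law of large numbers} (ULLN) piece:
\[
\sup_{(\boldsymbol{\tau},\boldsymbol{\theta})\in\mathcal{T}\times\Theta} \left\Vert \tfrac{1}{G_t\bar n_t}\sum_{ig} f(Z_{igt},V_{2igt},\boldsymbol{\tau},\boldsymbol{\theta}) - \mathbb{E}[f(Z_{igt},V_{2igt}^0,\boldsymbol{\tau},\boldsymbol{\theta})] \right\Vert \le \Delta_1 + \Delta_2,
\]
where $\Delta_1$ captures the error from replacing $V_{2igt}^0$ with $V_{2igt}$, and $\Delta_2$ is the ULLN remainder evaluated at the true $V_{2igt}^0$. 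Condition (iii) supplies an integrable envelope $F_{igt}^{\partial} \equiv \sup_{(\boldsymbol{\tau},\boldsymbol{\theta},v_2)\in\mathcal{T}\times\Theta\times\mathcal{V}_2} \Vert \partial f/\partial(v_2,\boldsymbol{\tau},\boldsymbol{\theta})\Vert$ that will do double duty in both pieces, while condition (ii) supplies an integrable level envelope $F_{igt} \equiv \sup_{(\boldsymbol{\tau},\boldsymbol{\theta},v_2)}\Vert f(Z_{igt},v_2,\boldsymbol{\tau},\boldsymbol{\theta})\Vert$.

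For $\Delta_1$ I would apply the mean-value theorem in the $v_2$ coordinate: there exists $\widetilde V_{2igt}$ on the segment joining $V_{2igt}$ and $V_{2igt}^0$ with
\[
f(Z_{igt},V_{2igt},\boldsymbol{\tau},\boldsymbol{\theta}) - f(Z_{igt},V_{2igt}^0,\boldsymbol{\tau},\boldsymbol{\theta}) = \tfrac{\partial}{\partial v_2'} f(Z_{igt},\widetilde V_{2igt},\boldsymbol{\tau},\boldsymbol{\theta})(V_{2igt}-V_{2igt}^0).
\]
Taking norms and the sup over $(\boldsymbol{\tau},\boldsymbol{\theta})$ first then over the maximum in $ig$ gives
\[
\Delta_1 \le \Big(\tfrac{1}{G_t\bar n_t}\sum_{ig} F_{igt}^{\partial}\Big)\cdot \max_{ig} \Vert V_{2igt}-V_{2igt}^0\Vert.
\]
Condition (i) makes the maximum $o_p(1)$ and Lemma \ref{lem:Op_bounds_averages} (applied to $F_{igt}^{\partial}$, whose expectation is finite by (iii)) shows the sample-average envelope is $O_p(1)$, yielding $\Delta_1=o_p(1)$.

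For $\Delta_2$ I would recast the weighted sum to expose its i.i.d.\ structure:
\[
\tfrac{1}{G_t\bar n_t}\sum_{ig} f(Z_{igt},V_{2igt}^0,\boldsymbol{\tau},\boldsymbol{\theta}) = \Big(\tfrac{N_t}{\bar n_t}\Big)\cdot \tfrac{1}{G_t}\sum_{g=1}^{G_t} \rho_{gt}\,\bar f_g(\boldsymbol{\tau},\boldsymbol{\theta}),
\]
with $\bar f_g(\boldsymbol{\tau},\boldsymbol{\theta}) = N_{gt}^{-1}\sum_i f(Z_{igt},V_{2igt}^0,\boldsymbol{\tau},\boldsymbol{\theta})$ and $\rho_{gt}=N_{gt}/N_t$. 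Lemma \ref{lem:Op_bounds_averages} gives $\bar n_t/N_t\to^p 1$ so the prefactor is negligible. Under Sampling Exchangeability the variables $\rho_{gt}\bar f_g(\boldsymbol{\tau},\boldsymbol{\theta})$ are i.i.d.\ across $g$, so for each fixed $(\boldsymbol{\tau},\boldsymbol{\theta})$ the envelope in (ii) furnishes an integrable dominating function and the standard LLN (as $G_t\to\infty$) yields the pointwise limit $\mathbb{E}[f(Z_{igt},V_{2igt}^0,\boldsymbol{\tau},\boldsymbol{\theta})]$. To upgrade pointwise to uniform convergence I would establish stochastic equicontinuity: Taylor expanding in $(\boldsymbol{\tau},\boldsymbol{\theta})$ and invoking the derivative envelope in (iii) gives
\[
\Vert f(Z_{igt},V_{2igt}^0,\boldsymbol{\tau}',\boldsymbol{\theta}') - f(Z_{igt},V_{2igt}^0,\boldsymbol{\tau},\boldsymbol{\theta})\Vert \le F_{igt}^{\partial}\bigl(\Vert\boldsymbol{\tau}'-\boldsymbol{\tau}\Vert + \Vert\boldsymbol{\theta}'-\boldsymbol{\theta}\Vert\bigr),
\]
so the class is Lipschitz in the parameters with an integrable Lipschitz constant. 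A standard covering argument on $\mathcal{T}\times\Theta$ (taken to be compact, as is implicit throughout the paper) then converts pointwise convergence on a dense countable grid into the uniform statement $\Delta_2=o_p(1)$.

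The main obstacle is the coupling between the generated-regressor error and the parameter uniformity: the envelope $F_{igt}^{\partial}$ must dominate \emph{simultaneously} over $v_2$ and over $(\boldsymbol{\tau},\boldsymbol{\theta})$, which is exactly why condition (iii) is stated as a joint supremum over all three arguments. A secondary subtlety is the double-asymptotics: Sampling Exchangeability together with finite group-wise moments lets the group-level LLN go through regardless of whether $N_{gt}$ is fixed or drifting, because each $\rho_{gt}\bar f_g(\boldsymbol{\tau},\boldsymbol{\theta})$ is bounded by the integrable envelope $(\overline\rho/\underline\rho)\,F_{igt}$ uniformly in $N_{gt}$. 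No convergence rate is claimed, so these two ingredients are enough to conclude $\Delta_1+\Delta_2=o_p(1)$ and obtain the lemma.
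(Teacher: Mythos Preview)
Your proposal is correct and follows essentially the same approach as the paper: a triangle-inequality split into a generated-regressor piece handled by a mean-value expansion in $v_2$ together with the envelope from (iii) and Lemma \ref{lem:Op_bounds_averages}, and a ULLN piece handled by pointwise convergence plus stochastic equicontinuity via the Lipschitz bound from (iii). The only cosmetic difference is that the paper closes the equicontinuity step by citing Theorem 2.1 of \cite{newey1991uniform} rather than spelling out a covering argument.
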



\subsection{Proof Supporting Lemmas}


\begin{proof}[Proof of Lemma \ref{lem:combiningevents} (\nameref{lem:combiningevents})]
	By property (ii), weak union and decomposition
	\begin{equation}
	\label{eq:groupingevents_1}
	(ii) \quad \implies \quad (E,E^*) \ \indep \ U^* \mid E^*,\Psi \quad \implies \quad E \ \indep \ U^* \mid E^*,\Psi
	\end{equation}
	By property (i), \eqref{eq:groupingevents_1} and contraction, $E \ \indep \ (E^*,U^*) \mid \Psi$.
	
	Similarly, by property (iii), weak union and decomposition.
	\begin{equation}
	\label{eq:groupingevents_2}
	(iii) \quad \implies \quad U \ \indep \ (U^*,E,E^*) \mid E,\Psi \quad \implies \quad U \ \indep \ (E^*,U^*) \mid E,\Psi
	\end{equation}
	Combining the two results via the contraction property, $(E,U) \ \indep (E^*,U^*) \mid \Psi$.
	
\end{proof}

\begin{proof}[Proof of Lemma \ref{lem:egocentric_factorization} (\nameref{lem:egocentric_factorization})]
	Let $V_{ig} \equiv (C_{ig},\Psi_g^*)$. By Bayes' rule:
	\begin{equation}
	\label{eq:bayesegocentric}
	\mathbb{P}(D_g,A_{ig} \mid V_{ig}) = \prod_{j=1}^n \mathbb{P}(D_{jg},A_{ijg} \mid \{D_{kg},A_{ikg}\}_{k=1}^{j-1},V_{ig})
	\end{equation}
	We can factor the joint probability in any order, so I set $i = 1$ without loss of generality. By definition $G_{iig} = 0$ (no self-loops in the network), so $\mathbb{P}(G_{iig} = 0 \mid V_{ig}) = 1$ and we can denote the probability as $\mathbb{P}(D_{ig} \mid A_{ig},V_{ig}) = \mathbb{P}(D_{ig} \mid V_{ig})$ without loss of generality.
	
	For $j > 1$, define the random variables $E \equiv (\eta_{jg},C_{jg})$ and $E^* \equiv \{(\eta_{kg},C_{kg})\}_{k=1}^{j}$, which denote the personal covariates of $j$ and a vector of the covariates of agents 1 through $(j-1)$, respectively. Similarly, let $U \equiv U_{ijg}$ and $U^* \equiv \{ U_{ikg} \}_{k=1}^{(j-1)}$, denote the respective link formation shocks. \nameref{assumptxt:randomsampling}.(i) allows us to ignore covariates across groups. \nameref{assumptxt:randomsampling}.(ii) states that the covariates of different agents  are conditionally independent,  which implies $E \ \indep \ E^* \mid \Psi_g$. \nameref{assumptxt:dyadicnetwork} says that the personal covariates are conditionally independent of the link shocks, which implies that $(E,E^*) \ \indep \ U^* \mid \Psi_g$. Furthermore, the links are also mutually conditionally independent of each other, which means that $U \ \indep \ (U^*,E,E^*) \mid \Psi_g$.
	
	Consequently $(E,E^*,U,U^*,\Psi_g)$ meet the conditions of Lemma \ref{lem:combiningevents} and 
	\begin{equation}
	(\eta_{jg},C_{jg},U_{ijg}) \ \indep\  \{\eta_{kg},C_{kg}\}_{k=1}^{j-1},\{U_{ijg}\}_{k=2}^{j-1} \mid \Psi_g
	\label{eq:decomp_egocentriclikelihood_1}
	\end{equation}
	The right-hand side of \eqref{eq:decomp_egocentriclikelihood_1} contains enough information to compute $D_{kg} = h(C_{kg},\Psi_g^*,\eta_{kg})$ and $A_{ijg} =  L(C_{ig},C_{jg},\Psi_g^*,U_{ijg})$. Therefore, we can use the decomposition property to show that
	\begin{equation}
	\label{eq:decomp_egocentriclikelihood_2}
	(U_{ijg},\eta_{jg},C_{jg}) \ \indep\  \{D_{jg},A_{ijg}\}_{k=1}^{j-1},C_{ig}\mid \Psi_g^* 
	\end{equation}
	By combining  \eqref{eq:decomp_egocentriclikelihood_2}, weak union and decomposition,
	\begin{equation}
	\label{eq:decomp_egocentriclikelihood_3}
	(U_{ijg},\eta_{jg},C_{jg}) \ \indep\  \{D_{jg},A_{ikg}\}_{k=1}^{j-1} \mid C_{ig},\Psi_g^*
	\end{equation}
	Finally $D_{jg}$ is $(C_{jg},\Psi_g^*,\eta_{jg})$-measurable and $A_{ijg}$ is $(C_{ig},C_{jg},\Psi_g^*,U_{ijg})$-measurable. We can use the redundancy property to incorporate the variables in the conditioning set and then apply the decomposition property to show that 
	\[
	(D_{jg},A_{ikg}) \ \indep\  \{D_{kg},A_{ikg}\}_{k=1}^{j-1} \mid C_{ig},\Psi_g^*
	\] 
	By applying this argument recursively, we can show that potential link and participation decisions are conditionally independent. By \eqref{eq:bayesegocentric}
	\[
	\mathbb{P}(D_g,A_{ig} \mid V_{ig}) = \mathbb{P}(D_{ig} \mid V_{ig}) \prod_{j\ne i}^{N_g} \mathbb{P}(D_{jg},A_{ijg} \mid V_{ig})
	\]
\end{proof}

\begin{proof}[Proof of Lemma \ref{lem:boundsquotients} (\nameref{lem:boundsquotients})]
	By finding a common denominator, $a^{-1} - b^{-1} = b^{-1}(b-a)a^{-1}$. By the triangle inequality
	\begin{align*}
		\Vert a^{-1} - b^{-1} \Vert &= \Vert b^{-1} \Vert \ \Vert b - a \Vert \ \Vert a^{-1} \Vert \\
		&\le \Vert b^{-1} \Vert \ \Vert b - a \Vert \ (\Vert b^{-1} \Vert + \Vert a^{-1} - b^{-1}\Vert) \\
		&\le \underline{b}^{-1} \ \Vert b - a \Vert \ (\underline{b}^{-1} + \Vert a^{-1} - b^{-1}\Vert).
	\end{align*}
	Solving for $\Vert a^{-1} - b^{-1} \Vert$,
	\[
	\Vert a^{-1} - b^{-1} \Vert \le \frac{\underline{b}^{-2} \Vert b-a \Vert}{1-\underline{b}^{-1}\Vert b - a \Vert}.
	\]
\end{proof}

\begin{proof}[Proof of Lemma \ref{lem:derivativesinversematrix} (\nameref{lem:derivativesinversematrix})]
	Let $M(v) \equiv Q^{-1}(v)$ and define $F(v) \equiv Q(v)M(v) - I$. By construction $F(v) \equiv \boldsymbol{0}$ uniformly for $v$ in open set around $v_0$. Let $F_{i\ell}$ denote the entry in the $i^{th}$ and the $\ell^{th}$ column of $F$, which can be decomposed as
	\[
	F_{i\ell}(v) = \sum_{k\ell} Q_{ij}M_{k \ell} - a_{i\ell} = 0
	\] 
	where $a_{i\ell}$ are the entries of the identity matrix $I$. We can differentiate each component by the scalar $v$. By the product rule,
	\[
	\frac{\partial F_{i\ell}(v)}{\partial v} = \sum_{k\ell} \frac{\partial Q_{ij}}{\partial v}M_{k \ell} + Q_{ij}\frac{\partial M_{k\ell}}{\partial v} = 0
	\]
	Define $\frac{\partial F}{\partial v}$ denote the matrices with entries $\frac{\partial F}{\partial v}$. Define $\frac{\partial Q}{\partial v},\frac{\partial M}{\partial v}$ analogously. Then
	\[
	\frac{\partial F(v_0)}{\partial v} = \tfrac{\partial Q(v_0)}{\partial v} M(v_0) + Q(v_0) \frac{\partial M(v_0)}{\partial v} = 0 
	\]
	Solving the equation, $\frac{\partial M(v_0)}{\partial v} = -  Q(v_0)^{-1} \tfrac{\partial Q(v_0)}{\partial v} M(v_0)$ and substituting the definition of $M$,
	\[
	\tfrac{\partial}{\partial v} Q^{-1}(v_0) = - Q^{-1}(v_0) \tfrac{\partial Q(v_0)}{\partial v} Q^{-1}(v_0).
	\]
\end{proof}

\begin{proof}[Proof of Lemma \ref{lem:uniformboundsderivatives} (\nameref{lem:uniformboundsderivatives})]
	The first task is to express the derivatives of $\mathcal{R}(\cdot)$ and $s(\cdot)$ in terms of $X_{ig}$ and the weighting matrix $\mathbf{Q}_{xx}(\cdot)$. It will be convenient to work with the vectorized version of the weighting matrix, which I denote by $\mathbf{q}(Z_{ig},v_2,\boldsymbol{\theta})$. Similarly, define $\mathbf{x}_{ig} = vec(X_{ig}X_{ig}')$. In matrix form the criterion can be expressed as
	\[
	\mathcal{R}(Z_{ig},v_2,\boldsymbol{\theta}) = (\mathbf{x}-\mathbf{q})'(\mathbf{x}-\mathbf{q})
	\]
	Since the score function is defined as the jacobian of $\mathcal{R}$, then $s(Z_{ig},v_2,\boldsymbol{\theta}) = -2(\mathbf{x}-\mathbf{q})'\tfrac{\partial \mathbf{q}}{\partial \boldsymbol{\theta}'}$. We can compute the following derivatives by applying the chain rule. Let $(\mathbf{x}_k,\mathbf{q}_k)$ denote the $k^{th}$ rows of $(\mathbf{x},\mathbf{q})$, respectively. Then
	\begin{align*}
	\tfrac{\partial}{\partial v_2}\mathcal{R}(Z_{ig},v_2,\boldsymbol{\theta})  &=  -2(\mathbf{x}-\mathbf{q})'\tfrac{\partial \mathbf{q}}{\partial v_2} \\
	\tfrac{\partial}{\partial \boldsymbol{\theta}}s(Z_{ig},v_2,\boldsymbol{\theta}) &= 2 \tfrac{\partial \mathbf{q}}{\partial \boldsymbol{\theta}}\tfrac{\partial \mathbf{q}}{\partial \boldsymbol{\theta}'} - 2 \sum_{k} (\mathbf{x}_k-\mathbf{q}_k) \tfrac{\partial^2 \mathbf{q}_k}{\partial \boldsymbol{\theta}\partial \boldsymbol{\theta}'} \\
	\tfrac{\partial}{\partial v_2}s(Z_{ig},v_2,\boldsymbol{\theta})
	&= 2 \times \tfrac{\partial \mathbf{q}'}{\partial v_2}\tfrac{\partial \mathbf{q}}{\partial \boldsymbol{\theta}'} - 2 \sum_{k} (\mathbf{x}_k-\mathbf{q}_k) \tfrac{\partial^2 \mathbf{q}_k}{\partial v_2 \partial \boldsymbol{\theta}'} \\
	\tfrac{\partial}{\partial^2 A\partial \boldsymbol{\theta}}s(Z_{ig},v_2,\boldsymbol{\theta}) &= 2 \times \left( \tfrac{\partial^2 \mathbf{q}'}{\partial v_2^2}\tfrac{\partial \mathbf{q}}{\partial \boldsymbol{\theta}'} + \tfrac{\partial \mathbf{q}'}{\partial v_2}\tfrac{\partial^2 \mathbf{q}}{\partial v_2 \partial \boldsymbol{\theta}'}\right)
	- 2 \sum_{k} \left[ -\tfrac{\partial \mathbf{q}_k}{\partial v_2}\tfrac{\partial^2 \mathbf{q}_k}{\partial v_2 \partial \boldsymbol{\theta}'} +  (\mathbf{x}_k-\mathbf{q}_k) \tfrac{\partial^2 \mathbf{q}_k}{\partial v_2^2 \partial \boldsymbol{\theta}'} \right] 
	\end{align*}
	Let $\mathbf{Q}_{xx}^{\partial}(Z_{ig},v_2,\boldsymbol{\theta})$ denote the Sobolev norm, as defined in \eqref{eq:sobolevenorm_Qxx}, which is a bound on the derivatives of order $\{0,1,2,3\}$. Similarly, let $\Vert \mathbf{x} \Vert $ denote the Euclidean norm of $\mathbf{x}$. It is useful to use the fact that $\sum_{k} \Vert \mathbf{x}_k \Vert \le \kappa \Vert \mathbf{x} \Vert $, for some universal constant $\kappa$ that only depends on the dimension. We denote this inequality as $\sum_{k} \Vert \mathbf{x}_k \Vert \lesssim \Vert \mathbf{x} \Vert$.
	
	By the triangle inequality,
	\begin{align}
	\begin{split}
	\left\Vert \tfrac{\partial}{\partial v_2}\mathcal{R}(Z_{ig},v_2,\boldsymbol{\theta}) \right\Vert
	&\le  2 (\Vert \mathbf{x}_{ig} \Vert + \Vert \mathbf{q} \Vert)\left\Vert \tfrac{\partial \mathbf{q}}{\partial v_2} \right\Vert \le  2 \Vert \mathbf{x}_{ig} \Vert \  \mathbf{Q}_{xx}^{\partial}(Z_{ig},v_2,\boldsymbol{\theta}) + 2 \mathbf{Q}_{xx}^{\partial}(Z_{ig},v_2,\boldsymbol{\theta})^2 \\
	\left\Vert \tfrac{\partial}{\partial \boldsymbol{\theta}}s(Z_{ig},v_2,\boldsymbol{\theta})\right\Vert 
	&\le 2 \left\Vert \tfrac{\partial \mathbf{q}}{\partial \boldsymbol{\theta}}\right\Vert \left\Vert \tfrac{\partial \mathbf{q}}{\partial \boldsymbol{\theta}'}\right\Vert + 2 \sum_{k} (\Vert \mathbf{x}_k\Vert-\Vert \mathbf{q}_k \Vert) \left\Vert \tfrac{\partial^2 \mathbf{q}_k}{\partial \boldsymbol{\theta}\partial \boldsymbol{\theta}'} \right\Vert \\
	&\lesssim 4 \mathbf{Q}_{xx}^{\partial}(Z_{ig},v_2,\boldsymbol{\theta})^2 + 2 \Vert \mathbf{x}_{ig} \Vert \mathbf{Q}_{xx}^{\partial}(Z_{ig},v_2,\boldsymbol{\theta}) \\
	\left\Vert \tfrac{\partial}{\partial v_2}s(Z_{ig},v_2,\boldsymbol{\theta}) \right\Vert 
	&\le 2 \left\Vert \tfrac{\partial \mathbf{q}}{\partial v_2}\right\Vert\left\Vert \tfrac{\partial \mathbf{q}}{\partial \boldsymbol{\theta}'}\right\Vert + 2 \sum_{k} (\Vert \mathbf{x}_k \Vert-\Vert \mathbf{q}_k \Vert ) \left\Vert \tfrac{\partial^2 \mathbf{q}_k}{\partial v_2 \partial \boldsymbol{\theta}'} \right\Vert \\
	& \lesssim 4 \mathbf{Q}_{xx}^{\partial}(Z_{ig},v_2,\boldsymbol{\theta})^2 + 2 \Vert \mathbf{x}_{ig} \Vert \mathbf{Q}_{xx}^{\partial}(Z_{ig},v_2,\boldsymbol{\theta})
	\end{split}
	\label{eq:bounds_derivatives1}
	\end{align}
	At each step we bound the derivatives by the Sobolev-norm and Euclidean norms, respectively. By using a similar procedure we can show that
	\begin{equation}
	\left\Vert \tfrac{\partial}{\partial^2 A\partial \boldsymbol{\theta}}s(Z_{ig},v_2,\boldsymbol{\theta}) \right\Vert \lesssim 8 \mathbf{Q}_{xx}^{\partial}(V_{1ig}^0,v_2,\boldsymbol{\theta})^2 + 2 \Vert \mathbf{x}_{ig} \Vert \mathbf{Q}_{xx}^{\partial}(V_{1ig}^0,v_2,\boldsymbol{\theta})
	\label{eq:bounds_derivatives2}
	\end{equation}
	Our next task is to derive a uniform bounds for the expectations of the derivatives. All of the derivatives in \eqref{eq:bounds_derivatives1} and \eqref{eq:bounds_derivatives2} are bounded uniformly by combinations of $\Vert \mathbf{x}_{ig} \Vert$ and $\mathbf{Q}_{xx}^{\partial}(\cdot)$. By assumption $\mathbb{E}[\sup_{\theta \in \Theta} \sup_{v_2 \in \mathcal{V}_2} \ (Q^{\partial}_{xx}(V_{1ig}^0,v_2,\theta))^2] < \infty$, which allows us to bound some of the terms directly. To bound the rest of the terms we use the Cauchy-Schwartz inequality,
	\[
	\mathbb{E}\left[\sup_{\theta \in \Theta} \sup_{v_2 \in \mathcal{V}_2} \mathbf{Q}_{xx}^{\partial}(V_{1ig}^0,v_2,\boldsymbol{\theta})\right] \le \sqrt{\mathbb{E}\left[\sup_{\theta \in \Theta} \sup_{v_2 \in \mathcal{V}_2} \ (Q^{\partial}_{xx}(V_{1ig}^0,v_2,\theta))^2\right]} < \infty.
	\]
	\[
	\mathbb{E}\left[\sup_{\theta \in \Theta} \sup_{v_2 \in \mathcal{V}_2}\Vert \mathbf{x}_{ig} \Vert \mathbf{Q}_{xx}^{\partial}(V_{1ig}^0,v_2,\boldsymbol{\theta})\right] \le \sqrt{\mathbb{E}\left[ \Vert \mathbf{x}_{ig} \Vert^2 \right] \times \mathbb{E}\left[\sup_{\theta \in \Theta} \sup_{v_2 \in \mathcal{V}_2} \ (Q^{\partial}_{xx}(V_{1ig}^0,v_2,\theta))^2\right]} < \infty.
	\]
	Recall that $\mathbf{x}_{ig} = vec(X_{ig}X_{ig}')$ (the product of $X_{ig}$) which means that $\mathbb{E}[\Vert \mathbf{x}_{ig} \Vert^2] \lesssim \mathbb{E}[\Vert X_{ig} \Vert^4]$, which is finite by assumption.
	
	Define $(\mathcal{R}_{ig}^V,S_{ig},S_{ig}^V,S_{ig}^{V\theta})$ as in the statement of the Lemma. By \eqref{eq:bounds_derivatives1} and \eqref{eq:bounds_derivatives2}
	\[
	\mathcal{R}_{ig}^V,S_{ig},S_{ig}^V,S_{ig}^{V\theta}
	\lesssim 8 \sup_{\theta \in \Theta} \sup_{v_2 \in \mathcal{V}_2} \mathbf{Q}_{xx}^{\partial}(V_{1ig}^0,v_2,\boldsymbol{\theta})^2 + 2 \  \sup_{\theta \in \Theta} \sup_{v_2 \in \mathcal{V}_2} (1+\Vert \mathbf{x}_{ig} \Vert) \mathbf{Q}_{xx}^{\partial}(V_{1ig}^0,v_2,\boldsymbol{\theta}).
	\]
	The expectations of the right-hand side is bounded and therefore $\mathbb{E}[\mathcal{R}_{ig}^V], \mathbb{E}[S_{ig}],\mathbb{E}[S_{ig}^V],\mathbb{E}[S_{ig}^{V\theta}]$ are finite.
	
	Now we turn our attention to the derivatives of the influence function
	\[
	\psi_{IW}(Z_{igt},v_2,\boldsymbol{\beta},\boldsymbol{\theta}) = \mathbf{Q}_{xx}(Z_{igt},v_2,\boldsymbol{\theta})^{-1}X_{ig}Y_{ig}
	\]
	By Lemma \ref{lem:derivativesinversematrix} we can compute the derivatives of the inverse.
	\begin{align*}
		\tfrac{\partial \psi}{\partial v_2} &= -\mathbf{Q}_{xx}^{-1}\tfrac{\partial \mathbf{Q}_{xx}}{\partial v_2}\mathbf{Q}_{xx}^{-1} X_{ig}Y_{ig} \qquad	\tfrac{\partial \psi}{\partial \boldsymbol{\theta}_m} = -\mathbf{Q}_{xx}^{-1}\tfrac{\partial \mathbf{Q}_{xx}}{\partial v_2}\mathbf{Q}_{xx}^{-1} X_{ig}Y_{ig}
	\end{align*}
	Similarly, by applying the product rule and grouping terms
	\begin{align*}
		\tfrac{\partial^2 \psi}{\partial v_2 \partial \boldsymbol{\theta}} &= [ 2 \ \mathbf{Q}_{xx}^{-1}\tfrac{\partial \mathbf{Q}_{xx}}{\partial v_2}\mathbf{Q}_{xx}^{-1}\tfrac{\partial \mathbf{Q}_{xx}}{\partial \boldsymbol{\theta}_j}\mathbf{Q}_{xx}^{-1} + \mathbf{Q}_{xx}^{-1}\tfrac{\partial^2 \mathbf{Q}_{xx}}{\partial v_2 \partial \boldsymbol{\theta}_j}\mathbf{Q}_{xx}^{-1} ]X_{ig}Y_{ig} \\
		\tfrac{\partial^2 \psi}{\partial \boldsymbol{\theta}_m \partial \boldsymbol{\theta}_j} &= [ 2 \ \mathbf{Q}_{xx}^{-1}\tfrac{\partial \mathbf{Q}_{xx}}{\partial \boldsymbol{\theta}_m}\mathbf{Q}_{xx}^{-1}\tfrac{\partial \mathbf{Q}_{xx}}{\partial \boldsymbol{\theta}_j}\mathbf{Q}_{xx}^{-1} + \mathbf{Q}_{xx}^{-1}\tfrac{\partial^2 \mathbf{Q}_{xx}}{\partial \boldsymbol{\theta}_m \partial \boldsymbol{\theta}_j}\mathbf{Q}_{xx}^{-1} ]X_{ig}Y_{ig}
	\end{align*}
	By assumption, the smallest eigenvalue of $\mathbf{Q}_{xx}$ is bounded below by $\underline{\lambda} >\theta$ for $\boldsymbol{\theta} \in B(\boldsymbol{\theta}_0,\delta)$. For parameter values in this set, $\Vert \mathbf{Q}_{xx}^{-1} \Vert \le \underline{\lambda}^{-1}$ by Lemma X and 
	\begin{align*}
		\left\Vert \tfrac{\partial \psi}{\partial v_2} \right\Vert &\le \left\Vert \mathbf{Q}_{xx}^{-1}\right\Vert \ \left\Vert \tfrac{\partial \mathbf{Q}_{xx}}{\partial v_2} \right\Vert \ \left\Vert \mathbf{Q}_{xx}^{-1} \right\Vert  \ \Vert X \Vert \ \Vert Y \Vert \le \underline{\lambda}^{-2} \mathbf{Q}_{xx}^{\partial} \Vert X_{ig}Y_{ig} \Vert \\
		\left\Vert \tfrac{\partial \psi}{\partial \theta_m} \right\Vert &\le \left\Vert \mathbf{Q}_{xx}^{-1}\right\Vert \ \left\Vert \tfrac{\partial \mathbf{Q}_{xx}}{\partial \theta_m} \right\Vert \ \left\Vert \mathbf{Q}_{xx}^{-1} \right\Vert  \ \Vert X \Vert \ \Vert Y \Vert \le \underline{\lambda}^{-2} \mathbf{Q}_{xx}^{\partial} \Vert X_{ig}Y_{ig} \Vert
	\end{align*}
	By bounding the respective terms, we can also show that
	\begin{align*}
		\left\Vert \tfrac{\partial^2 \psi}{\partial v_2 \partial \theta_m} \right\Vert &\le (2\underline{\lambda}^{-3} (\mathbf{Q}_{xx}^{\partial})^2 + \underline{\lambda}^{-2}\mathbf{Q}_{xx}^{\partial}) \Vert X_{ig}Y_{ig} \Vert \\
		\left\Vert \tfrac{\partial^2 \psi}{\partial \theta_m \partial \theta_j}\right\Vert &= (2\underline{\lambda}^{-3} (\mathbf{Q}_{xx}^{\partial})^2 + \underline{\lambda}^{-2}\mathbf{Q}_{xx}^{\partial}) \Vert X_{ig}Y_{ig} \Vert
	\end{align*}
By the Cauchy Schwartz inequality, $\mathbb{E}[\Vert X_{ig}Y_{ig}\Vert] < \sqrt{\mathbb{E}[\Vert X_{ig}\Vert^2]\mathbb{E}[\Vert Y_{ig} \Vert^2]}$, which is bounded by assumption.

	By applying the Cauchy-Schwartz inequality a second time,
	\[
\mathbb{E}\left[\sup_{\theta \in \Theta} \sup_{v_2 \in \mathcal{V}_2} \mathbf{Q}_{xx}^{\partial}(V_{1ig}^0,v_2,\boldsymbol{\theta}) \Vert X_{ig}Y_{ig} \Vert \right] \le \sqrt{\mathbb{E}\left[\sup_{\theta \in \Theta} \sup_{v_2 \in \mathcal{V}_2} \ (Q^{\partial}_{xx}(V_{1ig}^0,v_2,\theta))^2\right] \mathbb{E}\Vert X_{ig}Y_{ig} \Vert} < \infty
\]
\[
\mathbb{E}\left[\sup_{\theta \in \Theta} \sup_{v_2 \in \mathcal{V}_2} \mathbf{Q}_{xx}^{\partial}(V_{1ig}^0,v_2,\boldsymbol{\theta})^2 \Vert X_{ig}Y_{ig} \Vert \right] \le \sqrt{\mathbb{E}\left[\sup_{\theta \in \Theta} \sup_{v_2 \in \mathcal{V}_2} \ (Q^{\partial}_{xx}(V_{1ig}^0,v_2,\theta))^4 \right] \mathbb{E}\Vert X_{ig}Y_{ig} \Vert} < \infty
\]
The fourth moment of the Sobolev norm is bounded by assumption. Consequently, $\mathbb{E}[\psi_{IW,ig}^{\partial}]$ as defined in \eqref{eq:envelope_psitheta} is bounded.
	
\end{proof}

\begin{proof}[Proof of Lemma \ref{lem:Op_bounds_averages} (\nameref{lem:Op_bounds_averages})]
	We start by writing $\bar{X}_t$ in terms of within-group averages $\bar{X}_g$.
	\[
	\bar{X}_t = \frac{1}{G}\sum_{g=1}^G \frac{N_{gt}}{\bar{n}_t} \left(\frac{1}{N_{gt}}\sum_{i=1}^{N_{gt}} X_{igt} \right) \equiv \frac{1}{G}\sum_{g=1}^{G} \frac{N_{gt}}{\bar{n}_t} (\bar{X}_{gt})
	\]
	Determining the properties of the average is slightly complicated by the fact that $N_{gt}$ is a random variable, which means that $\bar{X}_{gt}$ is an average with a random number of terms.
	
	Let $\mathbb{E}_*$ be a measure where groups are given equal weight regardless of their size, which satisfies two properties: (i) $\mathbb{E}_*[\rho_{gt} X_{igt}] = \mathbb{E}[X_{igt}]$ and (ii) $\mathbb{E}_*[X_{igt} \mid N_{gt} = n] = \mathbb{E}[X_{igt} \mid N_{gt} = n]$, where $\rho_{gt} \equiv N_{gt}/N_t$ and $N_t = \mathbb{E}_*[N_{gt}]$. Property (i) links the equal-weighted measure to the population measure by including importance weights, whereas property (ii) states the two measures are identical after conditioning on group size.
	
	Our first task is to show that $\tfrac{N}{N_t}\bar{X}$ is unbiased. By substituting the definition of $\rho_{gt}$,
	\begin{equation}
	\frac{N}{N_t}\bar{X}_t = \frac{1}{G}\sum_{g=1}^G \rho_{gt} \bar{X}_{gt}
	\label{eq:barX_groupdecomp}
	\end{equation}
	Conditional on group size, $\bar{X}_{gt}$ is an average of a fixed number of terms and hence $\mathbf{E}[\bar{X}_{gt} \mid N_{gt} = n] = \mathbb{E}[X_{igt} \mid N_{gt} = n]$. Therefore by the law of iterated expectations and distributing the expectation over each group
	\[
	\mathbb{E}_*\left[\frac{N}{N_t}\bar{X}_t\right] = \mathbb{E}_*[\rho_{gt} \bar{X}_{gt}] =   \mathbb{E}_*[\rho_{gt} \mathbb{E}_*[X_{igt} \mid N_{gt}]] = \mathbb{E}_*[\rho_{gt} X_{igt}] = \mathbb{E}[X_{igt}]
	\]
	Our next task is to show that $\bar{X}$ is bounded in probability. 
	By the triangle inequality and the law of iterated expectations.
	\[
	\mathbb{E}_*[\Vert \bar{X} \Vert  ] \le \frac{1}{G}\sum_{g=1}^G   \mathbb{E}_*\left[ \tfrac{N_{gt}}{N} \mathbb{E}_*\left[ \Vert \bar{X}_{gt} \Vert \mid \mathbf{N}\right]\right] \le \frac{1}{G}\sum_{g=1}^G   \mathbb{E}_*\left[ \tfrac{N_{gt}}{N} \mathbb{E}_*\left[ \Vert X_{igt} \Vert \mid \mathbf{N}\right]\right] = \mathbb{E}\left[ \tfrac{N_{gt}}{N} \Vert X_{igt}\Vert \right] 
	\]
	Assumption \nameref{assumptxt:boundedrelative_n} states that the $N_{gt} \in [\underline{\rho},\overline{\rho}] \subset (0,1)$ which means that the sample size average $\bar{n}_t \in [\underline{\rho},\overline{\rho}]$. Consequently,
	\[
	\frac{N_{gt}}{\bar{n}_t} = \frac{N_{gt}}{N_t} \times \frac{N_t}{\bar{n}} = \rho_{gt} \times \frac{N_t}{\bar{n}} \le \rho_{gt} \times (1/\underline{\rho})
	\]
	Hence $\mathbb{E}_*[\Vert \bar{X}_t \Vert ] \le (1/\underline{\rho})\mathbb{E}_*[\rho_{gt} \Vert X_{igt} \Vert] = 1/\underline{\rho}\mathbb{E}[\Vert X_{igt} \Vert]$, which is bounded. Then by Markov's inequality, for fixed $\delta > 0$,
	\[
	\mathbb{P}(\Vert \bar{X}_t \Vert > \delta ) \le \frac{\mathbb{E}[\Vert \bar{X}_t\Vert]}{\delta}
	\]
	Therefore $\bar{X}_t = O_p(1)$.
	
	Finally, under \nameref{assumptxt:randomsampling} the observations in each group are independent. Since the \eqref{eq:barX_groupdecomp} is an average of i.i.d variables with finite moments, then we can apply the strong law of large numbers in \cite[p.282]{ billingsley1995probability}, to show that $\tfrac{N}{N_t}\bar{X}_t \to^{p} \mathbb{E}[X_{igt}]$. As a special case, $\tfrac{N}{N_t} \to^{p} 1$. By combining the two results, we find that $\bar{X}_t \to^p \mathbb{E}[X_{igt}]$.	
	
\end{proof}

\begin{proof}[Proof of Lemma \ref{lem:uniformconvergence_sampleaverages} (\nameref{lem:uniformconvergence_sampleaverages})]
	I start by proving point-wise convergence of the criterion function. For simplcity define $\Delta_{igt} \equiv \Vert V_{2igt} - V_{2igt}^0\Vert $. By a first-order Taylor expansion
	\[
	\widehat{f}(\boldsymbol{\tau},\boldsymbol{\theta}) \equiv \frac{1}{G_t\bar{n}_t}\sum_{ig}f(Z_{igt},V_{2igt},\theta) =  \frac{1}{G_t\bar{n}_t}\sum_{ig} \left[ f(Z_{igt},V_{2igt}^0,\theta) +   \tfrac{\partial}{\partial v_2} f(Z_{igt}\tilde{V}_{2igt},\theta)\Delta_{igt} \right]
	\]
	I apply the triangle inequality to bound the second term. By the triangle inequality
	\begin{align}
		\begin{split}
			& \left\Vert \frac{1}{G_t\bar{n}_t}\sum_{ig}\tfrac{\partial}{\partial v_2} f(Z_{igt}\tilde{V}_{2igt},\boldsymbol{\tau},\boldsymbol{\theta}) \Delta_{igt} \right\Vert \\
			& \le \left( \frac{1}{G_t\bar{n}_t}\sum_{ig} \left\Vert \tfrac{\partial}{\partial v_2} f(Z_{igt}\tilde{V}_{2igt},\theta) \right\Vert  \right) \max_{ig} \Delta_{igt} \\
			&\le \left(\frac{1}{G_t\bar{n}_t}\sum_{ig} \sup_{(\boldsymbol{\tau},\boldsymbol{\theta},v_2)\in \mathcal{T} \times \Theta \times \mathcal{V}_2} \left\Vert \tfrac{\partial}{\partial v_2}  f(Z_{igt},v_2,\boldsymbol{\tau},\boldsymbol{\theta})\right\Vert\right) \ \max_{ig}\Delta_{igt} \\
			&\equiv \left(\frac{1}{G_t\bar{n}_t}\sum_{ig} f_{igt}^V \right) \max_{ig}\Delta_{igt}
		\end{split}
		\label{eq:bound_criterion_derivA}
	\end{align}
	The discrepancy $\max_{ig} \Vert \Delta_{igt} \Vert$ is $o_p(1)$ by assumption (i). Conversely, by assumption (iii) $\mathbb{E}[f_{igt}^V]<\infty$ and Lemma \ref{lem:Op_bounds_averages} imply that $\frac{1}{G_t\bar{n}_t}\sum_{ig} f_{igt}^V = O_p(1)$. Finally, by combining the two finding we conclude that the right-hand side of \eqref{eq:bound_criterion_derivA} is $o_p(1)$.
	
	Assumptions (i) implies that $f(Z_{igt},V_{2igt}^0,\theta)$ has bounded moments. Similarly, \nameref{assumptxt:randomsampling} implies that groups are independent. Therefore, we can apply a group level law of large numbers to show that as $(G,N) \to \infty$, 
	\[
	\widehat{f}(\boldsymbol{\tau},\boldsymbol{\theta}) =  \frac{1}{G_t\bar{n}_t}\sum_{ig} f(Z_{igt},V_{2igt}^0,\boldsymbol{\tau},\boldsymbol{\theta}) + o_p(1) = \mathbb{E}[f(Z_{igt},V_{2igt}^0,\boldsymbol{\tau},\boldsymbol{\theta})] + o_p(1)
	\]
	Our next task is to show that the criterion function is stochastically equicontinuous, in the sense defined by \cite{newey1991uniform}. Let $(\theta,\theta^*)$ be two distinct parameter values and define a uniform bound on the derivative $S_{igt}$ as in Lemma \ref{lem:uniformboundsderivatives}. Then 
	\begin{align}
		\begin{split}
			&\Vert \widehat{f}(\boldsymbol{\tau},\boldsymbol{\theta}) - \widehat{f}(\boldsymbol{\tau}^*,\boldsymbol{\theta}^*) \Vert \\
			&= \left\Vert\frac{1}{G_t\bar{n}_t}\sum_{ig} \tfrac{\partial}{\partial 
				(\boldsymbol{\tau},\boldsymbol{\theta})} f(Z_{igt},V_{2igt}^0,\widetilde{\boldsymbol{\tau}},\widetilde{\boldsymbol{\theta}}) [(\boldsymbol{\tau},\boldsymbol{\theta})' - (\boldsymbol{\tau}^*,\boldsymbol{\theta}^*)'] \right\Vert \\
			&\le \left(\frac{1}{G_t\bar{n}_t}\sum_{ig} \sup_{(\boldsymbol{\tau},\boldsymbol{\theta},v_2)\in \mathcal{T} \times \Theta \times \mathcal{V}_2} \left\Vert \tfrac{\partial}{\partial 
				(\boldsymbol{\tau},\boldsymbol{\theta})}  f(Z_{igt},v_2,\widetilde{\boldsymbol{\tau}},\widetilde{\boldsymbol{\theta}}) \right\Vert \right)\Vert (\boldsymbol{\tau},\boldsymbol{\theta})' - (\boldsymbol{\tau}^*,\boldsymbol{\theta}^*)' \Vert \\
			&\le \left(\frac{1}{G_t\bar{n}_t}\sum_{ig} f_{igt}^{(\boldsymbol{\tau},\boldsymbol{\theta})} \right)\Vert (\boldsymbol{\tau},\boldsymbol{\theta})' - (\boldsymbol{\tau}^*,\boldsymbol{\theta}^*)' \Vert
		\end{split}
	\end{align}
	By assumption (iv) $\mathbb{E}[f_{igt}^{(\boldsymbol{\tau},\boldsymbol{\theta})}] < \infty$ and by Lemma \ref{lem:Op_bounds_averages}, $\frac{1}{G_t\bar{n}_t}\sum_{ig} f_{igt}^{(\boldsymbol{\tau},\boldsymbol{\theta})} = O_p(1)$. This exactly fits the definition of stochastic equicontinuity. Since the parameter space is compact, the function converges point-wise and the sample-criterion is stochastically equicontinuous, then by Theorem 2.1 in \cite{newey1991uniform},
	\begin{equation}
	\sup_{\boldsymbol{\tau} \in \mathcal{T}} \sup_{\boldsymbol{\theta} \in \Theta} \Vert \widehat{f}(\boldsymbol{\tau},\boldsymbol{\theta}) - \mathbb{E}[f(Z_{igt},V_{2igt}^0,\boldsymbol{\tau},\boldsymbol{\theta})] \Vert \to^{p} 0
	\label{eq:uniformconsistency_criterion}
	\end{equation}
	
\end{proof}

\newpage

\newpage

\end{document}